\numberwithin{equation}{section}
\newtheorem{theorem}{Theorem}[section]
\newtheorem{lemma}[theorem]{Lemma}
\newtheorem{prop}[theorem] {Proposition}
\newtheorem{cor}[theorem]  {Corollary}
\newtheorem{definition}[theorem] {Definition}
\theoremstyle{definition}
\theoremstyle{remark}
\newtheorem{remark}[theorem]{Remark} 
\newtheorem{example}[theorem]{Example}
\newcommand{\e}{\mathrm{e}}
\newcommand{\N}{\mathbb{N}}
\newcommand{\R}{\mathbb{R}}
\newcommand{\Z}{\mathbb{Z}}
\newcommand{\C}{\mathbb{C}}
\renewcommand{\P}{\mathbb{P}}
\newcommand{\E}{\mathbb{E}}
\newcommand{\dd}{\mathrm{d}} 
\newcommand{\eps}{\varepsilon}
\newcommand{\vect}[1]{\boldsymbol{#1}}
\newcommand{\be}{\begin{equation}}
\newcommand{\ee}{\end{equation}}
\newcommand{\ba}{\begin{equation} \begin{aligned}}
\newcommand{\ea}{\end{aligned}\end{equation}}
\newcommand{\bes}{\begin{equation*}}
\newcommand{\ees}{\end{equation*}}
\def\1{{\mathchoice {1\mskip-4mu\mathrm l}      
{1\mskip-4mu\mathrm l}
{1\mskip-4.5mu\mathrm l} {1\mskip-5mu\mathrm l}}}
\begin{document}


\title{Thermodynamics of a hierarchical mixture of cubes}
\author{Sabine Jansen}

\address{Mathematisches Institut, Ludwig-Maximilians-Universit{\"a}t, 80333 M{\"u}nchen; Munich Center for Quantum Science and Technology (MCQST), Schellingstr. 4, 80799 M{\"u}nchen, Germany.}
\email{jansen@math.lmu.de} 
\date{20 September 2019}
\begin{abstract}
We investigate a toy model for phase transitions in mixtures of incompressible droplets. The model consists of non-overlapping hypercubes in $\mathbb Z^d$ of sidelengths $2^j$, $j\in \mathbb N_0$. Cubes belong to an admissible set $\mathbb B$ such that if two cubes overlap, then one is contained in the other. Cubes of sidelength $2^j$ have activity $z_j$ and  density $\rho_j$. We prove explicit formulas for the pressure and entropy, prove a van-der-Waals type equation of state, and invert the density-activity relations. In addition we explore phase transitions for parameter-dependent activities $z_j(\mu) = \exp( 2^{dj} \mu - E_j)$. We prove a sufficient criterion for absence of phase transition, show that constant energies $E_j\equiv\lambda$ lead to a continuous phase transition, and prove a necessary and sufficient condition for the existence of a first-order phase transition. \\

\noindent\emph{Keywords}:  incompressible droplets; condensation; excluded volume; polymer partition function; hierarchical model.\\

\noindent \emph{Mathematics Subject Classification}: 82B20; 82B26.
\end{abstract}

\maketitle


\section{Introduction} 

Droplet models offer helpful guidance  for understanding  nucleation  and condensation phenomena in classical statistical physics. They are known under the header of \emph{Fisher droplet models} or \emph{Frenkel-Band theory of association equilibrium}, see~\cite{fisher1967droplet, stillinger1963frenkel-band,sator2003} and the references therein. They treat a gas of molecules as an ideal mixture of droplets of different sizes,  coming each with a partition function over internal degrees of freedom, or some approximate formula for such internal partition functions. Condensation is understood as the formation of a large droplet of macroscopic size, and explicit computations are possible under the simplifying assumption that the mixture is ideal. 

Rigorous results for droplet models that take into account excluded volume effects are sparse. Fisher proved that the phase transition for ideal droplet models subsists for a class of one-dimensional models~\cite{fisher1967droplet,fisher-felderhof1970}; the one-dimensional model serves as a counter-example to the strict convexity of the pressure as a function of interaction potentials when the class of potentials is chosen too large~\cite{fisher1972discontinuity}, compare~\cite[Chapter V.2]{israel1979book}. For particles in $\R^d$ with attractive interactions, errors in the ideal mixture approximation are bounded in~\cite{jansen-koenig2012idealmixture, jansen-koenig-metzger2015}, however the bounds do not allow for a proof of phase transitions. 

The present article proposes a toy model for which exluded volume effects and phase transitions can be understood rigorously, and that might pave the way for an application of  renormalization techniques. To motivate the model it is helpful to describe first another model that we are not yet able to treat and that connects to a joint program started in~\cite{jttu2014} and pursued in~\cite{jansen-tsagkaro2019colloids, jansen-kuna-tsagkaro2019}. Consider a mixture of hard spheres in $\R^3$. Spheres are assumed to have integer volume $k\in \N_0$ and are thought of as droplets made up of $k$ particles. 
Distinct spheres cannot overlap, and a sphere of volume $k$ comes with an energy $E_k$ that satisfies $E_k = k e_\infty +o(k)$ as $k\to \infty$ with finite  bulk energy $e_\infty$.  In order to control the distribution of sphere types it is natural to work in a multi-canonical ensemble, fixing the number $N_k$ of $k$-spheres as well as the total area $\sum_k k N_k$ covered by spheres (a substitute for the total number of particles). In the thermodynamic limit $N_k/V\to \rho_k$, $\sum_k k N_k /V \to \rho$, this results in an associated Helmholtz free energy per unit volume, which at low density should be of the form
$$
	f\bigl(\beta,(\rho_j)_{j\in \N}, \rho\bigr) = \sum_{j=1}^\infty \rho_j E_j + \rho_\infty e_\infty +  \beta^{-1}\sum_{j=1}^\infty \rho_j (\log \rho_j -1) + \text{correction terms}
$$
where $\rho_\infty: = \rho- \sum_{k=0}^\infty k \rho_k$ accounts for the possible loss of mass to very large spheres. The correction terms should capture excluded volume effects and one might hope for a convergent power series expansion in the variables $\rho_j$ and $\rho_\infty$.
The question  arises if the free energy of a given packing fraction, defined by minimizing over all compatible distributions on sphere sizes
$$
	f(\beta,\rho):= \min \Bigl\{ f\bigl(\beta,(\rho_j)_{j\in \N}, \rho\bigr)\, \Big|\, \sum_{j=1}^\infty j \rho_j \leq \rho \Bigr\},
$$
is strictly convex or has affine pieces.  For the ideal mixture the question is easily answered: If
$$
	p_c^\mathrm{ideal}(\beta):=\sum_{j=1}^\infty \exp( - \beta [E_j - j e_\infty]),\quad \rho_c^\mathrm{ideal}(\beta):= \sum_{j=1}^\infty j \exp( - \beta [E_j - j e_\infty])
$$
are both finite, then the free energy is strictly convex in $\rho < \rho_c^\mathrm{ideal}(\beta)$ and affine with slope $e_\infty$ in $\rho>\rho_\mathrm{sat}$, moreover in the latter domain the unique minimizer in the variational formula is $\rho_j = \exp( - \beta[E_j - j e_\infty])=:\rho_{j}^\mathrm{ideal}(\beta)$ and it satisfies $\rho_\infty = \rho- \sum_{j=1}^\infty j \rho_j>0$.  
At  low temperature, because of $\rho_c^\mathrm{ideal}(\beta)\to 0$ as $\beta\to \infty$, one may hope that the excluded volume effects do not destroy the existence of a first-order phase transition and that correction terms might be expressed in terms of convergent power series in the sphere size distributions $\rho_{j}^\mathrm{ideal}(\beta)$, compare Section~\ref{sec:discussion}. 

Unfortunately, currently available convergence criteria for multi-species  virial expansions~\cite{jttu2014,jansen-kuna-tsagkaro2019} impose exponential decay $\rho_j \leq  \exp( - \mathrm{const}\, j)$, which excludes the ideal equilibrium densities $\exp( - \beta [E_j - je_\infty])$. Therefore the naive argument sketched above stays somewhat speculative. The purpose of the present article is to provide an example where the argument nonetheless does work. The price we pay is a drastic simplification of the mixture of hard spheres. 
It is our impression, however, that the model is a valuable addition to rigorous results in dimension one~\cite{fisher1967droplet,jansen2015tonks}, moreover the simplification is a very natural starting point in the context of renormalization group theory~\cite{dyson1969hierarchical, brydges2007RGlectures}. 

In fact the present work was motivated by the study of a two-scale mixture of hard spheres in $\R^d$~\cite{jansen-tsagkaro2019colloids}. Integrating out the small spheres gives rise to an effective model for large spheres with new effective multi-body interactions and an effective activity, which leads to improved domains of convergence in Mayer expansions. The results from~\cite{jansen-tsagkaro2019colloids} leave open whether similar improvements can be reached in multi-scale systems, integrating out objects one by one. The present article should serve as a useful companion when trying to implement such a program. \\

\medskip \noindent
Our model consists of non-overlapping hypercubes in $\Z^d$ belonging to some admissible set $\mathbb B$. The model is a special case of a polymer system~\cite{gruber-kunz1971}.  The set $\mathbb B$ of admissible cubes is such that if two cubes overlap, then necessarily one cube is contained in the other. Concretely,
 $\mathbb B = \cup_{j=0}^\infty \mathbb B_j$ where the set $\mathbb B_j$ of $j$-blocks contains the representative cube $B_j = \{1,\ldots, 2^j\}^d$ and all its shifts by vectors $2^j\vect k$, $\vect k \in \Z^d$. Such geometries are often called \emph{hierarchical} in the context of renormalization group theory~\cite{dyson1969hierarchical,brydges2007RGlectures}. We consider both the grand-canonical ensemble and the multi-canonical ensemble. In the grand-canonical ensemble, described in detail in Section~\ref{sec:model},  $j$-blocks have activity $z_j$. In the multi-canonical ensemble we work with density variables $\rho_j$ and the overall packing fraction $\sigma$, see Section~\ref{sec:entropy}. 
  
In Section~\ref{sec:pressure} we work in the grand-canonical ensemble and  prove  explicit formulas for the pressure and block densities as functions of the activities $z_j$ (Theorems~\ref{thm:pressure} and~\ref{thm:density}). The formulas are similar to formulas for an ideal mixture, the only difference is that the activity $z_j$ is replaced with an effective activity $\widehat z_j$. The effective activity $\widehat z_j$ takes into account the volume excluded for blocks of type $k\leq j$ in the presence of a $j$-block;  it is exponentially smaller than the original activity, $\widehat z_j \leq  z_j \exp( - \mathrm{const} |B_j|)$. This feature is shared by two-scale binary mixtures or colloids~\cite{jansen-tsagkaro2019colloids}. In addition, we prove an explicit inversion formula for the activities as functions of the densities and prove an equation of state for the pressure that is a  variant of the van der Waals equation of state  (Theorem~\ref{thm:inversion}). The equations are similar to equations for discrete systems of non-overlapping rods on a line~\cite{jansen2015tonks}. 

In Section~\ref{sec:entropy} we work in the multi-canonical ensemble and prove an explicit formula for the entropy as a function of block densities $\rho_j$ and the overall packing fraction (Theorem~\ref{thm:entropy}). The entropy is the sum of the entropy of an ideal mixture plus a power series correction. The power series is absolutely convergent whenever the packing fraction is strictly smaller than $1$  (Proposition~\ref{prop:analyticity})---there is no need for exponential decay $\rho_j  \leq  \exp(- \mathrm{const} |B_j|)$.  We check that   the pressure is a Legendre transform of the entropy and compute the maximizers in the resulting variational formula for the pressure (Proposition~\ref{prop:pressure-legendre}). 

In Section~\ref{sec:phasetransition} we investigate a parameter-dependent model with activities $z_j(\mu) = \exp( \mu|B_j| - E_j)$ for some given sequence of energies $(E_j)_{j\in \N_0}$ and chemical potential $\mu \in \R$, and we investigate possible phase transitions as $\mu$ is varied. We prove a  sufficient condition for the absence of phase transitions (Theorem~\ref{thm:ptabsence}). For constant energies $E_j\equiv \lambda$ with $\lambda$ sufficiently large, the mixture of cubes has a continuous phase transition (Theorem~\ref{thm:ptcontinuous}). The proof uses a parameter-dependent fixed point iteration, and we sketch some possible connections with Mandelbrot's fractal percolation model~\cite{mandelbrot1982fractal,chayes-chayes-durrett88}. A necessary and sufficient condition for the existence of first-order phase transitions is given in Theorem~\ref{thm:firstorder}.

\section{The model}\label{sec:model}

\subsection{Lattice animals. Polymer partition function}

Fix $d\in \N$ and let $\mathbb X$ the collection of finite non-empty subsets of $\Z^d$. Elements $X$ of $\mathbb X$ are called \emph{lattice animals} or \emph{polymers}. 
For $\Lambda\subset \Z^d$ a bounded non-empty set, let 
$$
	\mathbb X_\Lambda = \{X\in \mathbb X\mid X\subset \Lambda\}. 
$$
We are interested in probability measures on finite collections of lattice animals in $\Lambda$ and define 
$$
	\Omega_\Lambda:= \Bigl\{ \omega= \{X_1,\ldots, X_r\}\, \Big|\ r\in \N_0,\  X_1,\ldots, X_r\subset \Lambda,\ \forall i \neq j:\ X_i \neq X_j\Bigr\}.
$$
The empty configuration is explicitly allowed, i.e., $\varnothing \in \Omega_\Lambda$. 
Note the one-to-one correspondence 
$$
	\Omega_\Lambda \to \{0,1\}^{\mathbb X_\Lambda},\quad \omega \mapsto \bigl( n_X(\omega)\bigr)_{X\in \mathbb X_\Lambda}
$$
given by 
$$
	n_X(\omega):= \begin{cases}
									1, &\quad X\in \omega,\\
									0, &\quad X\notin \omega.
							\end{cases} 
$$
Assume we are given a map $z:\mathbb X\to \R_+$, called \emph{activity}. For $\Lambda\subset \Z^d$ a bounded non-empty set, define 
the \emph{polymer partition function} 
$$
	\Xi_\Lambda:= 1 + \sum_{r=1}^\infty \frac{1}{r!}\ \sum_{(X_1,\ldots, X_r) \in \mathbb X_\Lambda^r} \Biggl( \prod_{i=1}^r z(X_i) \Biggr) \1_{\{\forall i \neq j:\ X_i \cap X_j = \varnothing\}}
$$
and the grand-canonical Gibbs measure, a probability  measure $\mathbb P_\Lambda$ on $\Omega_\Lambda$ given by 
$$
 	\P_\Lambda\bigl( \omega =  \{X_1,\ldots, X_r\}  \Bigr) := \frac{1}{\Xi_\Lambda}  \1_{\{\forall i \neq j:\ X_i \cap X_j = \varnothing\}} \prod_{i=1}^r z(X_i), \quad 	\P_\Lambda\bigl( \omega = \varnothing\bigr) := \frac{1}{\Xi_\Lambda}.
$$
The  probabilistically minded reader may think of $\P_\Lambda$ as independent Bernoulli variables $n_X(\omega)$ with parameters $z(X)/(1+z(X))$ conditioned on non-overlap of the polymers  $X$. 

In order to pass to the limit $\Lambda\nearrow \Z^d$ we impose conditions on the activity. 
\begin{definition} \label{def:stable} 
	For $z:\mathbb X\to \R_+$ and $\theta\in \R$, let 
	$$
		||z||_\theta:= 	\sup_{x\in \Z^d} \sum_{X\ni x} \frac{1}{|X|} z(X)\, \e^{- \theta |X|}.
	$$
	The activity $z(\cdot)$ is \emph{stable} if $||z||_
	\theta<\infty$  for some $\theta \in \R$.
\end{definition}

\noindent The definition is adapted from Gruber and Kunz~\cite[Eq. (23)]{gruber-kunz1971} who call the activity stable if instead $||z||_0<\infty$ but also observe some scaling invariance of the model~\cite[Eq. (22)]{gruber-kunz1971} see the proof of Lemma~\ref{lem:scale} below. Our definition incorporates possible rescalings into the definition of stability and allows for $\theta>0$ and activities that are exponentially large in the polymer size $|X|$. 
Stability ensures a uniform bound on the finite-volume pressure.

\begin{lemma} \label{lem:scale}
	Suppose that the activity $z(\cdot)$  is stable. Then for all $\theta \in \R$ with $||z||_\theta<\infty$ and  for all $\Lambda\subset \Z^d$, we have 
	$$
		\frac{1}{|\Lambda|} \log \Xi_\Lambda \leq \theta + \e^{-\theta} + ||z||_\theta< \infty 
	$$
\end{lemma}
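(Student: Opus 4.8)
The plan is to bound $\Xi_\Lambda$ by the partition function of an independent (unconditioned) polymer ensemble and then control the logarithm of the latter by a standard entropy–energy estimate. First I would drop the non-overlap indicator, which only decreases the sum, to get
\be \label{eq:drop-indicator}
	\Xi_\Lambda \leq 1 + \sum_{r=1}^\infty \frac{1}{r!} \sum_{(X_1,\ldots,X_r) \in \mathbb X_\Lambda^r} \prod_{i=1}^r z(X_i) = \exp\Bigl( \sum_{X \in \mathbb X_\Lambda} z(X) \Bigr),
\ee
so that $\log \Xi_\Lambda \leq \sum_{X\in \mathbb X_\Lambda} z(X)$. The task is then to show $\sum_{X \in \mathbb X_\Lambda} z(X) \leq |\Lambda|(\theta + \e^{-\theta} + ||z||_\theta)$.

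Next I would insert the elementary inequality $1 \leq \e^{\theta|X| + \e^{-\theta}}/|X|$, valid for every integer $|X| \geq 1$ and every $\theta \in \R$ (since $t\e^{-\theta} \leq \e^{t\e^{-\theta}}$ with $t = |X|$, hence $|X| \leq \e^{\theta}\e^{\e^{-\theta}}\cdot\e^{\theta(|X|-1)}$; more directly, $\log|X| \leq |X|-1 \leq \theta|X| + \e^{-\theta}|X| - \ldots$ — the cleanest form is $1 \le \frac{1}{|X|}\e^{\theta|X|}\e^{\e^{-\theta}}$, which follows from $|X| \le \e^{\theta|X|+\e^{-\theta}}$, i.e. from $\log|X| - \theta|X| \le \e^{-\theta}$, which in turn follows because $\log t - \theta t$ is maximized at $t = 1/\theta$ when $\theta>0$ with value $-\log\theta - 1 \le \e^{-\theta}-1$ for $\theta>0$, and is trivial for $\theta \le 0$). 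With this factor inserted,
\be \label{eq:sum-bound}
	\sum_{X \in \mathbb X_\Lambda} z(X) \leq \e^{\e^{-\theta}} \sum_{X \in \mathbb X_\Lambda} \frac{1}{|X|} z(X)\, \e^{\theta|X|}.
\ee
The right-hand side is not yet comparable to $||z||_\theta$ because the weight has $\e^{+\theta|X|}$ rather than $\e^{-\theta|X|}$; this sign mismatch is the main obstacle, and it is resolved by splitting $\e^{\theta|X|} = \e^{2\theta|X|}\e^{-\theta|X|}$ — no, that makes it worse. The correct resolution: one does not want $\e^{\theta|X|}$ at all. Instead, use $1 \le \frac{1}{|X|}\e^{\e^{-\theta}}\e^{-\theta|X|}\cdot\e^{2\theta|X|}$ is still wrong. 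The honest route is: replace the crude bound $\log|X|\le\ldots$ by working with $||z||_\theta$ directly after writing $z(X) = \frac{1}{|X|}z(X)\e^{-\theta|X|}\cdot |X|\e^{\theta|X|}$, so
\be \label{eq:true-split}
	\sum_{X\in\mathbb X_\Lambda} z(X) = \sum_{X\in\mathbb X_\Lambda}\Bigl(\tfrac{1}{|X|}z(X)\e^{-\theta|X|}\Bigr)\, |X|\,\e^{\theta|X|},
\ee
and bound $|X|\e^{\theta|X|}$ — which still grows. I expect the intended argument avoids \eqref{eq:drop-indicator} and instead estimates $\Xi_\Lambda$ by grouping polymers by a distinguished site: writing $\Xi_\Lambda \le \prod_{x\in\Lambda}(1 + \sum_{X\ni x}z(X)/|X|)$ via a Gruber–Kunz-type inequality (each polymer assigned to one of its $|X|$ sites, then polymers through different sites treated independently), giving $\log\Xi_\Lambda \le \sum_{x\in\Lambda}\log(1 + \sum_{X\ni x}z(X)/|X|)$; then $\sum_{X\ni x}\frac{z(X)}{|X|} \le \e^{\theta}\sum_{X\ni x}\frac{z(X)}{|X|}\e^{-\theta|X|}\cdot\e^{\theta(|X|-1)}$ — again the wrong sign for $\theta>0$.

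Given the stated bound $\theta + \e^{-\theta} + ||z||_\theta$, the clean argument must be: $\Xi_\Lambda \le \exp(\sum_{X\in\mathbb X_\Lambda}z(X))$ as in \eqref{eq:drop-indicator}; then for each $X$ with $|X| = n$ use $z(X) = \frac{z(X)}{n}\e^{-\theta n}\cdot n\e^{\theta n} \le \frac{z(X)}{n}\e^{-\theta n}\cdot\e^{\theta n + n - 1}$ is still exponential. I therefore conclude that the role of the $\theta$ in the exponent must come from a per-site count: there are $|\Lambda|$ choices of a base point, the factor $\e^{-\theta}$ from optimizing $\log n \le n\e^{-\theta}+(\text{const})$, and $||z||_\theta$ is exactly the per-site sum; the linear-in-$\theta$ term $\theta$ arises from bounding $\log(1+a) \le a$ together with a term $\theta$ absorbed from normalizing $z(X)\e^{-\theta|X|}$ back up by at most one power $\e^{\theta}$ per polymer and then $\log\e^{\theta} = \theta$. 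The hardest part of writing this cleanly is getting the three constants to land exactly as $\theta + \e^{-\theta} + ||z||_\theta$; the inequality itself is soft (drop the indicator, one base-point per polymer, elementary scalar estimates $\log(1+a)\le a$ and $n \le \e^{n\e^{-\theta}+\theta+\e^{-\theta}}$), so I would set up \eqref{eq:drop-indicator}, then the per-site factorization, then the scalar bound, and finally match constants.
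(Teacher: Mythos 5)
Your opening move, dropping the non-overlap indicator to get $\Xi_\Lambda \le \exp\bigl(\sum_{X\in\mathbb X_\Lambda}z(X)\bigr)$, is already fatal when $\theta>0$, which is exactly the regime the lemma is designed to cover (activities exponentially large in $|X|$). Without the disjointness constraint the sum $\sum_{X\in\mathbb X_\Lambda}z(X)$ is not $O(|\Lambda|)$; for $z(X)\sim\e^{\theta|X|}$ it is of order $\e^{\theta|\Lambda|}$ or worse, so the best you could hope for from this route is $\log\Xi_\Lambda\lesssim\e^{\theta|\Lambda|}$, not $\theta|\Lambda|+\dots$. You correctly diagnose the resulting sign mismatch repeatedly, but you never find the step that resolves it, and no amount of scalar massaging of $|X|\e^{\theta|X|}$ can recover it once the indicator is gone.

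The missing idea is to \emph{keep} the constraint and observe that it gives an exact identity, not an inequality. Absorb the vacuum at each site into the singleton weight by setting $\Phi_0(\{x\})=1+z(\{x\})$ and $\Phi_0(X)=z(X)$ for $|X|\ge2$; then $\Xi_\Lambda$ is literally a sum over set partitions $\{X_1,\dots,X_r\}$ of $\Lambda$ of $\prod_i\Phi_0(X_i)$. For a partition $\sum_i|X_i|=|\Lambda|$ holds exactly, so $\prod_i\Phi_0(X_i)=\e^{\theta|\Lambda|}\prod_i\Phi_\theta(X_i)$ with $\Phi_\theta(X)=\Phi_0(X)\e^{-\theta|X|}$ — this is where the additive $\theta$ and the correct sign $\e^{-\theta|X|}$ come from, with no loss. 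After that exact rescaling, the rest is the soft part you anticipated: write $\Phi_\theta(X_i)=\sum_{x_i\in X_i}\Phi_\theta(X_i)/|X_i|$, relax the partition constraint to independent sums over base points, bound by $\exp\bigl(\sum_{x\in\Lambda}\sum_{X\ni x}\Phi_\theta(X)/|X|\bigr)$, and separate the singleton contribution $\e^{-\theta}$ from $||z||_\theta$. So the three constants $\theta$, $\e^{-\theta}$, $||z||_\theta$ come respectively from the partition-volume identity, the vacuum/singleton weight $\Phi_\theta(\{x\})$, and the remaining per-site sum — not from a $\log n\le n\e^{-\theta}+\dots$ estimate on polymer sizes as you were attempting.
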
 

\begin{proof}
	We follow~\cite[Lemma 1]{gruber-kunz1971}. 
	Define $\Phi_\theta(X) = z(X) \exp( - \theta| X|)$ if $|X|\geq 2$ and $\Phi_\theta(\{x\}) = (1+ z(\{x\}) )\exp( - \theta)$. Then $\Xi_\Lambda$ is a sum over set partitions $\{X_1,\ldots, X_r\}$ of $\Lambda$. For example, 
	if $d=1$ and $\Lambda=\{0,1\}=B_1$, then 
	\begin{align*}
		\Xi_{\{0,1\}}& = 1+ z(\{0\})  + z(\{1\}) + z(\{0\}) z(\{1\}) + z(\{0,1\})\\
			& = \bigl(1+ z(\{0\})\bigr)\bigl(1+ z(\{1\})\bigr) + z(\{0,1\}) \\
		&= \Phi_0(\{0\}) \Phi_0(\{1\}) + \Phi_0(\{0,1\}). 		
	\end{align*}
	More generally, 
	\begin{align*}
		\Xi_\Lambda &= \sum_{\{X_1,\ldots, X_r\}} \Phi_0(X_1)\cdots \Phi_0(X_r) = \e^{|\Lambda|\theta} \sum_{\{X_1,\ldots, X_r\}} \Phi_\theta(X_1)\cdots \Phi_\theta(X_r) \\
		& = \e^{|\Lambda| \theta}\sum_{\{X_1,\ldots, X_r\}} \prod_{i=1}^r \Biggl( \sum_{x_i\in X_i} \frac{\Phi_\theta(X_i)}{|X_i|}\Biggr) \\
		&\leq \e^{|\Lambda| \theta}\Biggl( 1+ \sum_{r=1}^\infty \frac{1}{r!}\sum_{(x_1,\ldots, x_r)\in \Lambda^r} \prod_{i=1}^r \Biggl( \sum_{X_i \ni x_i} \frac{\Phi_\theta(X_i)}{|X_i|} \Biggr)\Biggr)\\
		&=  \e^{|\Lambda| \theta}\exp \Biggl( \sum_{x\in \Lambda}  \sum_{X \ni x} \frac{\Phi_\theta(X)}{|X|} \Biggr). 
	\end{align*}
	It follows that 
	\begin{equation*}
			\frac{1}{|\Lambda|} \log \Xi_\Lambda\leq (\theta + \e^{-\theta})+ \frac{1}{|\Lambda|} \sum_{x\in \Lambda} \sum_{\substack{X \in \mathbb X_\Lambda:\\  x\in X}} \frac{1}{|X|} z(X) \e^{-\theta|X|} \leq \theta + \e^{-\theta} + ||z||_\theta< \infty.  \qedhere
	\end{equation*}
\end{proof} 

\subsection{Hierarchical cubes} 

Now we specialize to activity maps $z(\cdot)$ supported on a collection $\mathbb B\subset \mathbb X$ of cubes  with the property that if $A,B \in \mathbb B$ have non-empty intersection, then necessarily $A\subset B$.  A set $B \subset \Z^d$ is called a \emph{$j$-block} if 
$$
	B= \{ k_1 2^{j} +1,\ldots, (k_1 +1) 2^j\}\times \cdots \times \{ k_d 2^{j} +1,\ldots, (k_d +1) 2^j\}
$$
for some $\boldsymbol k =(k_1,\ldots, k_d) \in \Z^d$. Let $\mathbb B_j$ be the set of $j$-blocks. The blocks $B\in \mathbb B_j$ form a tiling of $\Z^d$ consisting of the tile 
$$
	B_j:= \{1,\ldots, 2^j\}^d
$$
and non-overlapping shifts of $B_j$. Let $(z_j)_{j\in \N_0}$ be a sequence of non-negative numbers. 
We are interested in activity maps of the form 
\be \label{eq:actidef}
	z(X) = \begin{cases}
				z_j, &\quad \text{if }\ X= B\in \mathbb B_j,\\
				0, &\quad \text{if\  } X\in \mathbb X\setminus \bigcup_{j=0}^\infty \mathbb B_j.
			\end{cases} 
\ee
Thus $z_0$ is the activity of a monomer $\{x\}$ and $z_1$ the activity of a cube with sidelength $2$. 	Define
	$$
		\theta^*:= \limsup_{j\to \infty}\frac{1}{|B_j|}\log z_j.
	$$

\begin{lemma} \label{lem:stable}
	The activity~\eqref{eq:actidef} is stable if and only if $\theta^*<\infty$. 
\end{lemma}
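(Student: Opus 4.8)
The plan is to unwind the definition of $||z||_\theta$ for the specific activity~\eqref{eq:actidef} and observe that the supremum over $x \in \Z^d$ collapses to a single sum over scales $j$, because for each point $x$ and each $j$ there is exactly one $j$-block containing $x$. First I would fix $x \in \Z^d$; for every $j \in \N_0$ there is a unique $B \in \mathbb B_j$ with $x \in B$, and $|B| = 2^{dj} = |B_j|$. Since $z$ is supported on $\bigcup_j \mathbb B_j$, the sum $\sum_{X \ni x} \frac{1}{|X|} z(X) \e^{-\theta|X|}$ becomes $\sum_{j=0}^\infty \frac{1}{|B_j|} z_j \e^{-\theta |B_j|}$, which does not depend on $x$. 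Hence
\[
	||z||_\theta = \sum_{j=0}^\infty \frac{z_j}{2^{dj}}\, \e^{-\theta\, 2^{dj}}.
\]

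Next I would analyze convergence of this series as a function of $\theta$. Writing $a_j := \frac{1}{2^{dj}} z_j \e^{-\theta 2^{dj}}$, one has $\frac{1}{|B_j|} \log a_j = \frac{1}{2^{dj}}\log z_j - \theta - \frac{dj \log 2}{2^{dj}}$, so $\limsup_{j\to\infty} \frac{1}{|B_j|}\log a_j = \theta^* - \theta$. If $\theta^* < \infty$, then choosing any $\theta > \theta^*$ makes $\limsup_j \frac{1}{|B_j|}\log a_j < 0$; since $|B_j| = 2^{dj} \to \infty$, this forces $a_j \le \e^{-c\, 2^{dj}}$ for some $c>0$ and all large $j$, so the series converges and $||z||_\theta < \infty$, i.e.\ $z$ is stable. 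Conversely, if $\theta^* = \infty$, then for every fixed $\theta \in \R$ we have $\limsup_j \frac{1}{|B_j|}\log a_j = \infty$, so $a_j$ is unbounded along a subsequence and the series diverges; thus $||z||_\theta = \infty$ for all $\theta$ and $z$ is not stable.

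This argument is essentially a routine limsup-root-test computation, so I do not anticipate a genuine obstacle; the only point requiring a little care is the direction $\theta^* = \infty \Rightarrow$ not stable, where one must note that a finite $\limsup$ of $\frac{1}{|B_j|}\log z_j$ is exactly what is needed to beat the factor $\e^{-\theta 2^{dj}}$ for suitable $\theta$, and that no finite $\theta$ can compensate an infinite $\limsup$. I would also remark in passing that the borderline case $\theta = \theta^*$ need not be decided, since stability only requires $||z||_\theta < \infty$ for \emph{some} $\theta$, and any $\theta > \theta^*$ works.
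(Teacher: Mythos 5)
Your proof is correct and follows essentially the same route as the paper's: both start from the observation that each $x\in\Z^d$ lies in exactly one $j$-block, reducing $\|z\|_\theta$ to $\sum_j |B_j|^{-1} z_j\,\e^{-\theta|B_j|}$, and then compare the growth of $z_j$ against $\e^{\theta|B_j|}$. The only cosmetic difference is that the paper argues one direction by directly extracting the bound $z_j\le \|z\|_\theta\,|B_j|\,\e^{\theta|B_j|}$ from finiteness of the norm, whereas you phrase both directions as a limsup/root-test computation on the summands; the substance is identical.
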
 

\begin{proof}
	For every given block type $j\in \N_0$, every point $x\in \Z^d$ belongs to exactly one $j$-block, therefore 
	$$
		||z||_\theta = \sum_{j=0}^\infty \frac{1}{|B_j|} z_j \e^{-\theta |B_j|}. 
	$$	
	If $||z||_\theta < \infty$ for some $\theta\in \R$, then $z_j \leq ||z||_\theta |B_j| \exp( \theta |B_j|)$ hence $\theta^* \leq \theta <\infty$. Conversely, if $\theta^*<\infty$, then for every $\theta>\theta^*$ we have 
	$z_j \exp( - |B_j|\theta) \leq \exp(- |B_j| (\theta - \theta^* +o(1)))$ which goes to zero exponentially fast as $j\to \infty$, therefore $||z||_\theta<\infty$ and the activity is stable. 
\end{proof} 

\subsection{Ideal mixture. Bernoulli variables} 

To help interpret subsequent formulas we recall the expression of the partition function for an ideal mixture of cubes, where cubes of different type may overlap.  For $\Lambda\in \mathbb B$, set 
$$
	\Xi_\Lambda^\mathrm{Ber} := \sum_{\omega \in \Omega_\Lambda} \prod_{X\in \omega} z(X) 
$$
with $\prod_{X\in \varnothing} z(X) =1$, and let $\P_\Lambda^\mathrm{Ber}$ be the associated probability measure on $\Omega_\Lambda$. It is straightforward to check that under $\P_\Lambda^\mathrm{Ber}$, the occupation numbers $n_X(\omega)$, $X\subset \Lambda$, are independent Bernoulli variables with 
$$
	\P_\Lambda^\mathrm{Ber} \bigl( n_X(\omega) =1\bigr) = \P_\Lambda^\mathrm{Ber}(\omega \ni X) =  \frac{z(X)}{1+z(X)}. 
$$
For the activities~\eqref{eq:actidef} and $\Lambda = \Lambda_n\in \mathbb B_n$, the finite-volume pressure of the ideal mixture is 
$$
	\frac{1}{|\Lambda|}  \log \Xi_\Lambda^\mathrm{Ber}=	\frac{1}{|\Lambda|}  \sum_{\substack{B\in \mathbb B:\\ B\subset \Lambda}} \log (1+ z(B)) =\sum_{j=0}^n \frac{1}{|B_j|} \log (1+ z_j).
$$
The infinite-volume pressure for the ideal mixture is therefore
\be \label{eq:bernoulli-pressure}
	p^\mathrm{Ber} := \lim_{\Lambda\nearrow \Z^d} 	\frac{1}{|\Lambda|}  \log \Xi_\Lambda^\mathrm{Ber} = \sum_{j=0}^\infty \frac{1}{|B_j|} \log (1+ z_j).
\ee
The factor $1/|B_j|$ reflects the lack of full translational invariance of the model: only translates by multiples of $2^j$ map a $j$-block to another admissible $j$-block.
The factor $1/|B_j|$ also appears in the relation between the expected number of $j$-blocks and the probability that a given $j$-block is present: if $B_j\subset \Lambda$ then
$$
\E_\Lambda^\mathrm{Ber} \bigl[\text{number of $j$-blocks in $\omega$}\bigr] = 	 \sum_{\substack{B\in \mathbb B_j:\\ B\subset \Lambda}} \E_\Lambda^\mathrm{Ber}\bigl[ n_B(\omega) \bigr]= \frac{|\Lambda|}{|B_j|}\,  \P^\mathrm{Ber}_\Lambda\bigl( n_{B_j}(\omega) =1\bigr).
$$

\begin{remark}[Ideal gas and Poisson variables]
	The word ``ideal mixture'' often refers to a model where not only the hard-core interaction between different types of blocks is dropped, but also the self-interaction of $j$-blocks is discarded---i.e.,  not only is the mixture ideal but in addition each component on its own is an ideal gas. The configuration space of such a system is $\N_0^{\mathbb B}$ and the occupation numbers become Poisson variables with parameters $z_j$ instead of Bernoulli variables. We have chosen the superscript ``Ber'' in order to avoid ambiguities associated with the word  ``ideal.''
\end{remark} 

\section{Pressure. Grand-canonical ensemble} \label{sec:pressure}

In the following $(\Lambda_n)_{n\in \N_0}$ represents a a growing sequence of cubes $\Lambda_n \in \mathbb B_n$ with $\Lambda_n\nearrow \Z^d$. 
The pressure in finite volume and infinite volume is 
$$
	p_n:= \frac{1}{|\Lambda_n|}\log \Xi_{\Lambda_n},\quad p:= \lim_{n\to \infty} p_n. 
$$
We  assume throughout the article that the activity is stable, i.e., $\theta^* = \limsup_{j\to \infty}\frac{1}{|B_j|}\log z_j <\infty$. 

\begin{theorem} \label{thm:pressure}
	The limit defining the pressure exists and satisfies $\theta^*\leq p<\infty$. It is expressed in terms of the effective activities 
	$$
		\widehat z_0:= z_0,\quad \widehat z_j:= z_j \e^{- |B_j| p_{j-1}}\quad(j\geq 1)
$$
	as
	$$
		p= \sum_{j=0}^\infty \frac{1}{|B_j|}\log (1+\widehat z_j). 
	$$
\end{theorem}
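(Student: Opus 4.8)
The plan is to exploit the hierarchical (tree) structure of the admissible cubes to write down a recursion for $\Xi_{\Lambda_n}$ over the scale $n$, and then turn that recursion into the advertised product formula by induction on $n$. The key combinatorial observation is that $\Lambda_n$ is tiled by exactly $2^d$ disjoint $(n-1)$-blocks, call them $\Lambda_{n-1}^{(1)},\dots,\Lambda_{n-1}^{(2^d)}$, and every admissible cube $B\subset\Lambda_n$ is either equal to $\Lambda_n$ itself (an $n$-block), or is contained in exactly one of the $2^d$ sub-blocks. Since cubes in distinct sub-blocks are automatically disjoint, a polymer configuration $\omega\subset\mathbb B$ in $\Lambda_n$ that does \emph{not} use $\Lambda_n$ itself decomposes uniquely into independent configurations in each $\Lambda_{n-1}^{(i)}$; and a configuration that \emph{does} use the block $\Lambda_n$ can contain no other cube at all. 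By the tiling symmetry all $2^d$ sub-partition functions are equal to $\Xi_{\Lambda_{n-1}}$, so
\begin{equation}\label{eq:recursion-plan}
	\Xi_{\Lambda_n} = \bigl(\Xi_{\Lambda_{n-1}}\bigr)^{2^d} + z_n .
\end{equation}

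Next I would feed \eqref{eq:recursion-plan} into an induction. Dividing by $|\Lambda_n| = 2^{dn} = 2^d |\Lambda_{n-1}|$ and writing $p_n = |\Lambda_n|^{-1}\log\Xi_{\Lambda_n}$, equation \eqref{eq:recursion-plan} becomes
\begin{equation*}
	|\Lambda_n|\, p_n = \log\Bigl( \e^{|\Lambda_{n-1}| 2^d p_{n-1}} + z_n\Bigr) = |\Lambda_n| p_{n-1} + \log\bigl(1 + z_n \e^{-|\Lambda_n| p_{n-1}}\bigr) = |\Lambda_n|\bigl( p_{n-1} + \tfrac{1}{|B_n|}\log(1+\widehat z_n)\bigr),
\end{equation*}
using $|\Lambda_n| = |B_n|$ and the definition $\widehat z_n = z_n \e^{-|B_n| p_{n-1}}$. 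Hence $p_n = p_{n-1} + |B_n|^{-1}\log(1+\widehat z_n)$, and since the base case $n=0$ gives $\Xi_{\Lambda_0} = \Xi_{\{x\}} = 1 + z_0$, i.e. $p_0 = \log(1+z_0) = \log(1+\widehat z_0)$ (note $|B_0|=1$), telescoping yields $p_n = \sum_{j=0}^n |B_j|^{-1}\log(1+\widehat z_j)$ for every $n$. This is the finite-volume version of the claimed formula; note it also makes sense of the effective activities as genuine quantities defined through the already-computed lower-scale pressures.

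It then remains to pass to the limit $n\to\infty$ and to establish $\theta^*\le p<\infty$. The upper bound $p<\infty$ is immediate from Lemma~\ref{lem:scale}, which gives $p_n \le \theta + \e^{-\theta} + \|z\|_\theta$ uniformly in $n$ for any $\theta$ with $\|z\|_\theta<\infty$; combined with the fact that $(p_n)$ is nondecreasing (each increment $|B_j|^{-1}\log(1+\widehat z_j)$ is $\ge 0$), the monotone sequence $(p_n)$ converges to a finite limit $p = \sum_{j=0}^\infty |B_j|^{-1}\log(1+\widehat z_j)$, proving both the existence of the limit and the product formula. For the lower bound $p\ge\theta^*$, I would argue that $\Xi_{\Lambda_n} \ge z_n$ directly from the definition (take the single-polymer configuration $\omega=\{\Lambda_n\}$), so $p_n \ge |B_n|^{-1}\log z_n$, and therefore $p = \lim_n p_n \ge \limsup_n |B_n|^{-1}\log z_n = \theta^*$.

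The one genuinely delicate point is the derivation of the recursion \eqref{eq:recursion-plan}: one must be careful that the ``either contains $\Lambda_n$ or splits across the sub-blocks'' dichotomy is genuinely exhaustive and that no configuration is double-counted, which hinges precisely on the defining property of $\mathbb B$ (if two admissible cubes meet, one contains the other) together with the fact that no admissible cube strictly between scales $n-1$ and $n$ exists. Everything else — the telescoping, the monotonicity, the two-sided bound on $p$ — is routine once \eqref{eq:recursion-plan} is in hand.
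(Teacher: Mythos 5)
Your proof is correct and takes essentially the same route as the paper: derive the hierarchical recursion $\Xi_{\Lambda_n} = z_n + (\Xi_{\Lambda_{n-1}})^{2^d}$, telescope to get $p_n=\sum_{j=0}^n|B_j|^{-1}\log(1+\widehat z_j)$, invoke Lemma~\ref{lem:scale} for finiteness, and use $\Xi_{\Lambda_n}\ge z_n$ for the lower bound. The only difference is that you spell out the combinatorial dichotomy behind the recursion, which the paper calls ``straightforward to check.''
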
 

\noindent Consequently the pressure for a system of non-overlapping cubes is given by a formula similar to the pressure~\eqref{eq:bernoulli-pressure} for the ideal mixture, the only difference is that the activities $z_j$ are replaced by the effective activities $\widehat z_j$. The effective activity is similar to the renormalized activity for binary mixtures from~\cite{jansen-tsagkaro2019colloids}. 

\begin{proof} 
	It is straightforward to check the recurrence relation 
	\be \label{eq:recurrence1} 
		\Xi_{\Lambda_n} = z_n + \bigl( \Xi_{\Lambda_{n-1}}\bigr)^{2^d} \qquad (n\geq 1).
	\ee
	By definition of $\widehat z_j$ and $p_j$ the recurrence relation can be rewritten as 
	$$
		\Xi_{\Lambda_n}  =  ( 1+ \widehat z_n ) \bigl( \Xi_{\Lambda_{n-1}}\bigr)^{2^d}
	$$
	which gives $p_n = p_{n-1} + \frac{1}{|\Lambda_n|} \log (1+ \widehat z_n)$. Combining with $p_0 = \log (1+ z_0 ) = \log (1+\widehat z_0)$ we find 
	\be \label{eq:fipress}
		p_n = \sum_{j=0}^n \frac{1}{|B_j|} \log (1+ \widehat z_j)
	\ee
	and the existence in $\R_+\cup \{\infty\}$ of the limit defining $p$, and its representation as an infinite series, follow. The stability of the activity guarantees that the pressure is finite, see Lemma~\ref{lem:stable}. The inequality $p\geq \theta^*$ follows from $\Xi_{\Lambda_n}\geq z_n$. 
\end{proof} 

\noindent Next we investigate the density of $j$-blocks and the packing fraction. The probability that a cube $B\subset \Lambda$ belongs to $\omega$ is 
$$
	\rho_\Lambda(B):= \P_\Lambda( \omega \ni B ) = \E_\Lambda\bigl[n_B\bigr].
$$
It depends on the type of the block only, accordingly we write $\rho_\Lambda(B) = \rho_{j,\Lambda}$ if $B\in \mathbb B_j$. The expected number of $j$-blocks per unit volume is 
\be \label{eq:nujrhoj}
	\nu_{j,\Lambda}:= \frac{1}{|\Lambda|}\sum_{\substack{B\in \mathbb B_j:\\ B\subset \Lambda}} \rho_\Lambda(B) = \frac{\rho_{j,\Lambda}}{|B_j|}.
\ee
To simplify language we refer to both $\nu_{j,\Lambda}$ and $\rho_{j,\Lambda}$ as the density of $j$-cubes, though they are strictly speaking two different objects. 
The packing fraction is the fraction of area covered by cubes 
$$
	\sigma_\Lambda := \frac{1}{|\Lambda|}\, \E_\Lambda\Bigl[ \bigl| \bigcup_{B\in \omega}B\bigr|\Bigr] = \sum_{j} |B_j| \nu_{j,\Lambda} = \sum_j \rho_{j,\Lambda}. 
$$
Below we show that the limits 
\be \label{eq:rhojdef}
	\rho_j :=\lim_{n\to \infty} \rho_{j,\Lambda_n},\quad  \sigma:= \lim_{n\to \infty}\sigma_{\Lambda_n}
\ee
exist. Notice $\sigma \leq 1$ and $\sum_{j=0}^\infty \rho_j \leq \sigma$. 

\begin{theorem} \label{thm:density}
	The limits~\eqref{eq:rhojdef} exist and satisfy the following.
	\begin{enumerate}
		\item [(a)] If $\sum_{j=0}^\infty \widehat z_j < \infty$ ,  then 
		$$
			\rho_j = \frac{\widehat z_j}{1+\widehat z_j}\prod_{k=j+1}^\infty \frac{1}{1+ \widehat z_k} >0,\quad \sigma = \sum_{j=0}^\infty \rho_j = 1 - \prod_{k=0}^\infty \frac{1}{1+\widehat z_k} <1.
		$$
		\item [(b)] If $\sum_{j=0}^\infty \widehat z_j = \infty$, then $\rho_j =0$ for all $j\in \N_0$ and $\sigma =1$, moreover $p = \theta^*
$.
	\end{enumerate} 
\end{theorem}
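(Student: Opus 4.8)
The plan is to promote the scalar recursion~\eqref{eq:recurrence1} to the level of the Gibbs measure. The cube $\Lambda_n$ is tiled by $2^d$ sub-blocks $\Lambda_{n-1}^{(1)},\dots,\Lambda_{n-1}^{(2^d)}\in\mathbb B_{n-1}$, and under $\P_{\Lambda_n}$ a configuration is either the singleton $\{\Lambda_n\}$, carried with probability $z_n/\Xi_{\Lambda_n}=\widehat z_n/(1+\widehat z_n)$, or, in the complementary event (probability $\Xi_{\Lambda_{n-1}}^{2^d}/\Xi_{\Lambda_n}=1/(1+\widehat z_n)$), its restrictions to the $2^d$ sub-blocks are independent and each distributed as $\P_{\Lambda_{n-1}}$. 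I would use this to derive a closed recursion for the finite-volume densities: for $j<n$ a $j$-block $B\subset\Lambda_n$ lies in exactly one sub-block, and since $B$ and $\Lambda_n$ cannot coexist in a configuration, the event $\{\omega\ni B\}$ forces the second alternative, whence
\[
  \rho_{j,\Lambda_n}=\frac{1}{1+\widehat z_n}\,\rho_{j,\Lambda_{n-1}}\qquad(0\le j<n),\qquad \rho_{j,\Lambda_j}=\frac{\widehat z_j}{1+\widehat z_j}.
\]
Here the base value comes from $\P_{\Lambda_j}(\omega=\{\Lambda_j\})=z_j/\Xi_{\Lambda_j}$ together with $\Xi_{\Lambda_j}=(1+\widehat z_j)\Xi_{\Lambda_{j-1}}^{2^d}$ (and directly $\rho_{0,\Lambda_0}=z_0/(1+z_0)$ when $j=0$). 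Iterating the recursion gives, for $n\ge j$,
\[
  \rho_{j,\Lambda_n}=\frac{\widehat z_j}{1+\widehat z_j}\prod_{k=j+1}^n\frac{1}{1+\widehat z_k}.
\]

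Everything then reduces to manipulations with partial products. Rewriting $\rho_{j,\Lambda_n}=\prod_{k=j+1}^n(1+\widehat z_k)^{-1}-\prod_{k=j}^n(1+\widehat z_k)^{-1}$, and using that each point of $\Lambda_n$ is covered by exactly one $j$-block $\subset\Lambda_n$ for each $j\le n$, with the covering events for distinct types mutually exclusive, the telescoping sum yields
\[
  \sigma_{\Lambda_n}=\sum_{j=0}^n\rho_{j,\Lambda_n}=1-\prod_{k=0}^n\frac{1}{1+\widehat z_k}.
\]
Since the partial products $\prod_{k=j+1}^n(1+\widehat z_k)^{-1}$ are nonincreasing in $n$, the limits~\eqref{eq:rhojdef} exist, and the dichotomy is governed by the classical fact that $\prod_k(1+\widehat z_k)$ converges iff $\sum_k\widehat z_k<\infty$. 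In case~(a) the limiting product is strictly positive, giving $\rho_j=\widehat z_j(1+\widehat z_j)^{-1}\prod_{k>j}(1+\widehat z_k)^{-1}>0$ and $\sigma=1-\prod_{k\ge 0}(1+\widehat z_k)^{-1}<1$; applying the same telescoping to the convergent infinite products, whose tails $\prod_{k\ge m}(1+\widehat z_k)^{-1}$ tend to $1$, gives $\sigma=\sum_j\rho_j$. In case~(b) every tail product vanishes, so $\rho_j=0$ for all $j$ and $\sigma=1$.

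For the remaining assertion $p=\theta^*$ in case~(b), the inequality $p\ge\theta^*$ is already in Theorem~\ref{thm:pressure}. Taking logarithms in $\widehat z_j=z_j\e^{-|B_j|p_{j-1}}$ gives $|B_j|^{-1}\log\widehat z_j=|B_j|^{-1}\log z_j-p_{j-1}$; since $p_{j-1}\nearrow p<\infty$ by~\eqref{eq:fipress}, this yields $\limsup_{j\to\infty}|B_j|^{-1}\log\widehat z_j=\theta^*-p\le 0$. If this $\limsup$ were strictly negative, then $\widehat z_j\le\e^{-c|B_j|}$ for some $c>0$ and all large $j$, forcing $\sum_j\widehat z_j<\infty$, which contradicts the hypothesis of case~(b). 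Hence $\theta^*-p=0$.

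I expect the one genuinely delicate point to be the measure-level self-similar decomposition — making precise that, conditioned on $\Lambda_n\notin\omega$, the restrictions of $\omega$ to the $2^d$ sub-blocks are independent copies of $\P_{\Lambda_{n-1}}$ — since this is exactly what converts the scalar identity~\eqref{eq:recurrence1} into the density recursion above. Once that is in place the rest is routine bookkeeping with convergent versus divergent infinite products.
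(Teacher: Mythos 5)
Your proof is correct and takes essentially the same route as the paper: the key step — that conditional on $\Lambda_n\notin\omega$ the restrictions to the $2^d$ sub-blocks are independent copies of $\P_{\Lambda_{n-1}}$ — is exactly the paper's induction step, and the resulting finite-volume formula for $\rho_{j,\Lambda_n}$ is identical (your telescoping sum for $\sigma_{\Lambda_n}$ is a minor cosmetic simplification of the paper's combinatorial identity with $x_j=\widehat z_j/(1+\widehat z_j)$, $y_j=1-x_j$). The argument that $p=\theta^*$ in case (b) — passing to $\limsup_j |B_j|^{-1}\log\widehat z_j = \theta^*-p$ and deriving a contradiction with $\sum_j\widehat z_j=\infty$ if this were negative — is likewise the same contradiction the paper runs.
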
 

\noindent Case (b) corresponds to a close-packing regime where the box $\Lambda_n$ is filled with large blocks. Case (a) corresponds to a gas of small cubes that fill only a fraction of the volume. See Section~\ref{sec:phasetransition} for examples. 

\begin{proof}
	We show first that for all $n\in \N_0$ and $j=0,\ldots, n$, we have 
	\be \label{eq:firho}
		\rho_{j,\Lambda_n} = \frac{\widehat z_j}{1+\widehat z_j} \frac{1}{1+\widehat z_{j+1}}\cdots \frac{1}{1+\widehat z_n}, \quad \sigma_{\Lambda_n} = 1- \prod_{j=0}^n \frac{1}{1+\widehat z_j}.
	\ee
	The proof of the first part of~\eqref{eq:firho} is by induction over $n\geq j$ at fixed $j\in \N_0$. If $n=j$, then 
	$$
		\rho_{j,\Lambda_j} =  \P_{\Lambda_j}(\omega = \{B_j\}) = \frac{z_j}{\Xi_{\Lambda_j}} = \frac{z_j}{(1+\widehat z_j) \Xi_{\Lambda_{j-1}}^{2^d}} = \frac{\widehat z_j}{1+\widehat z_j}. 
	$$
	For the induction step, write $\Lambda_n$ as a disjoint union of $2^d$ cubes $\Lambda_{n-1}^{(k)} \in \mathbb B_{n-1}$. Let 
	$$
		\omega_k:= \{ B \in \omega \mid B\subset \Lambda_{n-1}^{(k)}\}
	$$
	so that $\omega =\omega_1\cup \cdots\cup \omega_{2^d}$, unless $\omega = \{\Lambda_n\}$ contains an $n$-block. Conditional  on $\Lambda_n\notin \omega$, the projections $\omega_1,\ldots,\omega_{2^d}$ are independent, their distribution is given by the Gibbs measures $\P_{\Lambda_{n-1}^{(k)}}$, $k=1,\ldots, 2^d$. Thus fixing a $j$-block $B\subset \Lambda_{n}$, and assuming without loss of generality $B\subset \Lambda_{n-1}^{(1)}$, we get
	\begin{align*}
		\P_{\Lambda_n}(B\in \omega) & = \P_{\Lambda_n}(B \in \omega_{\Lambda_{n-1}} \mid \Lambda_n\notin \omega) \times \P_{\Lambda_n}(\Lambda_n \notin \omega)  \\
			& = \P_{\Lambda_{n-1}^{(1)}} (B\in \omega_1) \times \frac{1}{1+\widehat z_n} 
			 = \Biggl( \frac{\widehat z_j}{1+\widehat z_j} \prod_{k=j+1}^{n-1}\frac{1}{1+\widehat z_k}\Biggr) \frac{1}{1+\widehat z_n} 
	\end{align*} 
	which is precisely the first part  of~\eqref{eq:firho}. Thus the induction step is complete. For the second part of~\eqref{eq:firho}, set $x_j= \widehat z_j / (1+ \widehat z_j)$ and $y_j = 1 - x_j$. Then 
	$$
		1 = \prod_{j=0}^n (x_j + y_j) = x_n + y_n \prod_{j=0}^{n-1} (x_j + y_j) = x_n + y_n x_{n-1}+\cdots + y_n \cdots y_1 x_0 + y_n\cdots y_0
	$$
	hence 
	$$
		1- \prod_{j=0}^n y_j = \sum_{j=0}^n x_j y_{j+1}\cdots y_n
	$$
	which is the second part of~\eqref{eq:firho}. 
	
	If $\sum_{j=0}^\infty \widehat z_j<\infty$, then the infinite product $\prod_{j=0}^\infty (1+ \widehat z_j)^{-1}$ is strictly smaller than $1$ (because the logarithm is finite). 
	We pass to the limit in~\eqref{eq:firho} and obtain part (a) of the theorem. 
	
	If $\sum_{j=0}^\infty \widehat z_j =\infty$, then $\sum_{j=0}^\infty \log (1+\widehat z_j) =\infty$ and $\lim_{n\to \infty}\prod_{j=0}^n (1+\widehat z_j)^{-1} =1$. Passing to the limit in~\eqref{eq:firho} we see that $\rho_j=0$ for all $j\in \N_0$ and $\sigma =1$. It remains to check that $p = \theta^*$. 
	We already know by Theorem~\ref{thm:pressure} that $p\geq \theta^*$. 
		Suppose by contradiction that $p>\theta^*$. 
	In view of  $p = \sum_{j} \frac{1}{|B_j|} \log (1+ \widehat z_j)< \infty$ we have $\widehat z_j \leq \exp( |B_j| p)$. If $p> \theta^*$, then we would deduce that 
	$$
		\sum_{j=0}^\infty  \widehat z_j = \sum_{j=0}^\infty  z_j \e^{- |B_j| (p+o(1))}
		\leq \sum_{j=0}^\infty \e^{- |B_j|( p - \theta^*+o(1))} <\infty,
	$$
	contradicting the assumption $\sum \widehat z_j =\infty$. Thus $p \leq \theta^*$ and $p = \theta^*$.
\end{proof} 

\noindent Next we turn to the equation of state and the inversion of the density-activity relation in the gas phase. 

\begin{theorem} \label{thm:inversion}
	Assume $\sum_{j=0}^\infty \widehat z_j < \infty$. Then 
	\be \label{eq:vanderwaals}
		p = \sum_{j=0}^\infty \frac{1}{|B_j|} \log \Bigl(1 + \frac{\rho_j}{1- \sum_{k=j}^\infty \rho_k}\Bigr)
	\ee
	and for all $j\in \N_0$
	$$
		z_j = \frac{\rho_j \exp( |B_j| p_{j-1})}{1 - \sum_{k=j}^\infty \rho_k}, \quad p_{j-1} = \sum_{k=0}^{j-1} \frac{1}{|B_k|} \log \Bigl( 1+ \frac{\rho_k}{1 - \sum_{\ell=k}^\infty \rho_\ell} \Bigr).
	$$
	with the convention $p_{-1}=0$. 
\end{theorem}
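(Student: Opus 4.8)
The plan is to reduce everything to the explicit formulas of Theorem~\ref{thm:density}(a) by a bit of tail-product bookkeeping. Introduce
$P_j := \prod_{k=j}^\infty (1+\widehat z_k)^{-1}$, which is well defined and strictly positive because $\sum_k \widehat z_k<\infty$ forces $\sum_k \log(1+\widehat z_k)<\infty$; moreover $P_n\to 1$ as $n\to\infty$, again because the tail of a convergent product tends to $1$. From the recursion $P_j = (1+\widehat z_j)^{-1}P_{j+1}$ one reads off two identities that make the densities telescope: $\rho_j = \tfrac{\widehat z_j}{1+\widehat z_j}P_{j+1} = \widehat z_j P_j$ and $\rho_j = P_{j+1}-P_j$.

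First I would establish the master identity $1-\sum_{k=j}^\infty \rho_k = P_j$ for every $j\in\N_0$. Summing $\rho_k = P_{k+1}-P_k$ over $k=j,\dots,n$ gives $\sum_{k=j}^n \rho_k = P_{n+1}-P_j$, and letting $n\to\infty$ with $P_{n+1}\to 1$ yields the claim; the interchange of limit and sum is trivial since all summands are nonnegative. In particular every denominator appearing in the statement, $1-\sum_{k=j}^\infty\rho_k = P_j$, is strictly positive, so all the fractions are legitimate. Dividing $\rho_j = \widehat z_j P_j$ by $P_j$ then gives the crucial "conditional density" formula
$\widehat z_j = \rho_j\big/\bigl(1-\sum_{k=j}^\infty\rho_k\bigr)$, whence $1+\widehat z_j = 1 + \rho_j\big/\bigl(1-\sum_{k=j}^\infty\rho_k\bigr)$.

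The remaining three assertions are now substitutions. Plugging $1+\widehat z_j = 1+\rho_j/(1-\sum_{k\ge j}\rho_k)$ into the series $p=\sum_j \tfrac1{|B_j|}\log(1+\widehat z_j)$ of Theorem~\ref{thm:pressure} gives the van der Waals equation~\eqref{eq:vanderwaals}; plugging the same identity into the finite-volume formula~\eqref{eq:fipress} with $n=j-1$ gives the stated expression for $p_{j-1}$ (the empty sum for $j=0$ matching the convention $p_{-1}=0$); and rearranging the definition $\widehat z_j = z_j\e^{-|B_j|p_{j-1}}$ for $j\geq 1$ (together with $\widehat z_0=z_0$, which is consistent with $p_{-1}=0$) turns $\widehat z_j = \rho_j/(1-\sum_{k\ge j}\rho_k)$ into $z_j = \rho_j\e^{|B_j|p_{j-1}}\big/\bigl(1-\sum_{k=j}^\infty\rho_k\bigr)$.

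There is no genuine obstacle here: the argument is purely algebraic once Theorem~\ref{thm:density} is in hand, and the only point that deserves an explicit line is the positivity/convergence of the tail products $P_j$ (hence the strict inequality $\sum_{k\ge j}\rho_k<1$), which legitimizes all the denominators and the passage to the limit. The conceptual content worth emphasizing is that the effective activity $\widehat z_j$ is exactly the density of $j$-blocks measured relative to the volume left uncovered by blocks of size $\geq j$ — the same excluded-volume mechanism that produces the van der Waals/Tonks-gas equation of state — and once the bookkeeping with $P_j$ is set up, all the formulas drop out.
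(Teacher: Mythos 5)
Your proof is correct, and it takes a slightly but genuinely different route from the paper's. The paper re-derives the key identity $\widehat z_j = \rho_{j,\Lambda_n}/(1-\sum_{k=j}^n\rho_{k,\Lambda_n})$ at \emph{finite} $n$ by a backward induction over $j\le n$, then passes to the $n\to\infty$ limit using Theorem~\ref{thm:density}(a). You instead work entirely at infinite volume: starting from the explicit formula $\rho_j = \frac{\widehat z_j}{1+\widehat z_j}\prod_{k>j}(1+\widehat z_k)^{-1}$ of Theorem~\ref{thm:density}(a), you introduce the tail products $P_j=\prod_{k\ge j}(1+\widehat z_k)^{-1}$, read off $\rho_j=\widehat z_j P_j = P_{j+1}-P_j$, and telescope to get $P_j = 1-\sum_{k\ge j}\rho_k$ directly. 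This avoids the separate finite-volume induction altogether, at the cost of leaning more heavily on the already-proved infinite-volume density formulas; the paper's version is a bit more self-contained (the finite-volume identity \eqref{eq:fized} is of independent interest and does not require $\sum_j\widehat z_j<\infty$), while yours is shorter and makes the telescoping structure transparent. All the bookkeeping you flag (positivity of the $P_j$, $P_n\to1$, justification of the denominators, and the final three substitutions into Theorem~\ref{thm:pressure}, Eq.~\eqref{eq:fipress}, and the definition of $\widehat z_j$) checks out.
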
 

\noindent The equations are  strikingly similar to the formulas for a one-dimensional system of non-overlapping rods~\cite[Theorem 2.12]{jansen2015tonks}. The equation of state~\eqref{eq:vanderwaals} is a variant of the van-der-Waals equation of state.

\begin{proof}
	We show first that for all
	$n\in \N$ and $j\in \{0,\ldots, n\}$, 
	\be \label{eq:fized}
		\widehat z_j = \frac{\rho_{j,\Lambda_n}}{ 1 - \sum_{k=j}^n \rho_{k,\Lambda_n}},\qquad 
		\alpha_{j,\Lambda_n}:= \prod_{k=j}^n \frac{1}{1+\widehat z_j}= 1- \sum_{k=j}^n \rho_{k,\Lambda_n}. 
	\ee
	The proof is over a finite backward induction over $j\leq n$ at fixed $n$. 	For $j=n$, we have $\rho_{n,\Lambda_n} = \widehat z_n / (1+\widehat z_n)$ by~\eqref{eq:firho} hence $\widehat z_n = \rho_{n,\Lambda_n} / (1- \rho_{n,\Lambda_n})$. Furthermore, $(1+ \widehat z_n)^{-1} = 1- \rho_{n,\Lambda_n}$. For the induction step, note 
	$$
		\rho_{j,\Lambda_n} = \frac{\widehat z_j}{1+ \widehat z_j}  \prod_{k=j+1}^n \frac{1}{1+ \widehat z_k} = \frac{\widehat z_j}{1+ \widehat z_j} \, \alpha_{j+1,\Lambda_n}.
	$$
	It follows that 
	$$
		\widehat z_j = \frac{\rho_{j,\Lambda_n}}{\alpha_{j+1,\Lambda_n}- \rho_{j,\Lambda_n}} = \frac{\rho_{j,\Lambda_n}}{1- \sum_{k=j}^n \rho_{j,\Lambda_n}}
	$$
	and 
	$$
		\alpha_{j,\Lambda_n} = \frac{1}{1+\widehat z_j} \, \alpha_{j+1,\Lambda_n}= \Bigl( 1- \frac{\rho_{j,\Lambda_n}}{\alpha_{j+1,\Lambda_n}}\Bigr) \alpha_{j+1,\Lambda_n} = 1 - \sum_{k=j}^n  \rho_{k,\Lambda_n}. 
	$$
	The induction step is complete. 
	
	If  $\sum_{j=1}^\infty \widehat z_j<\infty$, then we may pass to the limit $n\to \infty$ in~\eqref{eq:fized} with the help of Theorem~\ref{thm:density}(a) and find 
	$$
		 \widehat z_j = \frac{\rho_j}{1 - \sum_{k=j}^\infty \rho_k}.
	$$	
	Theorem~\ref{thm:pressure} and Eq.~\eqref{eq:fipress} in the proof of the theorem yield the formulas for $p$ and $p_n$, the expression for $z_j$ follows as well. 
\end{proof} 

\section{Entropy. Multi-canonical ensemble} \label{sec:entropy}

\subsection{Explicit formula. Effective densities}

Here we compute the entropy in a multi-canonical ensemble, fixing the number of $j$-blocks for each $j$. For $\omega \in \Omega$, let $N_j(\omega)$ be the number of $j$-blocks in $\omega$.  For $n\in \N$, $\Lambda_n\in \mathbb B_n$, and $N_0^{(n)},\ldots, N_n^{(n)} \in \N_0$, let 
$$
	S_{\Lambda_n} (N_0^{(n)},\ldots, N_n^{(n)}) = \log \bigl|\{\omega \in \Omega_\Lambda\mid \forall j:\,  N_j(\omega) = N_j^{(n)} \}\bigr|.
$$
Set
\be \label{eq:entropydef}
	s\bigl( (\rho_j)_{j\in \N_0},\sigma\bigr):= \lim_{n\to \infty} \frac{1}{|\Lambda_n|} \log S_{\Lambda_n} (N_0^{(n)},\ldots, N_n^{(n)}) 
\ee
where the limit is taken along sequences such that $\sum_{j=0}^n |B_j|\, N_j^{(n)}\leq |\Lambda_n|$ and
\be \label{eq:Nconv}
	\frac{1}{|\Lambda_n|} \sum_{j=0}^n |B_j|\, N_j^{(n)}\to \sigma,\quad \forall j\in \N_0:\, \frac{N_j^{(n)}}{|\Lambda_n|}\to \frac{\rho_j}{|B_j|}. 
\ee
Notice that if~\eqref{eq:Nconv} holds true, then necessarily 
$$
	\sum_{j=0}^\infty \rho_j = \sum_{j=0}^\infty \lim_{n\to \infty}\frac{|B_j|\,N_j^{(n)}}{|\Lambda_n|} \leq \lim_{n\to \infty} \sum_{j=0}^\infty \frac{|B_j|\,N_j^{(n)}}{|\Lambda_n|} = \sigma.
$$
In the sequel it is convenient to introduce, given  $(\rho_j)_{j\in \N_0}$ and $\sigma\geq \sum_{k=0}^\infty \rho_j$, the variables 
\be \label{eq:sigmajdef}
	\sigma_\infty:= \sigma - \sum_{k=0}^\infty \rho_k,\quad \sigma_j := \sigma - \sum_{k=0}^{j-1} \rho_k = \sigma_\infty + \sum_{k=j}^\infty \rho_j. 
\ee
The variable $\sigma_\infty$ represents, roughly, the fraction of volume covered by blocks that grow with $n$, while $\sigma_j$ is the fraction of volume covered by blocks of type $k\geq j$. Note that if $\sigma = \sigma_\infty + \sum_{j=0}^\infty \rho_j \leq 1$, then $\rho_j \leq 1 - \sigma_{j+1}$ for all $j\in \N_0$. 

\begin{theorem} \label{thm:entropy}
	Let $\vect \rho\in \R_+^{\N_0}$ and $\sigma \geq 0$ with $\sum_{j=0}^\infty \rho_j \leq \sigma \leq 1$. Then 
the limit~\eqref{eq:entropydef} exists and is given by 
	$$
			s\bigl( (\rho_j)_{j\in \N_0},\sigma\bigr)
			=  - \sum_{j=0}^\infty \frac{1}{|B_j|} \Bigl( \rho_j \log \frac{\rho_j}{1- \sigma_{j+1}} + (1- \sigma_j) \log \frac{1- \sigma_j}{1 - \sigma_{j+1}} \Bigr)
	$$	
	with the convention $0 \log \frac00 =0$. Moreover 
		$$
			0 \leq s(\vect \rho,\sigma) \leq \sum_{j=0}^\infty \frac{1-\sigma_{j+1}}{|B_j|}\log 2 < \infty.
		$$
\end{theorem}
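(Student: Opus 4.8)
The plan is to reduce the computation to an exact enumeration formula for the finite-volume counting function and then pass to the limit by Stirling's estimate, with a summable domination bound handling the sum over $j$.

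\emph{Step 1: exact count.} I would first show
$$
	S_{\Lambda_n}(N_0,\dots,N_n) = \log \prod_{j=0}^{n} \binom{A_j}{N_j},\qquad A_n := 1,\quad A_{j} := 2^d\bigl(A_{j+1}-N_{j+1}\bigr)\ \ (0\le j<n),
$$
the product being read as $0$ when some $N_j>A_j$. This comes from the hierarchical geometry: building $\omega\in\Omega_{\Lambda_n}$ top--down, call a $j$-block $B\subseteq\Lambda_n$ \emph{active} if it is not strictly contained in any block of $\omega$; then $\omega$ is equivalent to the data of which active $j$-blocks are occupied, for $j=n,n-1,\dots,0$ in turn, an unoccupied active $(j+1)$-block being subdivided into its $2^d$ sub-blocks, which are then active $j$-blocks. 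A short bijection check gives the recursion for the number $A_j$ of active $j$-blocks and the product of binomials for the number of configurations with prescribed counts. Solving the recursion yields $A_j = \frac{|\Lambda_n|}{|B_j|}\bigl(1-\sum_{k=j+1}^n \frac{|B_k|N_k}{|\Lambda_n|}\bigr)$; in particular $0\le A_j\le |\Lambda_n|/|B_j|$, and $N_j\le A_j$ for all $j$ once $\sum_{k=0}^n|B_k|N_k\le|\Lambda_n|$, so along the sequences considered in \eqref{eq:entropydef} the count is positive.

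\emph{Step 2: termwise asymptotics.} With $H(x)=-x\log x-(1-x)\log(1-x)$, the standard bounds $\frac{1}{A+1}e^{AH(N/A)}\le\binom{A}{N}\le e^{AH(N/A)}$ give $\log\binom{A_j}{N_j^{(n)}} = A_j H(N_j^{(n)}/A_j) + O(\log A_j)$ uniformly (the term vanishing exactly when $A_j=0$); since $\log A_j \le \log|\Lambda_n| = O(n)$, the aggregate error $\frac1{|\Lambda_n|}\sum_{j=0}^n O(\log A_j)$ tends to $0$. By \eqref{eq:Nconv}, for each fixed $j$ one has $\frac{|B_j|A_j}{|\Lambda_n|}\to 1-\sigma_{j+1}$ and $\frac{|B_j|N_j^{(n)}}{|\Lambda_n|}\to\rho_j$, hence
$$
	\frac1{|\Lambda_n|}\log\binom{A_j}{N_j^{(n)}} \longrightarrow \frac{1-\sigma_{j+1}}{|B_j|}\,H\!\Bigl(\frac{\rho_j}{1-\sigma_{j+1}}\Bigr),
$$
the right side understood as $0$ when $1-\sigma_{j+1}=0$ (then $\rho_j\le 1-\sigma_{j+1}=0$ forces $\rho_j=0$, and the left side is squeezed between $0$ and $\frac{A_j\log 2}{|\Lambda_n|}\to 0$). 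Since $0\le\frac1{|\Lambda_n|}\log\binom{A_j}{N_j^{(n)}}\le\frac{A_j\log 2}{|\Lambda_n|}\le\frac{\log 2}{|B_j|}$ with $\sum_j\frac{\log2}{|B_j|}<\infty$, dominated convergence lets me exchange $\lim_n$ and $\sum_{j=0}^n$, giving $s(\vect\rho,\sigma)=\sum_{j=0}^\infty\frac{1-\sigma_{j+1}}{|B_j|}H\bigl(\frac{\rho_j}{1-\sigma_{j+1}}\bigr)$.

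\emph{Step 3: rewriting and bounds.} Using $\sigma_j=\sigma_{j+1}+\rho_j$, i.e.\ $\rho_j+(1-\sigma_j)=1-\sigma_{j+1}$, one gets $(1-\sigma_{j+1})H\bigl(\frac{\rho_j}{1-\sigma_{j+1}}\bigr) = -\rho_j\log\frac{\rho_j}{1-\sigma_{j+1}} - (1-\sigma_j)\log\frac{1-\sigma_j}{1-\sigma_{j+1}}$, which is the asserted formula, the convention $0\log\frac00=0$ matching the case $1-\sigma_{j+1}=0$. Finally $0\le H\le\log 2$ gives $0\le s\le\sum_j\frac{1-\sigma_{j+1}}{|B_j|}\log 2 \le \log 2\sum_j 2^{-dj}<\infty$. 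The only genuinely delicate point is Step~1 --- verifying that the active-block bookkeeping really is a bijection and that the recursion stays non-negative under the volume constraint; Steps~2 and~3 are routine once the domination bound $\log 2/|B_j|$ is in place.
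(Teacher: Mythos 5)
Your proof is correct and follows essentially the same route as the paper: a top-down construction of configurations giving $S_{\Lambda_n}=\sum_j\log\binom{A_j}{N_j^{(n)}}$ (your active-block recursion $A_j=2^d(A_{j+1}-N_{j+1})$ just makes explicit the count the paper asserts, namely $A_j=(|\Lambda_n|-\sum_{k>j}|B_k|N_k)/|B_j|$), followed by Stirling asymptotics and a dominated-convergence exchange of $\lim_n$ with $\sum_j$ using a summable $O(1/|B_j|)$ bound. The only cosmetic differences are your use of the explicit two-sided entropy bound for $\binom{A}{N}$ instead of the bare Stirling expansion, and your slightly cruder but still summable dominating sequence $\log 2/|B_j|$ in place of the paper's $(1-\sigma_{j+1})\log 2/|B_j|$.
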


\noindent An equivalent expression in terms of effective activities $\widehat \rho_j$ is given in Eq.~\eqref{eq:hatentropy} below. Notice that the entropy vanishes if $\rho_j = 0$ for all $j\in \N_0$---only small blocks (i.e., blocks whose size does not scale with the volume) contribute to the entropy. 

\begin{proof}
	Configurations can be constructed by placing first the biggest block (if present), i.e., $n$-blocks, then blocks of type $n-1$, etc. 
The entropy equals
 	 $$
	 	S_{\Lambda_n} (N_0^{(n)},\ldots, N_n^{(n)}) = \sum_{j=0}^n \log \binom{ (|\Lambda_n| - \sum_{k=j+1}^n |B_k|\, N_k^{(n)})/|B_j|}{N_j^{(n)}}.
	 $$
	 Indeed, having chosen the blocks of $\omega$ of type $k\geq j+1$, there are $(|\Lambda_n| - N_n^{(n)} |B_n| - \cdots - N_{j+1}^{(n)}|B_j| )/|B_j|$ available $j$-blocks to choose from for the placement of the next $N_j^{(n)}$  blocks of type $j$.  	
	 
	 Set $\rho_j^{(n)} := N_j^{(n)}|B_j| / |\Lambda_n|$ and $\sigma_j^{(n)}:= \sum_{k=j}^n \rho_k^{(n)}$. 
	Clearly $\rho_j^{(n)}\to \rho_j$ and $\sigma_j^{(n)}\to \sigma$ for  all $j\in \N_0$.  Stirling's formula and the resulting approximation $\log \binom{m}{k} = - k \log\frac{k}{m} - (m - k) \log (1- \frac k m ) + O(\log k ) + O(\log(m-k)) + O(\log m)$ yield
	$$
	\frac{1}{|\Lambda_n|}	S_{\Lambda_n} (N_1^{(n)},\ldots, N_n^{(n)})   = - \sum_{j=0}^n \frac{1}{|B_j|} \Bigl( \rho_j^{(n)} \log \frac{\rho_j^{(n)}}{1- \sigma_{j+1}^{(n)}} + (1- \sigma_j^{(n)}) \log \frac{1- \sigma_j^{(n)}}{1 - \sigma_{j+1}^{(n)}} \Bigr) + o(1).
	$$
	 Summation and limits can be exchanged because each summand is bounded in absolute value by $\frac{1- \sigma_{j+1}}{|B_j|} (\log 2)$ (see Eq.~\eqref{eq:hatentropy} below) and $\sum_{j} \frac{1}{|B_j|}<\infty$. The proposition follows. 
\end{proof} 

\noindent The proof of Theorem~\ref{thm:entropy} suggests to work with effective densities. Set
$$
	 \widehat \rho_j := \frac{\rho_j}{1- \sigma_{j+1}} = \frac{\rho_j}{ 1- \sum_{k=j+1}^\infty \rho_k - \sigma_\infty}
$$
with $\sigma_j$ and $\sigma_\infty$ defined in~\eqref{eq:sigmajdef}. 
Thus $\widehat \rho_j$ takes into account the volume excluded by cubes of type $k\geq j+1$. The entropy becomes 
\be \label{eq:hatentropy} 
	s\bigl( (\rho_j)_{j\in \N_0},\sigma\bigr)
	= - \sum_{j=0}^\infty \frac{1- \sigma_{j+1}}{|B_j|} \Bigl( \widehat \rho_j \log \widehat \rho_j +(1-\widehat \rho_j )\log (1-\widehat \rho_j)\Bigr).
\ee
The entropy for the ideal mixture, where cubes may overlap, is instead given by 
\be \label{eq:idealentropy}
	s^\mathrm{Ber}\bigl( (\rho_j)_{j\in \N_0},\sigma\bigr)
	= - \sum_{j=0}^\infty \frac{1}{|B_j|} \Bigl(  \rho_j \log  \rho_j +(1-\rho_j )\log (1- \rho_j)\Bigr).
\ee
The expressions for the entropy are again very similar to each other, just as for the pressure. The similarity in equations can be pushed a bit further. In the multi-canonical ensemble we define the chemical potential of $j$-blocks by 
\be \label{eq:chem1}
	\mu_j \bigl( (\rho_j)_{j\in \N_0},\sigma_\infty \bigr):=- |B_j| \frac{\partial }{\partial \rho_j}s\Bigl( (\rho_j)_{j\in \N_0},\sigma_\infty + \sum_{j=0}^\infty \rho_j \Bigr).
\ee
The chemical potential can be thought of as a derivative with respect to $\nu_j = \rho_j / |B_j|$, which is the expected number of $j$-blocks per unit volume (remember~\eqref{eq:nujrhoj}). The derivative is taken at constant $\sigma_\infty$ rather than constant $\sigma$. We also define
\be \label{eq:chem2}
	\mu_\infty \bigl( (\rho_j)_{j\in \N_0},\sigma_\infty \bigr):=- \frac{\partial }{\partial \sigma_\infty} s\Bigl( (\rho_j)_{j\in \N_0},\sigma_\infty + \sum_{j=0}^\infty \rho_j \Bigr).
\ee
Explicit computations yield 
\be \label{eq:chem3}
	\mu_j = \log \frac{\widehat \rho_j}{1- \widehat \rho_j} - |B_j| \sum_{k=0}^{j-1}\frac{1}{|B_k|} \log(1- \widehat \rho_k) , \qquad \mu_\infty = - \sum_{j=0}^\infty \frac{1}{|B_j|} \log (1- \widehat \rho_j). 
\ee
For the Bernoulli mixture, in contrast, 
$$
	\mu_j^\mathrm{Ber} = \log \frac{ \rho_j}{1- \rho_j},\qquad \mu_\infty ^\mathrm{Ber} = 0. 
$$
The chemical potentials coincide up to error terms of order $O(\sum_j \rho_j) + O(\sigma_\infty) = O(\sigma)$.

\subsection{Analyticity. Multi-species virial expansion} 

Before we turn to a variational representation of the pressure, we collect a few analytic properties of the entropy that are of intrinsic interest. Consider the complex Banach space $\ell^1(\N_0)\times \C$ with norm $||(\vect \rho,\sigma_\infty)|| = \sum_{j=0}^\infty |\rho_j| + |\sigma_\infty|$ and the open unit ball 
$B(0,1)= \{ (\vect \rho,\sigma_\infty):\ ||(\vect \rho,\sigma_\infty)|| <1 \}$. 
Define $\sigma_j=\sigma_\infty+ \sum_{k=j}^\infty \rho_k$ and 
\be \label{eq:taylor}
	\Phi\bigl( \vect \rho,\sigma_\infty) :=\sum_{m=2}^\infty \frac{1}{m(m-1)} \sum_{j=0}^\infty \frac{1}{|B_j|} \bigl( \sigma_j^{m} - \sigma_{j+1}^m\bigr).
\ee

\begin{prop} \label{prop:analyticity} \hfill
	\begin{enumerate} 
		\item [(a)] 	The map $\Phi$ is holomorphic in the open unit ball and the  Taylor series~\eqref{eq:taylor} converges uniformly in every open ball $B(0,r)$ of radius $r<1$. 
		\item [(b)] The entropy satisfies 
		$$
			s(\vect \rho,\sigma_\infty) = - \sum_{j=0}^\infty \frac{1}{|B_j|} \rho_j(\log \rho_j - 1) - \Phi(\vect \rho,\sigma_\infty)
		$$
		for all $(\vect\rho,\sigma_\infty) \in \R_+^{\N_0}\times \R_+$ with $\sum_{j=0}^\infty \rho_j + \sigma_\infty < 1$. 
	\end{enumerate} 
\end{prop}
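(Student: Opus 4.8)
The plan is to treat (a) with a Weierstrass M-test on the complex Banach space $\ell^1(\N_0)\times\C$ and to derive (b) from the explicit formula of Theorem~\ref{thm:entropy}, reducing it term by term in $j$ to the scalar power-series identity
$$
	(1-x)\log(1-x)+x=\sum_{m=2}^\infty\frac{x^m}{m(m-1)}\qquad(x\in[0,1)).
$$
This identity I would prove by writing $\tfrac1{m(m-1)}=\tfrac1{m-1}-\tfrac1m$ and summing the two resulting series, $\sum_{m\ge2}\frac{x^m}{m-1}=-x\log(1-x)$ and $\sum_{m\ge2}\frac{x^m}{m}=-\log(1-x)-x$. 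Throughout I write $\sigma_j=\sigma_\infty+\sum_{k\ge j}\rho_k$ and note that $(\vect\rho,\sigma_\infty)\mapsto\sigma_j$ is a continuous linear functional of norm at most $1$, so that $\rho_j=\sigma_j-\sigma_{j+1}$ vanishes precisely when $\sigma_j=\sigma_{j+1}$, and each $\sigma_j^m-\sigma_{j+1}^m$ is a bounded homogeneous polynomial of degree $m$.

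For part (a): on a ball $B(0,r)$ with $r<1$ one has $|\sigma_j|\le\|(\vect\rho,\sigma_\infty)\|<r$ and $\sum_j|\rho_j|\le r$, and the telescoping identity $\sigma_j^m-\sigma_{j+1}^m=\rho_j\sum_{i=0}^{m-1}\sigma_j^{m-1-i}\sigma_{j+1}^i$ gives
$$
	\sum_{j=0}^\infty\frac1{|B_j|}\bigl|\sigma_j^m-\sigma_{j+1}^m\bigr|\le m\,r^{m-1}\sum_{j=0}^\infty|\rho_j|\le m\,r^m,
$$
hence $\sum_{m\ge2}\frac1{m(m-1)}\sum_{j}\frac1{|B_j|}|\sigma_j^m-\sigma_{j+1}^m|\le\sum_{m\ge2}\frac{r^m}{m-1}=-r\log(1-r)<\infty$. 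Thus the double series~\eqref{eq:taylor} converges absolutely and uniformly on $B(0,r)$ for every $r<1$; since every partial sum is a polynomial in the functionals $\sigma_j$, hence holomorphic, and a locally uniform limit of holomorphic maps is holomorphic, $\Phi$ is holomorphic on the open unit ball.

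For part (b): fix real $(\vect\rho,\sigma_\infty)$ in the positive orthant with $\sum_j\rho_j+\sigma_\infty<1$, so that $0\le\sigma_{j+1}\le\sigma_j\le\sigma<1$ and $\rho_j\in[0,1)$ and all logarithms below make sense (with $0\log\tfrac00=0$). Writing $h(x):=(1-x)\log(1-x)+x$, I would expand the $j$-th summand of the entropy formula of Theorem~\ref{thm:entropy}, cancel $-\rho_j\log\rho_j$, and use $\rho_j+(1-\sigma_j)=1-\sigma_{j+1}$ together with $\rho_j=\sigma_j-\sigma_{j+1}$ to rewrite it as $-\tfrac1{|B_j|}\rho_j(\log\rho_j-1)-\tfrac1{|B_j|}\bigl(h(\sigma_j)-h(\sigma_{j+1})\bigr)$. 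The scalar identity then turns $h(\sigma_j)-h(\sigma_{j+1})$ into $\sum_{m\ge2}\tfrac{\sigma_j^m-\sigma_{j+1}^m}{m(m-1)}$, i.e.\ into the $j$-th summand of $\Phi$. Summing over $j$ yields the asserted formula, the rearrangement being legitimate because $\sum_j\tfrac1{|B_j|}\rho_j(\log\rho_j-1)$ converges absolutely ($|\rho_j(\log\rho_j-1)|\le1+e^{-1}$ and $\sum_j\tfrac1{|B_j|}<\infty$), while $\Phi$ converges by part (a) and $s$ by Theorem~\ref{thm:entropy}.

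I expect the only real subtlety to be the bookkeeping in (b): checking that the three doubly-indexed sums may be split and recombined, and that the degenerate case $\rho_j=0$ — equivalently $\sigma_j=\sigma_{j+1}$ — is consistent with the $0\log\tfrac00=0$ convention on both sides, where every term vanishes. Part (a) is routine once the displayed M-test bound is in place.
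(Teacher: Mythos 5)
Your proof is correct and follows essentially the same route as the paper: rewrite the $j$-th summand of the entropy as $-\tfrac{1}{|B_j|}\rho_j(\log\rho_j-1)$ plus a telescoping difference, expand that difference via the scalar power series for $(1-x)\log(1-x)$, and establish convergence of $\Phi$ with the same telescoping bound $|\sigma_j^m-\sigma_{j+1}^m|\le m|\rho_j|\,\|(\vect\rho,\sigma_\infty)\|^{m-1}$. The only cosmetic differences are that the paper derives the scalar identity by integrating $-\log(1-y)$ rather than by partial fractions, and normalizes it as $(1-x)(\log(1-x)-1)=-1+\sum_{m\ge2}\frac{x^m}{m(m-1)}$ rather than via your $h(x)$.
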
 

\noindent A short overview and list of references on holomorphic functions in Banach spaces is provided in~\cite[Appendix B]{jansen-kuna-tsagkaro2019}.

\begin{proof}
 We compute, using $\sigma_j = \rho_j + \sigma_{j+1}$, 
	\begin{align*}
		& \rho_j\log \frac{\rho_j}{1-\sigma_{j+1}} + (1- \sigma_j)\log \frac{1- \sigma_j}{1-\sigma_{j+1}} \\
		&\qquad  = \rho_j \log\rho_j + (1- \sigma_j  ) \log (1- \sigma_j) - (1- \sigma_{j+1}) \log (1- \sigma_{j+1})\\
		&\qquad =  \rho_j\bigl( \log\rho_j-1\bigr) + (1- \sigma_j  ) \Bigl(\log (1- \sigma_j) -1\Bigr)- (1- \sigma_{j+1}) \Bigl(\log (1- \sigma_{j+1})-1\Bigr). 
	\end{align*} 
	Because of 
	$$
		(1- x)\Bigl( \log (1-x) - 1\Bigr) = -1 - \int_0^x \log(1- y) \dd y = - 1+  \sum_{m=2}^\infty \frac{x^m}{m(m-1)} \qquad (|x|<1),
	$$
	we deduce that the $j$-th summand in the formula for the entropy from Theorem~\ref{thm:entropy} is given by 
	\be \label{eq:ente}
		- \frac{1}{|B_j|} \rho_j(\log \rho_j-1) -  \frac{1}{|B_j|} \sum_{m=2}^\infty \frac{1}{m(m-1)} (\sigma_j^{m}- \sigma_{j+1}^m). 
	\ee
	In order to split the series over $j$ into two contributions corresponding to the two terms in the preceding sum, we need to check that the two sums are absolutely convergent. For the first term, we note that $\sup_{x\in [0,1]} |x(\log x - 1)|=1$ hence 
	$$
		\sum_{j=0}^\infty \frac{1}{|B_j|} \bigl|  \rho_j (\log \rho_j -1 )\bigr| \leq \sum_{j=0}^\infty \frac{1}{|B_j|}<\infty. 
	$$
	For the convergence of $\Phi$, corresponding to the second term in~\eqref{eq:ente} set 
	$$
		P_m(\vect \rho,\sigma_\infty):= \frac{1}{m(m-1)} \sum_{j=0}^\infty \frac{1}{|B_j|} \bigl( \sigma_j^{m} - \sigma_{j+1}^m\bigr).
	$$	
	Because of 
	$$
		\bigl|\sigma_j^{m} - \sigma_{j+1}^m \bigr|= \Biggl| \rho_j \sum_{k=0}^{m-1} \sigma_j^k \sigma_{j+1}^{m-1-k} \Biggr| \leq m |\rho_j|\, ||(\vect \rho,\sigma)||^{m-1}
	$$
	and $|B_j|\geq 1$, 	we have 	
	$$
		\bigl|P_m(\vect \rho,\sigma_\infty)\bigr| \leq \frac{1}{m-1}\Bigl( \sum_{j=0}^\infty \frac{1}{|B_j|} |\rho_j|\Bigr) ||(\vect \rho,\sigma_\infty)||^{m-1}\leq ||(\vect \rho,\sigma_\infty)||^{m}< \infty. 
	$$
	It follows that  $P_m$ is absolutely convergent in $B(0,1)$ and defines a continuous $m$-homogeneous polynomial with norm 
	$$
		||P_m|| =\sup_{||(\vect \rho,\sigma_\infty)||\leq 1} |P_m(\vect \rho,\sigma_\infty)| \leq 1,
	$$
	moreover $\Phi(\vect \rho,\sigma_\infty) = \sum_{m=2}^\infty P_m(\vect \rho,\sigma_\infty)$ converges uniformly in $||(\vect \rho,\sigma_\infty)||\leq r$, for every $r\in (0,1)$. This proves the analyticity in the open unit ball. The formula for the entropy follows  from~\eqref{eq:ente}. 
\end{proof} 	

\subsection{Variational representation for the pressure}

\begin{prop} \label{prop:pressure-legendre}
	Assume that $\lim_{j\to \infty} \frac{1}{|B_j|} \log z_j=\theta^*$. Then the pressure has the variational representation
	$$
	p \bigl( (z_j)_{j\in \N_0}\bigr) = \sup\Biggl\{ \sum_{j=0}^\infty \frac{\rho_j}{|B_j|}\log z_j +\Bigl( \sigma - \sum_{j=0}^\infty \rho_j\Bigr) \theta^* + s\bigl( (\rho_j)_{j\in \N_0}, \sigma)  \, \Bigr|\, \sum_{j=0}^\infty \rho_j \leq \sigma \leq 1\Biggr\}.
	$$
	In addition:
	\begin{enumerate}
		\item [(a)] If $\sum_{j=0}^\infty \widehat z_j<\infty$ and $p((z_j)_{j\in \N_0})>\theta^*$, then the tuple $(\vect \rho(\vect z) ,\sigma(\vect z))$ given in Theorem~\ref{thm:density}(a) is the unique maximizer. It satisfies $\sigma_\infty =0$ and $\sigma<1$. 
		\item [(b)] If $\sum_{j=0}^\infty \widehat z_j<\infty$ and $p((z_j)_{j\in \N_0})= \theta^*$, then the set of maximizers is given by the convex combinations of $(\vect \rho(z),\sigma(\vect z))$ from Theorem~\ref{thm:density}(a) and  $(\vect 0, 1)$.
		\item [(c)] If $\sum_{j=0}^\infty \widehat z_j=\infty$, then $p((z_j)_{j\in \N_0})= \theta^*$ and the unique maximizer is the tuple $(\vect 0, 1)$.
	\end{enumerate}
\end{prop}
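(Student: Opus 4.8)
Write $F(\vect\rho,\sigma):=\sum_{j=0}^\infty\frac{\rho_j}{|B_j|}\log z_j+\sigma_\infty\theta^*+s(\vect\rho,\sigma)$ for the functional to be maximized over $\mathcal F:=\{(\vect\rho,\sigma)\in\R_+^{\N_0}\times\R:\sum_j\rho_j\le\sigma\le1\}$, with $\sigma_\infty,\sigma_j$ as in~\eqref{eq:sigmajdef} and $\widehat\rho_j:=\rho_j/(1-\sigma_{j+1})$; if $\rho_j>0$ for some $j$ with $z_j=0$ then $F=-\infty$, so we may assume the contrary (conventions $0\log0=0$, $0\cdot(-\infty)=0$). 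The crux is the exact identity, valid on all of $\mathcal F$,
\begin{equation}\label{eq:vp-identity}
	p-F(\vect\rho,\sigma)=\sum_{j=0}^\infty\frac{1-\sigma_{j+1}}{|B_j|}\,D\Bigl(\widehat\rho_j\,\Big\|\,\frac{\widehat z_j}{1+\widehat z_j}\Bigr)+\sigma_\infty(p-\theta^*),\qquad D(a\|b):=a\log\tfrac ab+(1-a)\log\tfrac{1-a}{1-b}.
\end{equation}
I would prove~\eqref{eq:vp-identity} by inserting $p=\sum_j|B_j|^{-1}\log(1+\widehat z_j)$ and $\log z_j=\log\widehat z_j+|B_j|p_{j-1}$ (Theorem~\ref{thm:pressure} and~\eqref{eq:fipress}, with $p_{-1}:=0$) together with the entropy in the form~\eqref{eq:hatentropy}; reorganizing the double sum $\sum_j\rho_j p_{j-1}=\sum_j\rho_j\sum_{k<j}|B_k|^{-1}\log(1+\widehat z_k)$ by Tonelli into $\sum_k|B_k|^{-1}\log(1+\widehat z_k)(\sigma_{k+1}-\sigma_\infty)$ (using $\sum_{j>k}\rho_j=\sigma_{k+1}-\sigma_\infty$); and collecting terms via $\rho_j=(1-\sigma_{j+1})\widehat\rho_j$ and $\sum_j|B_j|^{-1}\log(1+\widehat z_j)=p$. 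Every series in sight is dominated by $p<\infty$ or by $\sum_j|B_j|^{-1}\log2<\infty$, so the rearrangements are legitimate.

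Granting~\eqref{eq:vp-identity}, the variational formula is immediate: on the right both terms are nonnegative, because $D(\cdot\|\cdot)\ge0$ by Gibbs' inequality, $1-\sigma_{j+1}\ge0$ since $\sigma\le1$, $\sigma_\infty\ge0$, and $p\ge\theta^*$ by Theorem~\ref{thm:pressure}; hence $F\le p$ on $\mathcal F$. For the matching lower bound I exhibit a point where $F=p$. If $\sum_j\widehat z_j<\infty$, the tuple $(\vect\rho(\vect z),\sigma(\vect z))$ of Theorem~\ref{thm:density}(a) has $\sigma_\infty=0$ and $\widehat\rho_j=\widehat z_j/(1+\widehat z_j)$ for all $j$, so~\eqref{eq:vp-identity} gives $F=p$. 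If $\sum_j\widehat z_j=\infty$, then $p=\theta^*$ by Theorem~\ref{thm:density}(b) and $F(\vect0,1)=\theta^*+s(\vect0,1)=\theta^*=p$, since the entropy vanishes at $\vect\rho=\vect0$. Thus $\sup_{\mathcal F}F=p$ in all cases.

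The maximizers are exactly the points where the right-hand side of~\eqref{eq:vp-identity} vanishes, i.e.\ where $\sigma_\infty(p-\theta^*)=0$ and, for each $j$, $\sigma_{j+1}=1$ or $\widehat\rho_j=\widehat z_j/(1+\widehat z_j)$. The remaining analysis rests on the recursion $1-\sigma_j=(1-\widehat\rho_j)(1-\sigma_{j+1})$ (immediate from the definitions), on $1-\sigma_N\to1-\sigma_\infty$ as $N\to\infty$, and on the fact that $\prod_{k\ge j}(1+\widehat z_k)^{-1}>0$ iff $\sum_k\widehat z_k<\infty$. In case (a) the condition $p>\theta^*$ forces $\sigma_\infty=0$; no equality $\sigma_{j+1}=1$ can occur, for propagating the recursion from $N=\infty$ along a tail where $\widehat\rho_k=\widehat z_k/(1+\widehat z_k)$ would give $1-\sigma_j=\prod_{k\ge j}(1+\widehat z_k)^{-1}>0$, while the alternative $\sigma_j\equiv1$, i.e.\ $(\vect0,1)$, has $F=\theta^*<p$; hence $\widehat\rho_j=\widehat z_j/(1+\widehat z_j)$ for all $j$, which with $\sigma_\infty=0$ pins down $(\vect\rho,\sigma)$ uniquely, namely the Theorem~\ref{thm:density}(a) tuple, for which $\sigma_\infty=0$ and $\sigma<1$. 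In case (b), $\sigma_\infty\in[0,1]$ is free: if $\sigma_j<1$ for all $j$ the recursion yields $1-\sigma_j=(1-\sigma_\infty)\prod_{k\ge j}(1+\widehat z_k)^{-1}$, hence $(\vect\rho,\sigma)=(1-\sigma_\infty)(\vect\rho(\vect z),\sigma(\vect z))+\sigma_\infty(\vect0,1)$, while $\sigma_{j+1}=1$ for some $j$ forces $\sigma_\infty=1$, i.e.\ the endpoint $(\vect0,1)$; conversely each such convex combination makes~\eqref{eq:vp-identity} vanish, so the maximizer set is exactly this segment. In case (c), $\prod_{k\ge j}(1+\widehat z_k)^{-1}=0$ for all $j$; if $J:=\{j:\sigma_{j+1}<1\}$ were nonempty (an up-set in $j$) then on $J$ one has $\widehat\rho_j=\widehat z_j/(1+\widehat z_j)$, and propagating the recursion forces $1-\sigma_j=0$ at the least element of $J$, and then, dividing by $1-\widehat\rho_j>0$, also at the next element of $J$ — a contradiction; so $J=\varnothing$, $\sigma_{j+1}=1$ for all $j$, which forces $\rho_j=0$ for all $j$ and $\sigma=1$, leaving $(\vect0,1)$ as the unique maximizer.

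The main obstacle is the identity~\eqref{eq:vp-identity}: turning the entropy plus the two linear terms into a single per-scale relative entropy plus the remainder $\sigma_\infty(p-\theta^*)$ needs careful handling of the infinite sums, in particular the Tonelli rearrangement of $\sum_j\rho_j p_{j-1}$. Once it is in place, the positivity (Gibbs inequality) is trivial and the maximizer analysis is elementary, the only slightly delicate point being the exclusion of ``partial close-packing'' ($\sigma_{j+1}=1$ for some but not all $j$) treated above. (An alternative, less self-contained route: $s$ is concave by a standard superadditivity argument, so $F$ is concave on $\mathcal F$ and its maximizers satisfy the Kuhn--Tucker conditions, which by~\eqref{eq:chem3} read $\mu_j=\log z_j$ and $\mu_\infty\ge\theta^*$ with equality if $\sigma_\infty>0$; solving these recovers $\widehat\rho_j=\widehat z_j/(1+\widehat z_j)$ and $\mu_\infty=p$.)
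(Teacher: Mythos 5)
Your proof is correct but proceeds by a genuinely different route from the paper's. The paper establishes the variational formula from first principles: the lower bound $\liminf p_{\Lambda_n}\ge F(\vect\rho,\sigma)$ comes from testing $\log\Xi_{\Lambda_n}\ge\sum_j N_j^{(n)}\log z_j+S_{\Lambda_n}$ against particle-count sequences, and the upper bound uses the Hardy--Ramanujan bound $|\mathcal I_n|=\exp(o(|\Lambda_n|))$ on the number of admissible count tuples together with a compactness argument. You instead verify the closed algebraic identity $p-F=\sum_j\frac{1-\sigma_{j+1}}{|B_j|}D(\widehat\rho_j\,\|\,\widehat z_j/(1+\widehat z_j))+\sigma_\infty(p-\theta^*)$ globally on $\mathcal F$, which yields $F\le p$ immediately from Gibbs' inequality and $p\ge\theta^*$; the formula is the combination of the paper's Lemma~\ref{lem:shannon} (the $\sigma_\infty=0$ slice; note the paper's equation~\eqref{eq:relative-entropy} has a stray minus sign, as your computation confirms) with Lemma~\ref{lem:convex-combi} (the rescaling to primed variables, using $\widehat\rho'_j=\widehat\rho_j$ and $1-\sigma_{j+1}=(1-\sigma_\infty)(1-\sigma'_{j+1})$). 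What your route buys is a cleaner logical structure: the variational formula and the maximizer classification come from a single nonnegativity statement, with no need for Hardy--Ramanujan or the abstract compactness step, and the ``partial close-packing'' pathology $\sigma_{j+1}=1$ is excluded by the same recursion $1-\sigma_j=(1-\widehat\rho_j)(1-\sigma_{j+1})$ that the paper hides inside Lemma~\ref{lem:equiv}. What you give up is that the paper's proof of the formula itself does not presuppose the explicit series for $p$ from Theorem~\ref{thm:pressure}; yours is an a posteriori verification that leans on Theorems~\ref{thm:pressure},~\ref{thm:density} and~\ref{thm:entropy} simultaneously, so the identity must be checked carefully against the degenerate cases $\rho_j=0$, $z_j=0$, $\sigma_{j+1}=1$, which you do address via the stated conventions. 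For parts (a)--(c) both arguments are substantively the same.
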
 

\noindent We leave as an open problem whether the proposition extends to  activities with $\liminf_{j\to \infty} \frac{1}{|B_j|} \log z_j<\limsup_{j\to \infty} \frac{1}{|B_j|} \log z_j=\theta^*$. The cases (a), (b), and (c) correspond to a gas phase, coexistence region, and condensed phase, respectively.

\begin{proof} [Proof of the variational formula in Proposition~\ref{prop:pressure-legendre}]
	Let $(\rho_j)_{j\in \N_0}\in\R_+^{\N_0}$ and $\sigma\in [0,1]$ with $\sum_{j=0}^\infty \rho_j \leq \sigma$. Then there exist sequences $N_j^{(n)}$ of integers satisfying~\eqref{eq:Nconv}. Clearly 
	\be \label{eq:pe1}
		\log \Xi_{\Lambda_n} \geq \sum_{j=0}^n N_j^{(n)} \log z_j + S_{\Lambda_n}\bigl(N_1^{(n)},\ldots,N_n^{(n)}\bigr).
	\ee
	The second term, divided by $|\Lambda_n|$, converges to $s((\rho_j)_{j\in \N_0},\sigma)$ by Theorem~\ref{thm:entropy}. For the first term, we set $z'_j:=z_j \exp( - |B_j|\theta^*)$ and we write for $n\geq k$ 
	\begin{align*}
		& \Biggl|\sum_{j=0}^n \frac{N_j^{(n)}}{|\Lambda_n|} \log z_j - \sum_{j=0}^\infty \frac{\rho_j}{|B_j|} \log z_j - \Bigl( \sigma - \sum_{j=0}^\infty \rho_j\Bigr) \theta^*\Biggr| \\
		& \qquad \leq \Biggl|\sum_{j=0}^n \frac{N_j^{(n)}}{|\Lambda_n|} \log z'_j - \sum_{j=0}^\infty \frac{\rho_j}{|B_j|} \log z'_j - \Bigl( \sigma - \sum_{j=0}^n \frac{N_j^{(n)}|B_j|}{|\Lambda_n|} \Bigr) \theta^*\Biggr| \\
		& \qquad \leq \sum_{j=0}^k \Bigl| \frac{N_j^{(n)}}{|\Lambda_n|}  - \frac{\rho_j}{|B_j|} \Bigr| |\log z'_j| + 2 \max_{j\geq k+1}\Bigl| \frac{1}{|B_j|}\log z'_j\Bigr| + |\theta^*|\,\Bigl|\sigma - \sum_{j=0}^n \frac{N_j^{(n)}|B_j|}{|\Lambda_n|}\Bigr|.
	\end{align*}
	Taking first the limit $n\to \infty$ and then $k\to \infty$, we see that overall the expression goes to zero. Turning back to~\eqref{eq:pe1} we get 
	$$
		\liminf_{n\to \infty} p_{\Lambda_n}\geq \sum_{j=0}^\infty \frac{\rho_j}{|B_j|}\log z_j +\Bigl( \sigma - \sum_{j=0}^\infty \rho_j\Bigr) \theta^* + s\bigl( (\rho_j)_{j\in \N_0}, \sigma).
	$$
	This holds true for all $(\rho_j)_{j\in \N_0}$ and $\sigma\in [0,1]$ with $\sum_{j=0}^\infty \rho_j \leq \sigma$, accordingly the limit inferior of the pressure is bounded from below by a supremum. 	
	
	For the upper bound, let $\mathcal I_n \subset \N_0^n$ be the set of vectors $(N_1^{(n)},\ldots,N_n^{(n)})$ with
	$\sum_{j=0}^n |B_j|\, N_j^{(n)}\leq |\Lambda_n|$. Every such vector is uniquely identified with an integer partition of $|\Lambda_n|$, therefore by the Hardy-Ramanujan formula 
	\be \label{eq:hardy-rama}
		|\mathcal I_n| \leq \exp\Bigl( o\bigl(|\Lambda_n|\bigr)\Bigr).
	\ee
	Clearly 
	\be \label{eq:xiupper}
		\Xi_{\Lambda_n}\leq |\mathcal I_n|\, \max_{(N_1^{(n)},\ldots,N_n^{(n)})\in \mathcal I_n}\exp \Biggl(\sum_{j=0}^n N_j^{(n)} \log z_j + S_{\Lambda_n}\bigl(N_1^{(n)},\ldots,N_n^{(n)}\bigr)\Biggr).
	\ee
	Consider the sequence of maximizers of the right-hand side.  By compactness, every subsequence admits in turn a subsequence that satisfies~\eqref{eq:Nconv} for some $(\rho_j)_{j\in\N_0}$ and $\sigma \in [0,1]$ with $\sum_{j=0}^\infty \rho_j\leq \sigma$. The proof of the upper bound for the limit superior of the pressure is easily completed by combining Eqs.~\eqref{eq:hardy-rama}, \eqref{eq:xiupper}, and arguments similar to the proof of the lower bound. This proves the variational representation of the pressure.
\end{proof}

\noindent The proof of items (a) and (b) in Proposition~\ref{prop:pressure-legendre} builds on several lemmas. First we show that for $\sigma_\infty =0$, the expression to be maximized is a combination of relative entropies of measures on $\{0,1\}$, corresponding to absence or presence of a cube. 

\begin{lemma} \label{lem:shannon}
	 For every $(\rho_j)_{j\in \N_0}\in \R_+^{\N_0}$ and $
	\sigma \in [0,1]$ with $\sum_{j=0}^\infty \rho_j=\sigma$ (equivalently, $\sigma_\infty =0$), we have 
\begin{multline} \label{eq:relative-entropy}
	p\bigl((z_j)_{j\in \N_0}\bigr) - \Biggl(\sum_{j=0}^\infty \frac{\rho_j}{|B_j|}\log z_j + s\bigl( (\rho_j)_{j\in \N_0}, \sigma) \Biggr)\\
	= - \sum_{j=0}^\infty \frac{1-\sigma_{j+1}}{|B_j|} \Biggl( \widehat \rho_j \log \frac{\widehat \rho_j}{\widehat z_j/(1+\widehat z_j)}+ (1-\widehat \rho_j )\log \frac{1-\widehat \rho_j}{1/(1+\widehat z_j)}\Biggr).
\end{multline}
\end{lemma}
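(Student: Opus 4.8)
The plan is to reduce~\eqref{eq:relative-entropy} to an identity that holds term by term in $j$, by substituting the closed-form expressions already available and then recognizing each summand as a Bernoulli relative entropy. The ingredients are: the pressure formula of Theorem~\ref{thm:pressure}, $p=\sum_{j\ge 0}\frac{1}{|B_j|}\log(1+\widehat z_j)$, together with $\log\widehat z_j=\log z_j-|B_j|\,p_{j-1}$ and $p_{j-1}=\sum_{k=0}^{j-1}\frac{1}{|B_k|}\log(1+\widehat z_k)$ (convention $p_{-1}=0$); the entropy in the form~\eqref{eq:hatentropy}, $s=-\sum_{j\ge0}\frac{1-\sigma_{j+1}}{|B_j|}\bigl(\widehat\rho_j\log\widehat\rho_j+(1-\widehat\rho_j)\log(1-\widehat\rho_j)\bigr)$; and the bookkeeping relations valid when $\sigma_\infty=0$, namely $\rho_j=(1-\sigma_{j+1})\widehat\rho_j$ and $\sigma_{j+1}=\sum_{k>j}\rho_k$.

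The first substantive step is to eliminate $z_j$ in favour of $\widehat z_j$: write $\sum_j\frac{\rho_j}{|B_j|}\log z_j=\sum_j\frac{\rho_j}{|B_j|}\log\widehat z_j+\sum_j\rho_j\,p_{j-1}$, and interchange the order of summation in the last double sum to obtain $\sum_j\rho_j\,p_{j-1}=\sum_k\frac{\sigma_{k+1}}{|B_k|}\log(1+\widehat z_k)$; every term here is nonnegative and the series is bounded by $p<\infty$, so the rearrangement is legitimate (Tonelli). Plugging this, together with the formulas for $p$ and $s$, into the left-hand side of~\eqref{eq:relative-entropy}, the $\log(1+\widehat z_j)$ contributions combine into $\sum_j\frac{1-\sigma_{j+1}}{|B_j|}\log(1+\widehat z_j)$, and, using $\rho_j=(1-\sigma_{j+1})\widehat\rho_j$ to factor out $1-\sigma_{j+1}$ throughout, the whole left-hand side collapses to
\[
   \sum_{j\ge0}\frac{1-\sigma_{j+1}}{|B_j|}\Bigl(\widehat\rho_j\log\widehat\rho_j+(1-\widehat\rho_j)\log(1-\widehat\rho_j)-\widehat\rho_j\log\widehat z_j+\log(1+\widehat z_j)\Bigr).
\]

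The final step is recognition: since $\widehat z_j/(1+\widehat z_j)$ and $1/(1+\widehat z_j)$ are exactly the probabilities that a given scale-$j$ slot is occupied, respectively empty, under the renormalized Bernoulli description---this is the mechanism behind the product formulas in the proof of Theorem~\ref{thm:density}---the bracket above is precisely $\widehat\rho_j\log\dfrac{\widehat\rho_j}{\widehat z_j/(1+\widehat z_j)}+(1-\widehat\rho_j)\log\dfrac{1-\widehat\rho_j}{1/(1+\widehat z_j)}$, the relative entropy of $\mathrm{Ber}(\widehat\rho_j)$ with respect to $\mathrm{Ber}\bigl(\widehat z_j/(1+\widehat z_j)\bigr)$. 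Matching this against the right-hand side of~\eqref{eq:relative-entropy} term by term finishes the proof. Conceptually, the computation merely expresses that the left-hand side is a relative-entropy rate between the multi-canonical and grand-canonical descriptions which, by the hierarchical product structure, factorizes scale by scale; an alternative, more probabilistic route would be to establish the finite-volume version directly from the conditional independence of the $2^d$ sub-box projections exploited in the proof of Theorem~\ref{thm:density}, and then pass to the limit $n\to\infty$.

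I do not foresee a genuine obstacle here: the statement is an algebraic identity supported by two easy analytic remarks---the interchange of summation, trivial by positivity, and the convention $0\log\tfrac00=0$, which must be applied consistently in the degenerate cases $\widehat z_j=0$ (forcing $\rho_j=0$) or $\widehat\rho_j\in\{0,1\}$. The one thing to be careful about is simply keeping the many substitutions organized so that the collapse to a single per-scale sum is transparent.
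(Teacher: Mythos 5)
Your argument reproduces the paper's own proof essentially line for line: the same elimination of $z_j$ via $\log z_j=\log\widehat z_j+|B_j|\,p_{j-1}$, the same Tonelli rearrangement of $\sum_j\rho_j\,p_{j-1}$ into $\sum_k\frac{\sigma_{k+1}}{|B_k|}\log(1+\widehat z_k)$ (using $\sigma_\infty=0$ to identify the tail sum $\sum_{j>k}\rho_j$ with $\sigma_{k+1}$), and the same recombination with the entropy in the form~\eqref{eq:hatentropy}. One small point you passed over: the per-scale bracket you obtain, $\widehat\rho_j\log\widehat\rho_j+(1-\widehat\rho_j)\log(1-\widehat\rho_j)-\widehat\rho_j\log\widehat z_j+\log(1+\widehat z_j)$, is already the relative entropy $D\bigl(\mathrm{Ber}(\widehat\rho_j)\,\|\,\mathrm{Ber}(\widehat z_j/(1+\widehat z_j))\bigr)\ge 0$, so your computation gives the left-hand side as $+\sum_j\frac{1-\sigma_{j+1}}{|B_j|}D(\cdots)$, not $-\sum_j\cdots$ as written in~\eqref{eq:relative-entropy}. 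Your sign is in fact the consistent one (the left-hand side must be $\ge 0$, being the pressure minus a lower bound on it from the variational formula), so the leading minus in the displayed identity is a typo; you should have flagged the mismatch rather than asserting a silent term-by-term match.
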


\begin{proof}
We  compute
\begin{align*}
	\sum_{j=0}^\infty \frac{\rho_j}{|B_j|}\log z_j &= 	\sum_{j=0}^\infty \frac{\rho_j}{|B_j|}\Bigl( \log \widehat z_j + |B_j|\sum_{k=0}^{j-1} \frac{1}{|B_k|}\log(1+\widehat z_k)\Bigr)\\
	&= \sum_{j=0}^\infty \frac{\rho_j}{|B_j|} \log \widehat z_j +\sum_{k=0}^\infty \frac{1}{|B_k|}\log(1+\widehat z_k) \sum_{j=k+1}^\infty \rho_j \\
	&=\sum_{j=0}^\infty  \frac{1-\sigma_{j+1}}{|B_j|} \widehat \rho_j \log \widehat z_j+ \sum_{k=0}^\infty \frac{\sigma_{k+1}}{|B_k|}\log(1+\widehat z_k).
\end{align*}
In going from the second to the third line we have used the equality $\sum_{j=k+1}^\infty \rho_k = \sigma_{k+1}$, which is valid because of $\sigma_\infty =0$. It follows that 
$$
	p\bigl((z_j)_{j\in \N_0}\bigr) - \sum_{j=0}^\infty \frac{\rho_j}{|B_j|}\log z_j = \sum_{j=0}^\infty \frac{1-\sigma_{j+1}}{|B_j|} \Bigl( \log(1+\widehat z_j) -\widehat \rho_j \log \widehat z_j\Bigr).
$$
We combine with the formula for the entropy from Theorem~\ref{thm:entropy} and obtain~\eqref{eq:relative-entropy}.
\end{proof}

\noindent The term in parentheses on the right-hand side of~\eqref{eq:relative-entropy}, together with the minus sign, is nothing else but the relative entropy of the Bernoulli measure with parameter $\widehat \rho_j$ with respect to the Bernoulli measure with parameter $\widehat z_j/(1+\widehat z_j)$. It is non-negative and vanishes if and only if $\widehat \rho_j =\widehat z_j/(1+\widehat z_j)$. The  next lemma relates this identity to Theorem~\ref{thm:density}.

\begin{lemma}\label{lem:equiv}
	Let $(\rho_j)_{j\in \N_0}\in \R_+^{\N_0}$ and $\sigma:= \sum_{j=0}^\infty \rho_j$. Pick $m\in \N_0$ and assume $\sigma_{m+1}=\sum_{j=m+1}^\infty \rho_j< 1$. Then the following two statements are equivalent:
	\begin{itemize}
		\item[(i)] $\widehat \rho_j =\widehat z_j/(1+\widehat z_j)$ for all $j\geq m$.
		\item [(ii)] $\rho_j = \widehat z_j \prod_{k=j}^\infty (1+\widehat z_k)^{-1}$ for all $j\geq m$.
	\end{itemize}
\end{lemma}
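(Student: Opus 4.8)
The plan is to derive both implications from the single observation that, for $j\ge m$, the tail product $\prod_{k=j+1}^{\infty}(1-\widehat\rho_k)$ telescopes to $1-\sigma_{j+1}$; this is the infinite-volume analogue of the identity $\alpha_{j,\Lambda_n}=1-\sum_{k=j}^{n}\rho_{k,\Lambda_n}$ from the proof of Theorem~\ref{thm:inversion}, but now extracted directly from the hypotheses rather than from the activity--density matching. Two standing remarks first: since $\sigma_\infty=0$, the tails $\sigma_j=\sum_{k\ge j}\rho_k$ are non-increasing in $j$, so $\sigma_{m+1}<1$ forces $\sigma_{j+1}<1$ for all $j\ge m$ (making $\widehat\rho_j=\rho_j/(1-\sigma_{j+1})$ legitimate there); and $\sum_k\rho_k=\sigma<\infty$ forces $\sigma_{j+1}\to 0$ as $j\to\infty$.

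\emph{(i)$\Rightarrow$(ii).} From $\widehat\rho_{j+1}=\rho_{j+1}/(1-\sigma_{j+2})$ together with $\sigma_{j+1}=\rho_{j+1}+\sigma_{j+2}$ one reads off the one-step identity $(1-\widehat\rho_{j+1})(1-\sigma_{j+2})=1-\sigma_{j+1}$. Iterating it for $j\ge m$ gives $1-\sigma_{j+1}=(1-\sigma_{N+1})\prod_{k=j+1}^{N}(1-\widehat\rho_k)$ for every $N\ge j+1$; letting $N\to\infty$ and using $\sigma_{N+1}\to 0$ yields $1-\sigma_{j+1}=\prod_{k=j+1}^{\infty}(1-\widehat\rho_k)$. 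Inserting $1-\widehat\rho_k=(1+\widehat z_k)^{-1}$ (assumption (i), valid for $k\ge m$) gives $1-\sigma_{j+1}=\prod_{k=j+1}^{\infty}(1+\widehat z_k)^{-1}$, and multiplying by $\widehat\rho_j=\widehat z_j/(1+\widehat z_j)$ together with $\rho_j=\widehat\rho_j(1-\sigma_{j+1})$ produces $\rho_j=\widehat z_j\prod_{k=j}^{\infty}(1+\widehat z_k)^{-1}$, which is (ii).

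\emph{(ii)$\Rightarrow$(i).} Write $Q_j:=\prod_{k=j}^{\infty}(1+\widehat z_k)^{-1}\in[0,1]$, so that $Q_j=(1+\widehat z_j)^{-1}Q_{j+1}$ and hence $\widehat z_jQ_j=Q_{j+1}-Q_j$. Assumption (ii) then says $\rho_k=Q_{k+1}-Q_k$ for $k\ge m$, and summing over $k=j+1,\dots,N$ gives $\sum_{k=j+1}^{N}\rho_k=Q_{N+1}-Q_{j+1}$; since $Q_N$ is non-decreasing and bounded, letting $N\to\infty$ yields $\sigma_{j+1}=Q_\infty-Q_{j+1}$ with $Q_\infty:=\lim_{N}Q_N$. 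Provided $Q_\infty=1$, this reads $1-\sigma_{j+1}=Q_{j+1}>0$, whence $\widehat\rho_j=\rho_j/(1-\sigma_{j+1})=(Q_{j+1}-Q_j)/Q_{j+1}=1-(1+\widehat z_j)^{-1}=\widehat z_j/(1+\widehat z_j)$, which is (i).

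The delicate point, and the only real obstacle, is the status of $Q_\infty$: one has $Q_\infty=1$ exactly when $\sum_k\widehat z_k<\infty$ (equivalently $\sum_k\log(1+\widehat z_k)<\infty$). This holds automatically wherever the lemma is used, since it is invoked only under the hypothesis $\sum_k\widehat z_k<\infty$ of Proposition~\ref{prop:pressure-legendre}(a),(b); moreover, even without assuming it, the computation in (i)$\Rightarrow$(ii) shows that (i) itself forces $1-\sigma_{j+1}=\prod_{k>j}(1+\widehat z_k)^{-1}>0$, hence $\sum_k\widehat z_k<\infty$. The two implications can therefore only part ways in the degenerate situation $\sum_k\widehat z_k=\infty$, where every $Q_j$ vanishes, (ii) collapses to $\rho_j=0$ for $j\ge m$, and (i) additionally demands $\widehat z_j=0$; barring that case, the argument above closes and all the remaining steps are routine manipulations of telescoping sums and products.
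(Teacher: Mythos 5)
Your argument is correct and, up to presentation, it is the paper's own argument. The (i)$\Rightarrow$(ii) direction is the identical telescoping of $1-\sigma_j$. For (ii)$\Rightarrow$(i), the paper interprets $\rho_j$ via independent Bernoulli variables $Y_j$ with $\P(Y_j=0)=1/(1+\widehat z_j)$ and writes $1-\sigma_r = \P(\forall j\ge r: Y_j=0)$; your $Q_j = \prod_{k\ge j}(1+\widehat z_k)^{-1}$ is exactly $\P(\forall k\ge j: Y_k=0)$, and your telescoping $\sigma_{j+1}=Q_\infty - Q_{j+1}$ is the same bookkeeping without the probabilistic dressing. What you add is worth noting: the paper's step $\sigma_r = \P(\exists j\ge r: Y_j=1)$ tacitly requires $\P(Y_j=1\text{ infinitely often})=0$, i.e.\ $\sum_j\widehat z_j<\infty$ by Borel--Cantelli, which is precisely your condition $Q_\infty=1$. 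When $\sum_{j\ge m}\widehat z_j=\infty$ and $\widehat z_j>0$ for some $j\ge m$, statement (ii) gives $\rho_j=0$ and hence $\widehat\rho_j=0$ for $j\ge m$, while (i) would demand $\widehat\rho_j>0$, so (ii) genuinely fails to imply (i); the remark following the lemma ("the lemma works both for $\sum_j\widehat z_j<\infty$ and $=\infty$") overstates the case. Your observation that (i) together with $\sigma_{m+1}<1$ forces $\prod_{k>m}(1+\widehat z_k)^{-1}=1-\sigma_{m+1}>0$, hence $\sum_{k>m}\widehat z_k<\infty$, cleanly shows (i)$\Rightarrow$(ii) is unaffected. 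One minor slip in your closing discussion: the lemma is also invoked in the proof of Proposition~\ref{prop:pressure-legendre}(c), where $\sum_j\widehat z_j=\infty$, but only the (i)$\Rightarrow$(ii) direction is used there, so your overall conclusion that the two implications part ways only in the degenerate situation, and that this does not affect any application, stands.
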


\noindent Let us stress that the lemma works both for $\sum_j \widehat z_j < \infty$ and $\sum_j \widehat z_j = \infty$. In the latter case the infinite products vanish and we find $\rho_j =0$ for all $j\geq m$.

\begin{proof}
	We note
	$$
		1-\sigma_j= 1- \sigma_{j+1}-\rho_j = (1- \sigma_{j+1})(1- \widehat \rho_j)
	$$		
	hence 
	$1- \sigma_j=(1- \sigma_\ell) \prod_{k=j}^{\ell-1}(1-\widehat\rho_j)$ for all $\ell \geq j \geq m$. Because of $\sum_{j=0}^\infty\rho_j =\sigma$ we have $\sigma_\infty=0$ and $\lim_{\ell\to \infty}\sigma_\ell =0$, hence
	$$
		1- \sigma_j = \prod_{k=j}^\infty(1-\widehat\rho_j).
	$$	
	If (i) holds true, then for all $j\geq m$ 
	$$
		\rho_j = (1-\sigma_{j+1}) - (1-\sigma_j) = \widehat\rho_j \prod_{k=j+1}^\infty (1- \widehat \rho_k).
	$$
	The implication (i) $\Rightarrow$ (ii) follows. Conversely, if (ii) holds, let $Y_j$ be independent Bernoulli variables with $\P(Y_j=0) =1/(1+\widehat z_j)$. Then
	$$
		\rho_j=\P(Y_j=1,\,\forall k\geq j+1: Y_k=0)
	$$
	and
	$$
		1 - \sigma_r = 1-\P(\exists j\geq r:\, Y_j=1)=\P(\forall j \geq r:\, Y_j = 0)=\prod_{j=r}^\infty \frac 1{1+\widehat z_j}
	$$
	and (i) follows.
\end{proof}

\noindent The previous two lemmas deal with the gas phase ($\sigma_\infty=0$) only. The next lemma allows for $\sigma_\infty \geq 0$ and is particularly relevant for the coexistence region.  Let us briefly motivate a new set of variables. Suppose that $\sigma_\infty\in (0,1)$. Then we may think of the system as a mixture of a condensed phase, occupying the volume fraction $\sigma_\infty$, and a gas phase in the remaining volume fraction $1-\sigma_\infty$. The natural density variables for the gas phase should be defined relatively to the volume occupied by the gas and not the total volume. Therefore we introduce the new variables 
\be\label{eq:primed}
		\rho'_j:= \frac{\rho_j}{1- \sigma_\infty}, \quad \sigma':=\sum_{j=0}^\infty \rho'_j,\quad \sigma'_j :=  \sum_{k=j}^\infty \rho'_j.
\ee

\begin{lemma} \label{lem:convex-combi}
	Let $((\rho_j)_{j\in \N_0}, \sigma)\in \R_+^{\N_0}\times [0,1]$ with $\sum_{j=0}^\infty \rho_j \leq \sigma$ and $\sigma_\infty\in (0,1)$. 
	Then 
	\begin{multline*}
		\sum_{j=0}^\infty \frac{\rho_j}{|B_j|}\log z_j +\Bigl( \sigma - \sum_{j=0}^\infty \rho_j\Bigr) \theta^* + s\bigl( (\rho_j)_{j\in \N_0}, \sigma) \\
			= (1- \sigma_\infty) \Biggl( 		\sum_{j=0}^\infty \frac{\rho'_j}{|B_j|}\log z_j + s\bigl( (\rho'_j)_{j\in \N_0}, \sigma')\Biggr) +\sigma_\infty \theta^*.
	\end{multline*} 
\end{lemma}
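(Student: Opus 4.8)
\textbf{Proof plan for Lemma~\ref{lem:convex-combi}.}
The plan is to peel off the two terms that transform trivially and reduce the claim to a single scaling identity for the entropy. First, by the definition~\eqref{eq:sigmajdef} we have $\sigma - \sum_{j=0}^\infty \rho_j = \sigma_\infty$, so the summand $\sigma_\infty\theta^*$ occurs on both sides and cancels. Next, since $\rho_j = (1-\sigma_\infty)\rho'_j$ for every $j$, the activity term satisfies $\sum_{j=0}^\infty \frac{\rho_j}{|B_j|}\log z_j = (1-\sigma_\infty)\sum_{j=0}^\infty \frac{\rho'_j}{|B_j|}\log z_j$ on the nose. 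Hence the lemma is equivalent to
\[
	s\bigl((\rho_j)_{j\in\N_0},\sigma\bigr) = (1-\sigma_\infty)\, s\bigl((\rho'_j)_{j\in\N_0},\sigma'\bigr),
\]
where $\sigma' = \sum_{j\ge 0}\rho'_j$, so that the primed configuration carries no infinite-volume mass.

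Before using the entropy formula I would check that Theorem~\ref{thm:entropy} applies to $(\vect\rho',\sigma')$: clearly $\rho'_j\ge 0$, and $\sigma' = \frac{1}{1-\sigma_\infty}\sum_{j\ge0}\rho_j = \frac{\sigma-\sigma_\infty}{1-\sigma_\infty}\le 1$ exactly because $\sigma\le 1$, while $\sum_{j\ge0}\rho'_j = \sigma'$ gives $\sum_j\rho'_j\le\sigma'\le1$. Then I would record the relations between the two sets of auxiliary variables. From $\sigma_j = \sigma_\infty + \sum_{k\ge j}\rho_k = \sigma_\infty + (1-\sigma_\infty)\sigma'_j$ (with $\sigma'_j:=\sum_{k\ge j}\rho'_k$) one gets
\[
	1-\sigma_j = (1-\sigma_\infty)(1-\sigma'_j)\qquad(j\in\N_0),
\]
and the same with $j+1$ in place of $j$; consequently $\dfrac{\rho_j}{1-\sigma_{j+1}} = \dfrac{\rho'_j}{1-\sigma'_{j+1}}$ and $\dfrac{1-\sigma_j}{1-\sigma_{j+1}} = \dfrac{1-\sigma'_j}{1-\sigma'_{j+1}}$.

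Plugging these identities together with $\rho_j=(1-\sigma_\infty)\rho'_j$ and $1-\sigma_j=(1-\sigma_\infty)(1-\sigma'_j)$ into the explicit formula of Theorem~\ref{thm:entropy} shows that the $j$-th summand of $s(\vect\rho,\sigma)$ equals $(1-\sigma_\infty)$ times the $j$-th summand of the corresponding expression for $s(\vect\rho',\sigma')$ (the convention $0\log\tfrac00=0$ is respected on both sides). Summing over $j$, which is legitimate since both series converge by the bound in Theorem~\ref{thm:entropy}, gives the displayed scaling identity and hence the lemma. I do not expect a genuine obstacle here: the argument is pure bookkeeping in the variables $\sigma_j,\sigma'_j$, and the only point needing care is to keep track of the fact that the primed system has $(\sigma')_\infty=0$ (so that $\sum_{k\ge j}\rho'_k=\sigma'_j$ with no leftover term) and that $\sigma'\le1$, both of which follow at once from $\sigma\le1$. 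A more conceptual route — viewing the configuration as a condensed region of volume fraction $\sigma_\infty$ with negligible entropy, together with a gas in the complementary fraction $1-\sigma_\infty$ whose densities relative to that smaller volume are the $\rho'_j$ — is tempting but would require a rather delicate counting argument because of the rigid hierarchical geometry, so the algebraic computation above is the cleaner option.
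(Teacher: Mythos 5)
Your proposal is correct and follows essentially the same route as the paper's own proof: cancel the $\sigma_\infty\theta^*$ term, observe the activity sum is linear in $\vect\rho$, reduce to the entropy scaling identity $s(\vect\rho,\sigma)=(1-\sigma_\infty)s(\vect\rho',\sigma')$, and verify it from the relation $1-\sigma_j=(1-\sigma_\infty)(1-\sigma'_j)$ via Theorem~\ref{thm:entropy}. The only difference is cosmetic: you spell out the ratio identities $\rho_j/(1-\sigma_{j+1})=\rho'_j/(1-\sigma'_{j+1})$ etc.\ explicitly, whereas the paper substitutes directly into the entropy formula.
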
 

\noindent Put differently, the grand potential in the coexistence region is a convex combination of the grand potential $\theta^*$ in the condensed phase and the grand potential of the gas phase. 

\begin{proof}
	The lemma follows from Theorem~\ref{thm:entropy} and explicit computations. Clearly 
	$$
		\sum_{j=0}^\infty \frac{\rho_j}{|B_j|}\log z_j +\Bigl( \sigma - \sum_{j=0}^\infty \rho_j\Bigr) \theta^* = (1- \sigma_\infty) 		\sum_{j=0}^\infty \frac{\rho'_j}{|B_j|}\log z_j + \sigma_\infty \theta^*,
	$$
	so it remains to check that 
	\be \label{eq:primedent}
		s\bigl( (\rho_j)_{j\in \N_0},\sigma\bigr) = (1- \sigma_\infty) 		s\bigl( (\rho'_j)_{j\in \N_0},\sigma'\bigr).
	\ee
	As a preliminary observation we note $\sigma' = (\sigma-\sigma_\infty) / (1- \sigma_\infty) \leq1$.  In view of 
	$$
		1 - \sigma_{j+1} = 1 - \sum_{k=j+1}^\infty \rho_j - \sigma_\infty = (1- \sigma_\infty)(1- \sigma'_{j+1}), 
	$$	
	we also have $\rho'_j \leq 1 - \sigma'_{j+1}$, moreover 
	\begin{align*} 
		s\bigl( (\rho_j)_{j\in \N_0},\sigma\bigr) & =  - (1- \sigma_\infty)  \sum_{j=0}^\infty \frac{1}{|B_j|} \Bigl( \rho'_j \log \frac{\rho'_j}{1- \sigma'_{j+1}} + (1- \sigma'_j) \log \frac{1- \sigma'_j}{1 - \sigma'_{j+1}} \Bigr) \\
		& = (1- \sigma_\infty) 	s\bigl( (\rho'_j)_{j\in \N_0},\sigma'\bigr). \qedhere
	\end{align*} 
\end{proof} 

\begin{proof}[Proof of Proposition~\ref{prop:pressure-legendre}(a)-(c)]
Assume $\sum_{j=0}^\infty\widehat z_j <\infty$ and $p((z_j)_{j\in \N_0})>\theta^*$. To prove part (a), we proceed in two steps: First we show that a tuple $((\rho_j)_{j\in \N_0}, \sigma)$ with $\sigma_\infty =0$, i.e., $\sum_{j=0}^\infty \rho_j = \sigma$, is a maximizer if and only if it is given by the expressions from~Theorem~\ref{thm:density}(a). Second, we show that every maximizer necessarily satisfies $\sigma_\infty =0$. 

For Step 1, we use Lemma~\ref{lem:shannon}. A tuple with $\sigma_\infty =0$ is a maximizer if and only if the right-hand side of~\eqref{eq:relative-entropy} vanishes. But on the right-hand side of~\eqref{eq:relative-entropy}, the term in parentheses, together with the minus sign, is nothing else but the relative entropy of two Bernoulli measures with parameters $\widehat \rho_j$ and  $\widehat z_j/(1+\widehat z_j)$. As a consequence the overall sum vanishes---i.e., the tuple $(\rho_j)_{j\in \N_0}$, $\sigma=\sum_{j=0}^\infty \rho_j$ is a maximizer---if and only if, for every $j\in \N_0$, we have $\sigma_{j+1}=1$ or $\widehat \rho_j =\widehat z_j/(1+ \widehat z_j)$. 

Suppose by contradiction that there is a maximizer with $\sigma_{r+1}=1$ for some $r\in \N_0$, and $\sigma=\sum_{j=0}^\infty  \rho_j$. The sequence $(\sigma_j)$ is monotone decreasing, therefore if the set of such  $r$'s is unbounded, then $\sigma_j=1$ for all $j\in \N_0$. It follows that $\rho_j =\sigma_j - \sigma_{j+1}=0$ for all $j$ and $\sigma_{j+1}=\sum_{k=r+1}^\infty \rho_j=0$, contradiction.
Thus the set of $r$'s with $\sigma_{r+1}=1$ is bounded, let $m$ be its maximal element. Then $\sigma_{m+1}=\sum_{k=m+1}^\infty \rho_k=1$ hence $\rho_0=\cdots=\rho_{m}=0$. In addition, $\sigma_{j+1}<1$ and $\widehat \rho_j =\widehat z_j/(1+ \widehat z_j)$ for all $j\geq m+1$. It follows that for all $j\geq m+1$, the density $\rho_j$ is given by the formula from Theorem~\ref{thm:density}(a), see Lemma~\ref{lem:equiv}.  In particular, 
$\sigma_{m+1}=\sum_{j=m+1}^\infty \rho_j$ is bounded by the packing fraction from Theorem~\ref{thm:density}(a), which is strictly smaller than $1$. Thus $\sigma<1$, in contradiction with $\sigma= \sigma_{m+1}=1$.

Consequently $\sigma_{j+1}<1$ and $\widehat \rho_j =\widehat z_j/(1+ \widehat z_j)$ for all $j\in \N_0$. Lemma~\ref{lem:equiv} shows that the maximizer is given by the formulas from Theorem~\ref{thm:density}(a). In particular, $\sigma<1$ and $\sigma_\infty =0$. 

For Step 2, we use Lemma~\ref{lem:convex-combi}. Let $((\rho_j)_{j\in \N_0},\sigma)$ be such that $\sigma_\infty >0$. By Lemma~\ref{lem:convex-combi} and the preceding considerations applied to $((\rho'_j)_{j\in \N_0},\sigma')$, we can bound 
\be \label{ineq:convex}
	\sum_{j=0}^\infty \frac{\rho_j}{|B_j|}\log z_j +\Bigl( \sigma - \sum_{j=0}^\infty \rho_j\Bigr) \theta^* + s\bigl( (\rho_j)_{j\in \N_0}, \sigma) 
	\leq (1- \sigma_\infty) p\bigl( (z_j)_{j\in \N_0}\bigr) + \sigma_\infty \theta^*
\ee
which is strictly smaller than $p\bigl( (z_j)_{j\in \N_0}\bigr)$ because of the assumption $\theta^*< p\bigl( (z_j)_{j\in \N_0}\bigr)$. Therefore the tuple is not a maximizer. This concludes Step 2 and the proof of part (a) of the proposition. \\

\medskip \noindent 
For (b) and (c), assume $p((z_j)_{j\in \N_0})=\theta^*$. Then $(\vect \rho, \sigma) = (\vect 0, 1)$ is a maximizer. Suppose that there exists another maximizer $(\vect \rho, \sigma)$. Then necessarily $\sigma_\infty < 1$ and we may define primed variables $(\vect {\rho'},\sigma')$ and $\sigma'_j$ as in Eq.~\eqref{eq:primed}. The variational representation for the pressure, the equality $p((z_j)_{j\in \N_0})=\theta^*$, and Lemma~\ref{lem:convex-combi} yields 
\begin{multline*}
	0 = \theta^* - \Bigl( \sum_{j=0}^\infty \frac{\rho_j}{|B_j|}\log z_j +\Bigl( \sigma - \sum_{j=0}^\infty \rho_j\Bigr) \theta^* + s\bigl( (\rho_j)_{j\in \N_0}, \sigma)\Bigr)  \\
	= (1- \sigma_\infty) \Biggl\{ \theta^* - \Bigl(\sum_{j=0}^\infty \frac{\rho'_j}{|B_j|}\log z_j + s\bigl( (\rho'_j)_{j\in \N_0}, \sigma')\Bigr) \Biggr\}\geq 0
\end{multline*} 
hence 
\be \label{eq:bctheta}
	\theta^* - \Bigl(\sum_{j=0}^\infty \frac{\rho'_j}{|B_j|}\log z_j + s\bigl( (\rho'_j)_{j\in \N_0}, \sigma')\Bigr) =0. 
\ee
Since $p((z_j)_{j\in \N_0})=\theta^*$, the left-hand side can be expressed as a combination of relative entropies of Bernoulli variables as in Lemma~\ref{lem:shannon}. 

Assume first $\sum_{j=0}^\infty \widehat z_j <\infty$. Adapting the arguments of the proof of part (a) we deduce 
$$
	\rho'_j = \rho_j(\vect z)= \frac{\widehat z_j}{1+\widehat z_j}\prod_{k=j+1}^\infty \frac{1}{1+\widehat z_k}\quad (j\in \N_0). 
$$
Then $\rho_j = (1- \sigma_\infty) \rho'_j$ and 
$$
	\sigma = \sum_{j=0}^\infty \rho_j + \sigma_\infty = (1- \sigma_\infty) \sigma' + \sigma_\infty 
$$
by definition of $\rho'_j$ and $\sigma'$. It follows that the additional maximizer $(\vect \rho,\sigma)$ is a convex combination of $(\vect \rho(\vect z),\sigma(\vect z))$ and $(\vect 0, 1)$. Conversely, every such convex combination is indeed a maximizer. This proves part (b) of Proposition~\ref{prop:pressure-legendre}. 

If on the other hand $\sum_{j=0}^\infty\widehat z_j =\infty$, then we check that $\rho'_j = 0$ hence $\rho_j =0$ for all $j$. To that aim we revisit the arguments from the proof of part (a). We start from~\eqref{eq:bctheta} and 
deduce as in part (a) that  $\sigma'_{j+1}=1$ or $\widehat \rho'_j =\widehat z_j/(1+ \widehat z_j)$ for all $j\in \N_0$. We distinguish several cases. 

If $\sigma'_{j+1}=1$ for all $j\in \N_0$, then $\rho'_j =0$ for all $j\in \N_0$ and $\sigma'=0$, contradicting $\sigma'_{j+1}=1$. 

If $\sigma'_{j+1}\neq 1$ for some $j$, then the set $\{r\in \N_0 \mid \sigma'_{r+1}=1\}$ is bounded. Suppose by contradiction that it is non-empty and let $m$ be its maximum. 
Then $\sigma'_{m+1}=\sum_{k=m+1}^\infty \rho'_k=1$ hence $\rho'_0=\cdots=\rho'_{m}=0$. In addition, $\sigma'_{j+1}<1$ and $\widehat {\rho'}_j =\widehat z_j/(1+ \widehat z_j)$ for all $j\geq m+1$. 
Lemma~\ref{lem:equiv} yields $\rho'_j =0$ for all $j \geq m+1$. It follows that $\sigma'_{m+1}=0$, in contradiction with the identity  $\sigma'_{m+1}=1$ that holds true by definition of $m$. 

The only case left is $\sigma'_{j+1}<1$ for all $j \in \N_0$. In this case Lemma~\ref{lem:equiv} again yields $\rho'_j =0$ for all $j \in \N_0$ hence $\sigma'=0$. 

Consequently $\rho_j = (1-\sigma_\infty) \rho_j =0$ for all $j \in \N_0$ and $\sigma = \sigma_\infty$. The grand-potential of such a configuration is $\sigma_\infty \theta^*$, which is equal to $\theta^*$ if and only if $\sigma_\infty =1$. As a consequence, $(\vect 0,1)$ is the unique maximizer of the grand potential. This proves part (c). 
\end{proof}

\section{Phase transition} \label{sec:phasetransition}

\subsection{Generalities. Parameter-dependent activity} 

Let $(E_j)_{j\in \N_0}$ be a sequence in $\R\cup \{\infty\}$ such that $E_j/|B_j|$ has a limit in $\R\cup \{\infty\}$, i.e., 
$$
	e_\infty:= \lim_{j\to \infty} \frac{E_j}{|B_j|} > - \infty, 
$$
and $E_j <\infty$ for at least one $j\in \N_0$. Think of $E_j$ as the energy of a block, which could be a bulk contribution plus a boundary term, e.g.,  $E_j = e_\infty |B_j| + \mathrm{const} |\partial B_j|$. For later purpose we also define 
$$
	E(B) = E_j \quad (B\in \mathbb B_j).
$$
We specialize to parameter-dependent activities of the form 
$$
	z_j(\mu) = \exp\bigl(|B_j| \mu - E_j\bigr) \qquad (\mu \in \R). 
$$
The activity is stable with 
\be \label{eq:thetamu} 
	\theta^*(\mu) = \lim_{j\to \infty}\frac{1}{|B_j|} \log z_j(\mu) = \mu - e_\infty. 
\ee
We write $p(\mu)$, $\widehat z_j(\mu)$, $\rho_j(\mu)$ for the pressure, effective activities, and density variables of the $\mu$-dependent model. For $(\rho_j)_{j\in \N_0}\in \R_+^{\N_0}$ and $\sigma_\infty \geq 0$ with $\sum_{j=0}^\infty \rho_j + \sigma_\infty \leq 1$, define the free energy of a block size distribution 
\be \label{eq:freenergy} 
	f\bigl( (\rho_j)_{j\in \N_0}, \sigma_\infty \bigr) 
	:= \sum_{j=0}^\infty \frac{\rho_j}{|B_j|}\, E_j + \sigma_\infty e_\infty -  s\Bigl( (\rho_j)_{j\in \N_0}, \sigma_\infty + \sum_{j=0}^\infty \rho_j \Bigr).  
\ee
and the free energy at given packing fraction $\sigma\in [0,1]$
$$
	\varphi(\sigma) = \inf \Bigl\{ 	f\bigl( (\rho_j)_{j\in \N_0}, \sigma_\infty \bigr)  \, \Big|\, \sum_{j=0}^\infty \rho_j + \sigma_\infty = \sigma \Bigr\}.
$$
The maps $p(\mu)$, $\varphi(\sigma)$, and $	f\bigl( (\rho_j)_{j\in \N_0}, \sigma_\infty \bigr)$ are convex, moreover by Proposition~\ref{prop:pressure-legendre},
\begin{align}
	p(\mu) &= \sup_{\sigma\in [0,1]}\bigl( \mu \sigma - \varphi(\sigma)\bigr) \notag \\
		& = \sup \Biggl\{ \sum_{j=0}^\infty \mu \rho_j + \mu \sigma_\infty - f\bigl( (\rho_j)_{j\in \N_0}, \sigma_\infty \bigr)  \, \Big|\, \sum_{j=0}^\infty \rho_j + \sigma_\infty \leq 1\Biggr\}.  \label{eq:freevar}
\end{align} 
The test configuration $\rho_j \equiv 0$ and $\sigma_\infty =1$ yields $p(\mu) \geq \mu - e_\infty$ for all $\mu \in \R$, in agreement with the already known bound $p(\mu) \geq \theta^*(\mu) = \mu -e_\infty$. 
Define 
$$
	\mu_c := \inf\Bigl\{\mu \in \R \,\mid\, p(\mu) = \mu - e_\infty\Bigr\},\qquad 
	\sigma_c := \lim_{\mu \nearrow \mu_c} \frac{\dd p}{\dd \mu}(\mu).
$$
By convexity, the pressure $p$ is differentiable almost everywhere with increasing derivative, therefore $\sigma_c$ is well-defined. 

Notice $\mu_c \leq \infty$ and $\sigma_c \leq 1$. We say that the mixture of cubes undergoes a \emph{phase transition} if $\mu_c<\infty$. The phase transition is \emph{continuous} if $\sigma_c= 1$ and it is of \emph{first order} if $\sigma_c <1$, see Proposition~\ref{thm:ptgen} below.

\begin{lemma} \label{lem:hatphases}
	The following holds true: 
	\begin{enumerate} 
		\item [(a)] For each $j\in \N_0$, the map $\mu \mapsto \widehat z_j(\mu)$ is monotone increasing. 
		\item [(b)] The system undergoes a phase transition if and only if $\sum_{j=0}^\infty \widehat z_j(\mu) =\infty$ for some $\mu \in \R$, and we have 
		$$
			\mu_c = \inf \Bigl\{\mu \in \R\, \Big|\, \sum_{j\in \N_0}\widehat z_j(\mu) =\infty\Bigr\} > e_\infty.
		$$
		\item [(c)] If $\mu_c<\infty$, the phase transition is of first order if and only if $\sum_j \widehat z_j (\mu_c) <\infty$, with
		$$
			\sigma_c = 1- \prod_{j=0}^\infty \frac{1}{1+ \widehat z_j(\mu_c)}. 
		$$
	\end{enumerate} 
\end{lemma}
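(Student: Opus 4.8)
The plan is to unpack each part from the formulas already established in Section~\ref{sec:pressure}, especially Theorems~\ref{thm:pressure} and~\ref{thm:density}, together with the monotonicity built into the parametrization $z_j(\mu) = \exp(|B_j|\mu - E_j)$. For part (a), I would argue by induction on $j$. We have $\widehat z_0(\mu) = z_0(\mu) = \exp(\mu - E_0)$, which is clearly increasing in $\mu$. For the inductive step, recall $\widehat z_j(\mu) = z_j(\mu)\e^{-|B_j| p_{j-1}(\mu)}$ where $p_{j-1}(\mu) = \sum_{k=0}^{j-1}\frac{1}{|B_k|}\log(1+\widehat z_k(\mu))$ by~\eqref{eq:fipress}. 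The issue is that $z_j(\mu)$ increases in $\mu$ but so does $p_{j-1}(\mu)$, so the product is not obviously monotone. The clean way is to compute $\frac{\dd}{\dd\mu}\log\widehat z_j(\mu) = |B_j| - |B_j|\frac{\dd p_{j-1}}{\dd\mu}(\mu) = |B_j|\bigl(1 - \frac{\dd p_{j-1}}{\dd\mu}(\mu)\bigr)$, and then observe that $\frac{\dd p_{j-1}}{\dd\mu}(\mu) = \sigma_{0,\Lambda_{j-1}}(\mu) \le 1$ is exactly the finite-volume packing fraction (this is essentially~\eqref{eq:firho} combined with differentiating~\eqref{eq:fipress}; alternatively it follows from $\frac{\dd}{\dd\mu}\log(1+\widehat z_k) = \frac{\widehat z_k}{1+\widehat z_k}\cdot|B_k|(1 - \frac{\dd p_{k-1}}{\dd\mu})$ and an induction showing the telescoping sum equals $1 - \prod_{k=0}^{j-1}(1+\widehat z_k)^{-1} \le 1$). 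Hence $\log\widehat z_j(\mu)$ has nonnegative derivative, proving (a).

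For part (b), combine (a) with Theorem~\ref{thm:density}: the pressure satisfies $p(\mu) = \theta^*(\mu) = \mu - e_\infty$ precisely when $\sum_j \widehat z_j(\mu) = \infty$ (case (b) of that theorem), while $p(\mu) > \theta^*(\mu)$ when $\sum_j\widehat z_j(\mu) < \infty$ and not all $\widehat z_j$ vanish, which holds here since $z_j(\mu) > 0$ whenever $E_j < \infty$ — and by hypothesis $E_j < \infty$ for at least one $j$, forcing $\widehat z_0(\mu) > 0$ at least when $j=0$ works, or more carefully $p(\mu) = \sum_j \frac{1}{|B_j|}\log(1+\widehat z_j(\mu)) > 0 \ge \mu - e_\infty$ is not automatic, so I instead note that when $\sum_j \widehat z_j(\mu) < \infty$ Theorem~\ref{thm:density}(a) gives $\sigma(\mu) < 1$, whereas $p(\mu) = \mu - e_\infty$ would via Proposition~\ref{prop:pressure-legendre}(c)-type reasoning force the maximizer to be $(\vect 0, 1)$, i.e. $\sigma(\mu) = 1$ — contradiction. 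So $\{\mu : p(\mu) = \mu - e_\infty\} = \{\mu : \sum_j \widehat z_j(\mu) = \infty\}$, and by monotonicity in (a) this set is an up-set (interval $[\mu_c,\infty)$ or $(\mu_c,\infty)$ or empty), giving the stated formula for $\mu_c$. The strict inequality $\mu_c > e_\infty$ requires showing $\sum_j \widehat z_j(\mu) < \infty$ for $\mu$ slightly above $e_\infty$: since $\widehat z_j(\mu) \le z_j(\mu)\e^{-|B_j|p_{j-1}(\mu)} \le z_j(\mu) = \exp(|B_j|(\mu - E_j/|B_j|))$ and $E_j/|B_j| \to e_\infty$, for $\mu < e_\infty$ one gets $\widehat z_j(\mu) \le \exp(-|B_j|(e_\infty - \mu + o(1)))$ which is summable; hence the set where the sum diverges is contained in $[e_\infty, \infty)$, but one must also rule out $\mu_c = e_\infty$, which follows because at $\mu$ slightly less than $e_\infty$ the sum still converges, so $\mu_c \ge e_\infty$, and then a short continuity/monotonicity argument (or: at $\mu = e_\infty$ we still have $p(e_\infty) \ge 0 > e_\infty - e_\infty$ only if... ) — actually the cleanest route is: for $\mu \le e_\infty$, $\sum_j \widehat z_j(\mu) \le \sum_j z_j(\mu) \le \sum_j \exp(|B_j|(\mu - e_\infty) + o(|B_j|))$ which converges when $\mu < e_\infty$ and when $\mu = e_\infty$ one uses that $\widehat z_j(e_\infty) \le z_j(e_\infty)\e^{-|B_j|p_{j-1}(e_\infty)}$ with $p_{j-1}(e_\infty)$ eventually bounded below by a positive constant (since $p(e_\infty) > 0$ as $\widehat z_0(e_\infty) > 0$), giving summability; so $\mu_c > e_\infty$.

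For part (c), assume $\mu_c < \infty$. By definition $\sigma_c = \lim_{\mu\nearrow\mu_c}\frac{\dd p}{\dd\mu}(\mu)$. For $\mu < \mu_c$ we have $\sum_j\widehat z_j(\mu) < \infty$, so Theorem~\ref{thm:density}(a) identifies $\frac{\dd p}{\dd\mu}(\mu) = \sigma(\mu) = 1 - \prod_{k=0}^\infty(1+\widehat z_k(\mu))^{-1}$ (this derivative identity should be justified: $p(\mu) = \sum_j\frac{1}{|B_j|}\log(1+\widehat z_j(\mu))$ and termwise differentiation is legitimate on the region where the series converges nicely, or alternatively $\frac{\dd p}{\dd\mu} = \sigma$ is the standard thermodynamic relation, with $\sigma = \sum_j\rho_j$ the packing fraction, which follows from~\eqref{eq:freevar} and convexity since the maximizer in the Legendre formula gives $\frac{\dd p}{\dd\mu} = \sigma_\infty + \sum_j\rho_j = \sigma$). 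Then by monotone convergence, as $\mu \nearrow \mu_c$, $\widehat z_k(\mu) \nearrow \widehat z_k(\mu_c)$ for each $k$ (using part (a) and continuity of $\mu\mapsto\widehat z_k(\mu)$, which is immediate by induction), so $\prod_{k=0}^\infty(1+\widehat z_k(\mu))^{-1} \to \prod_{k=0}^\infty(1+\widehat z_k(\mu_c))^{-1}$. If $\sum_j\widehat z_j(\mu_c) < \infty$ this limit is strictly positive, giving $\sigma_c = 1 - \prod_{k=0}^\infty(1+\widehat z_k(\mu_c))^{-1} < 1$, i.e. a first-order transition; if $\sum_j\widehat z_j(\mu_c) = \infty$ then $\sum_j\log(1+\widehat z_j(\mu)) \to \infty$ as $\mu\nearrow\mu_c$ (again by monotone convergence), so the product $\to 0$ and $\sigma_c = 1$, a continuous transition. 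This establishes the equivalence and the formula. The main obstacle I anticipate is nailing down the derivative identity $\frac{\dd p}{\dd\mu}(\mu) = \sigma(\mu)$ rigorously — either by justifying termwise differentiation of the convergent series for $p$, or by invoking convexity together with the envelope/Legendre structure from~\eqref{eq:freevar}; everything else is bookkeeping with the monotonicity from (a) and the dichotomy in Theorem~\ref{thm:density}.
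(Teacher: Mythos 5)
Your part~(a) (differentiating $\log\widehat z_j(\mu) = |B_j|\mu - E_j - |B_j|p_{j-1}(\mu)$ and using the finite-volume identity $p_{j-1}'(\mu)=\sigma_{\Lambda_{j-1}}(\mu)\le 1$) is a valid alternative to the paper's route, which instead rescales to exhibit $\widehat z_n(\mu)$ as $\e^{-E(B_n)}$ times a negative power of a polynomial in $\e^{-\mu}$ with nonnegative coefficients, so monotonicity is immediate. Both are fine; yours is slightly more computational, the paper's is more structural.

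Part~(b) has a genuine gap. You assert the set equality $\{\mu : p(\mu)=\mu-e_\infty\}=\{\mu : \sum_j\widehat z_j(\mu)=\infty\}$, and your justification for the nontrivial inclusion is that $\sum_j\widehat z_j(\mu)<\infty$ together with $p(\mu)=\mu-e_\infty$ would, ``via Proposition~\ref{prop:pressure-legendre}(c)-type reasoning,'' force the unique maximizer to be $(\vect 0,1)$, contradicting $\sigma(\mu)<1$ from Theorem~\ref{thm:density}(a). This is incorrect: when $\sum_j\widehat z_j(\mu)<\infty$ and $p(\mu)=\theta^*(\mu)$ it is part~(b) of Proposition~\ref{prop:pressure-legendre}, not part~(c), that applies, and it gives a whole segment of maximizers (convex combinations of the gas tuple with $\sigma<1$ and $(\vect 0,1)$), so there is no contradiction. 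In fact the set equality you claim is literally false in the first-order case: at $\mu=\mu_c$ one has $p(\mu_c)=\mu_c-e_\infty$ by continuity of $p$, yet $\sum_j\widehat z_j(\mu_c)<\infty$. The correct argument, which the paper gives, is to establish $p'(\mu)=\sigma(\mu)<1$ on $(-\infty,\mu^*)$ and then integrate, $p(\mu^*)-p(\mu)=\int_\mu^{\mu^*}\sigma(u)\,\dd u<\mu^*-\mu$, to get $p(\mu)>\mu-e_\infty$ for $\mu<\mu^*$ and hence $\mu_c=\mu^*$. In other words, the derivative identity $p'=\sigma$ you defer to part~(c) as ``the main obstacle'' is already the key ingredient needed in part~(b); the paper proves it there (passing to the limit in $p_{\Lambda_n}(\mu+h)-p_{\Lambda_n}(\mu)=\int_\mu^{\mu+h}\sigma_{\Lambda_n}\,\dd t$, after establishing continuity of $\sigma(\cdot)$ by dominated convergence) and reuses it in~(c). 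Your sketch of $\mu_c>e_\infty$ is also incomplete---showing $\sum_j\widehat z_j(e_\infty)<\infty$ only yields $\mu^*\ge e_\infty$---and the paper's route (continuity of $p$ gives $p(\mu^*)=\mu^*-e_\infty$; $\inf_j E_j<\infty$ forces $p>0$ everywhere; hence $\mu^*-e_\infty=p(\mu^*)>0$) is cleaner. Part~(c) is essentially the paper's argument once the derivative identity is in place.
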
 

\begin{proof} 
	(a) The rescaling from the proof of Lemma~\ref{lem:scale} allows us to shove the $\mu$-dependence away from the activities $z_j$ and into the vacuum activity, which becomes $\e^{-\mu}$ instead of $1$. Precisely, remembering $E(B)= E_j$ for $B\in \mathbb B_j$, we get 
	\begin{align*} 
		\Xi_{\Lambda}(\mu) & = \sum_{\{X_1,\ldots, X_n\}} \prod_{i=0}^n \e^{|X_i| \mu - E(X_i)}  
			  =  \sum_{\{X_1,\ldots, X_n\}} \e^{\mu |\cup_i X_i| - \sum_i E(X_i)} \\
			&  = \e^{\mu|\Lambda|}  \sum_{\{X_1,\ldots, X_n\}} \e^{- \mu |\Lambda \setminus \cup_i X_i|} \e^{- \sum_i E(X_i)}
	\end{align*} 
	where the sum runs over collections of pairwise disjoint cubes. Notice that $\e^{-\mu}$ appears to the power $|\Lambda\setminus \cup_i X_i|$ which is the number of vacant lattice sites. 
	We apply the equality to $\Lambda = B_{n-1}$ and find 
	\be \label{eq:zhat-rescaled}
		\widehat z_n(\mu) =\frac{z_n(\mu)}{\Xi_{B_{n-1}}(\mu)^{2^d}} =
		\e^{- E(B_n)} \times \Biggl( \sum_{\{X_1,\ldots, X_n\}} \e^{- \mu |\Lambda \setminus \cup_i X_i|} \e^{- \sum_i E(X_i)}\Biggr)^{-2^d}
	\ee
	because $\exp(\mu|B_n|) = \exp(2^d \mu|B_{n-1}|)$ cancels in the ratio defining $\widehat z_n(\mu)$. The monotonicity in $\mu$ follows. 
	
	(b) Suppose that the set $I:=\{\mu \in \R\mid \sum_{j=0}^\infty \widehat z_j (\mu) = \infty\}$ is non-empty. Then because of the monotonicity proven in (a), the set $I$ is an open or half-open interval $(\mu^*,\infty)$ or $[\mu^*,\infty)$ with $\mu^*\in \R\cup\{-\infty\}$. For  $\mu \in I$ we have $p(\mu) = \theta^*(\mu) = \mu- e_\infty$ by Theorem~\ref{thm:pressure} and~\eqref{eq:thetamu}, therefore $\mu_c \leq \mu^*< \infty$ and the system undergoes a phase transition. 
	
	It remains to check $\mu_c = \mu^*$ or equivalently, $p(\mu)>\mu-e_\infty$ for all $\mu< \mu^*$. First we show that $\mu^* >e_\infty$, which proves in particular $\mu^*>- \infty$. As noted above, $p(\mu) = \mu - e_\infty$ for all $\mu >\mu_c$. But $p(\cdot)$ is continuous because it is convex and finite, therefore the equality $p(\mu) = \mu- e_\infty$ extends to all $\mu \geq \mu^*$. On the other hand, the non-degeneracy condition $\inf_j E_j <\infty$ is enough to guarantee $p(\mu) >0$ for all $\mu\in \R$. Therefore $\mu^* - e_\infty = p(\mu^*) >0 $ and $\mu^*>e_\infty$.

	Next we show that $p(\mu)$ is continuously differentiable in $(-\infty,\mu^*)$ with derivative $\sigma(\mu) \in (0,1)$, where 
	\be \label{eq:sigmaform}
		\sigma(\mu) = 1 - \prod_{j=0}^\infty \frac{1}{1+ \widehat z_j(\mu)},
	\ee
	see Theorem~\ref{thm:density}(a). First we check  that $\sigma(\mu)$ is continuous in $(- \infty, \mu^*)$. Every effective activity $\widehat z_j(\mu)$ is a rational function of $\e^{-\mu}$ hence continuous, see~\eqref{eq:zhat-rescaled}. To deduce the continuity of $\sigma(\mu)$ we invoke dominated convergence for the series $\sum_j \log (1+ \widehat z_j(\mu))$. Fix $\mu'<\mu^*$. The monotonicity of $\widehat z_j(\mu)$ and the definition of $\mu^*$ yield $\widehat z_j(\mu) \leq \widehat z_j(\mu')$ for $(- \infty,\mu')$ with $\sum_{j=0}^\infty \log (1+ \widehat z_j(\mu'))<\infty$. Therefore dominated convergence shows $\lim_{\eps\to 0} \sigma(\mu+\eps) = \sigma(\mu)$, for all $\mu<\mu'< \mu^*$.  Thus $\sigma(\mu)$ is continuous.
	
	The differentiability of $p(\mu)$ follows from standard arguments. 
We have $p(\mu)= \lim_{n\to\infty} p_{\Lambda_n}(\mu)$ and $p'_{\Lambda_n}(\mu) = \sigma_{\Lambda_n}(\mu) \to \sigma(\mu) \in (0,1)$ by Theorem~\ref{thm:density}(a). For $\mu \in (- \infty, \mu^*)$ and $h\in \R$ small enough so that $\mu\pm h< \mu^*$, we may pass to the limit $n\to \infty$ in 
	$$
		p_{\Lambda_n}(\mu+h) - p_{\Lambda_n}(\mu) = \int_{\mu}^{\mu+h} \sigma_{\Lambda_n} (t) \dd t
	$$
	and find 
	$$
		p(\mu+h) - p(\mu) = \int_\mu^{\mu+h} \sigma(t) \dd t 
	$$
	hence $p'(\mu)= \sigma(\mu)$.
	
	The differentiability together with the inequality $\sigma(\mu) \in (0,1)$ allow us to conclude the proof of (b): write 
	 $$
	 	p(\mu^*) - p(\mu) = \int_{\mu}^{\mu^*} \sigma(u) \dd u < \mu^*- \mu
	 $$
	and 
	$$
		p(\mu)> p(\mu^*) - \mu^* + \mu = - e_\infty + \mu. 
	$$
	This holds true for all $\mu <\mu^*$, therefore $\mu_c \geq \mu^*$ and altogether $\mu_c = \mu^*> e_\infty$. 
	
	(c) As noted above, we have $p'(\mu) = \sigma(\mu)$ for all $\mu \in (-\infty,\mu^*) = (-\infty, \mu_c)$. Proceeding as in (b) but using monotone convergence for the series $\sum_j \log (1+ \widehat z_j(\mu))$ instead of dominated convergence, we obtain 
	$$
		\sigma_c = \lim_{\mu \nearrow \mu_c} p'(\mu)= \lim_{\mu \nearrow \mu_c} \sigma(\mu) = \sigma(\mu_c).
	$$	
	In particular, $\sigma_c<1$ if and only if $\sigma(\mu_c) <1$, which in turn is equivalent to $\sum_{j=0}^\infty \widehat z_j (\mu_c)<\infty$. 
\end{proof} 

\noindent In the proof of Lemma~\ref{lem:hatphases} we have proven a number of statements that can be formulated without any reference to the effective activities. 

\begin{prop} \label{thm:ptgen}
		The critical chemical potential satisfies $\mu_c > e_\infty > - \infty$. In addition:
	\begin{enumerate}
		\item [(a)] In $(-\infty, \mu_c)$ the pressure $p(\mu)$ is strictly convex and continuously differentiable  with packing fraction $p'(\mu) = \sigma(\mu) \in (0,\sigma_c)$ and it satisfies $p(\mu) > \mu - e_\infty$. 
		\item [(b)] If $\mu_c < \infty$, then $p(\mu) = \mu- e_\infty$ for all $\mu \geq \mu_c$ and the packing fraction is $\sigma(\mu)=1$. 
	\end{enumerate} 
\end{prop}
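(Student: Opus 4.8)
The plan is to obtain Proposition~\ref{thm:ptgen} as a reorganization of what was already established inside the proof of Lemma~\ref{lem:hatphases}, rephrasing the conclusions of Section~\ref{sec:pressure} directly in terms of $\mu$ instead of in terms of the effective activities $\widehat z_j(\mu)$; only the \emph{strict} convexity in part~(a) needs a genuinely new (short) argument. Throughout I keep the notation $\mu^*=\inf\{\mu\in\R\mid\sum_j\widehat z_j(\mu)=\infty\}$ from the proof of Lemma~\ref{lem:hatphases}, so that $\mu_c=\mu^*$ by that lemma. The bound $e_\infty>-\infty$ is the standing hypothesis on $(E_j)_{j\in\N_0}$, and $\mu_c>e_\infty$ is precisely the statement of Lemma~\ref{lem:hatphases}(b); I would cite it, recalling that its proof uses only that $p$ is convex and finite — hence continuous — that $p(\mu)=\mu-e_\infty$ on $[\mu_c,\infty)$ by Theorem~\ref{thm:pressure} and~\eqref{eq:thetamu}, and that $p(\mu)>0$ for every $\mu$ because $\inf_j E_j<\infty$.

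For the remaining claims of part~(a) I would transcribe the corresponding portion of the proof of Lemma~\ref{lem:hatphases}(b). For $\mu<\mu_c$ the series $\sum_j\log\bigl(1+\widehat z_j(\mu)\bigr)$ converges; dominating $\widehat z_j(\cdot)$ near $\mu$ by $\widehat z_j(\mu')$ for a fixed $\mu'\in(\mu,\mu_c)$, dominated convergence shows that the packing fraction $\sigma(\mu)=1-\prod_j(1+\widehat z_j(\mu))^{-1}$ of~\eqref{eq:sigmaform} is continuous in $\mu$. Integrating the identity $p'_{\Lambda_n}(\mu)=\sigma_{\Lambda_n}(\mu)$ and letting $n\to\infty$ then gives $p\in C^1(-\infty,\mu_c)$ with $p'(\mu)=\sigma(\mu)\in(0,1)$ (Theorem~\ref{thm:density}(a)). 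Finally $p(\mu)>\mu-e_\infty$ follows from $p(\mu_c)-p(\mu)=\int_\mu^{\mu_c}\sigma(u)\,\dd u<\mu_c-\mu$ together with $p(\mu_c)=\mu_c-e_\infty$; when $\mu_c=\infty$ the same inequality with any $\mu'>\mu$ in place of $\mu_c$, combined with $p(\mu')\geq\mu'-e_\infty$, still yields $p(\mu)>\mu-e_\infty$ (and here $\sum_j\widehat z_j(\mu)<\infty$ for all $\mu$ by Lemma~\ref{lem:hatphases}(b)).

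It remains to upgrade convexity to strict convexity and $\sigma(\mu)\in(0,1)$ to $\sigma(\mu)\in(0,\sigma_c)$ in part~(a), and to prove part~(b). Since $p$ is convex with $p'=\sigma$ on $(-\infty,\mu_c)$, strict convexity there is equivalent to $\mu\mapsto\sigma(\mu)$ being strictly increasing, and by~\eqref{eq:sigmaform} together with Lemma~\ref{lem:hatphases}(a) this follows as soon as \emph{one} factor $\widehat z_{j_0}(\mu)$ is strictly increasing. I would take $j_0$ with $E_{j_0}<\infty$ (available by the non-degeneracy hypothesis). If $j_0=0$ then $\widehat z_0(\mu)=z_0(\mu)=\e^{\mu-E_0}$ is visibly strictly increasing. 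If $j_0\geq1$, then by~\eqref{eq:zhat-rescaled} $\widehat z_{j_0}(\mu)$ equals $\e^{-E(B_{j_0})}$ times the $(-2^d)$-th power of $S(\mu):=\sum_{\{X_1,\ldots,X_r\}}\e^{-\mu\,|B_{j_0-1}\setminus\bigcup_i X_i|}\e^{-\sum_i E(X_i)}$, summed over families of pairwise disjoint cubes in $B_{j_0-1}$; the empty family contributes the term $\e^{-\mu\,|B_{j_0-1}|}$ with $|B_{j_0-1}|\geq1$, so $S$ is a strictly decreasing positive function of $\mu$ and $\widehat z_{j_0}(\mu)$ is strictly increasing and positive. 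Hence $\sigma$ is strictly increasing, $p$ is strictly convex on $(-\infty,\mu_c)$, and $\sigma(\mu)<\lim_{\nu\nearrow\mu_c}\sigma(\nu)=\sigma_c$, which with $\sigma(\mu)>0$ gives $\sigma(\mu)\in(0,\sigma_c)$; this completes~(a). For~(b), let $\mu_c<\infty$; for $\mu>\mu_c=\mu^*$ the monotonicity of the $\widehat z_j$ forces $\sum_j\widehat z_j(\mu)=\infty$, so Theorem~\ref{thm:density}(b) gives $p(\mu)=\theta^*(\mu)=\mu-e_\infty$ with packing fraction~$1$, and $p(\mu_c)=\mu_c-e_\infty$ follows by continuity of $p$. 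I would add the remark that $\sigma(\mu_c)=1$ holds precisely when the transition is continuous, while in the first-order case $\sigma(\mu_c)=\sigma_c<1$ by Theorem~\ref{thm:density}(a), even though $p(\mu_c)=\mu_c-e_\infty$ still holds.

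I expect essentially all of this to be routine bookkeeping; the only step not already recorded is the strict convexity. Its one subtlety is that the effective activity one uses must not vanish identically, which is why one first selects a block type $j_0$ of finite energy; once that is done the strict $\mu$-dependence of $\widehat z_{j_0}$ is automatic, because the empty configuration of $B_{j_0-1}$ always contributes a positive, genuinely $\mu$-dependent term to $S(\mu)$ in~\eqref{eq:zhat-rescaled}. A secondary, purely cosmetic point is the degenerate case $e_\infty=+\infty$, which forces $\mu_c=+\infty$ as well and is tacitly outside the strict inequality $\mu_c>e_\infty$.
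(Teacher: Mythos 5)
Your proposal is correct and follows essentially the same route as the paper: everything except strict convexity is lifted from the proof of Lemma~\ref{lem:hatphases}, and strict convexity is reduced to strict monotonicity of $\mu\mapsto\sigma(\mu)$, which in turn comes from the strict monotonicity of $\widehat z_{j_0}(\mu)$ for some $j_0$ with $E_{j_0}<\infty$, read off from Eq.~\eqref{eq:zhat-rescaled}. You spell out the last step (the empty-family term $\e^{-\mu|B_{j_0-1}|}$ forcing strict decrease of $S(\mu)$) slightly more explicitly than the paper, and your side remark about $\sigma(\mu_c)$ in the first-order case is a fair observation about the phrasing of part~(b), but the substance of the argument is the same.
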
 

\begin{proof}
	All statements except the strict convexity in $(- \infty, \mu_c)$ have been shown in the proof of Lemma~\ref{lem:hatphases}. The strict convexity follows from the strict monotonicity of $\sigma(\mu)$: Let $\mu_1 < \mu_2 < \mu_c$. 
Then 
$$
	\sum_{j=0}^\infty \log (1+ \widehat z_j(\mu_1)) \leq 	\sum_{j=0}^\infty \log (1+ \widehat z_j(\mu_2)) < \infty
$$
and, because of the monotonicity from Lemma~\ref{lem:hatphases}(a), 
\be \label{eq:ptdiff}
	\sum_{j=0}^\infty \Bigl( \log (1+ \widehat z_j(\mu_2)) - \log (1+ \widehat z_j(\mu_1)) \Bigr) \geq \log (1+ \widehat z_k(\mu_2)) - \log (1+ \widehat z_k(\mu_1)) 
\ee
for all $k\in \N_0$. Eq.~\eqref{eq:zhat-rescaled} shows that if $E_k<\infty$---which is the case for at least one $k\in \N_0$---then $\widehat z_k(\mu)$ is strictly increasing in $\mu$. Therefore the difference~\eqref{eq:ptdiff} is strictly positive and Eq.~\eqref{eq:sigmaform} yields  $\sigma(\mu_1) <\sigma(\mu_2)$. 
\end{proof}

\subsection{Fixed point iteration. Absence of phase transition}

The recurrence relation $\Xi_{\Lambda_{ n+1}} = z_{n+1} + (\Xi_{\Lambda_n})^{2^d}$ encountered in the proof of Theorem~\ref{thm:pressure} leads to a recurrence relation for the inverse probability of finding one large block. Indeed,
$$
	\frac{\Xi_{\Lambda_n}}{z_n} = 1+ \frac{z_{n-1}^{2^d}}{z_n} \Bigl( \frac{\Xi_{\Lambda_{n-1}}}{z_{n-1}}\Bigr)^{2^d}. 
$$
Thus if we set
$$
	v_n(\mu):= \frac{\Xi_{\Lambda_n}(\mu)} {z_{n}(\mu)}= \frac{1}{\P^\mu_{\Lambda_n}(\omega =\{\Lambda_n\})}
$$
and 
\be \label{eq:epsdef}
	\eps_n:= \frac{(z_{n-1}(\mu) )^{2^d}}{z_n(\mu)} = \exp(E_n - 2^d E_{n-1})\qquad (n\in \N),
\ee
then 
\be \label{eq:vniteration}
	v_{n} (\mu)= 1 + \eps_n \bigl(v_{n-1}(\mu)\bigr)^{2^d}\qquad (n\in \N)
\ee
and 
$$
	v_0(\mu) = 1+ \frac{1}{z_0(\mu)} = 1+ \e^{-\mu} \e^{\beta E_0}. 
$$
Notice that the $\mu$-dependence drops out from the ratio $z_{n-1}(\mu)^{2^d} / z_n(\mu)$ so that $\eps_n$ in~\eqref{eq:vniteration} does not depend on $\mu$. Thus the sequence $(v_n(\mu))_{n\in \N_0}$ is computed recursively and the only explicit $\mu$-dependence is through the initial condition $v_0(\mu)$. 

For energies $(E_n)_{n\in \N}$ leading to constant ratios $\eps_n \equiv \eps$, the iteration defining $v_n(\mu)$ is a fixed point iteration that is straightforward to analyze. Set
\be \label{eq:fepsdef} 
	f_\eps(x):= 1 + \eps x^{2^d}, \quad c_d:= \sup_{x\geq 1}\frac{x-1}{x^{2^d}}.
\ee
Notice $c_d\in (0,1)$. The following case distinction is relevant for this section and the following: 
\begin{enumerate} 
	\item If $\eps > c_d$, then $f_\eps(x)>x$ for all $x\geq 0$. 
	\item If $\eps< c_d$, then the equation $x = f_\eps(x)$ has exactly two solutions $x_-< x_+$ in $(0,\infty)$. They satisfy $1 \leq x_- < x_+$. The smaller fixed point is attractive ($f'_\eps(x_-)\in (0,1)$), the larger fixed point is repulsive ($f'_\eps(x_+)>1$).
	\item If $\eps = c_d$, then $f_\eps$ has exactly one fixed point. The fixed point satisfies $f'_\eps(x) =1$. 
\end{enumerate} 

\begin{theorem} \label{thm:ptabsence}
	Suppose 
	$$
	\liminf_{j\to \infty} \eps_j = \liminf_{j\to \infty} \exp( E_j -  2^d E_{j-1}) > c_d.
	$$
	Then $\mu_c =\infty$. 
\end{theorem}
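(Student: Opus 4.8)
The plan is to invoke Lemma~\ref{lem:hatphases}(b): since $\mu_c<\infty$ holds precisely when $\sum_{j=0}^\infty\widehat z_j(\mu)=\infty$ for some $\mu\in\R$, it suffices to prove that $\sum_{j=0}^\infty\widehat z_j(\mu)<\infty$ for \emph{every} $\mu\in\R$. The first step is to express the effective activities through the sequence $(v_n(\mu))_{n\in\N_0}$ from~\eqref{eq:vniteration}. Because $|\Lambda_{n-1}|=|B_{n-1}|$ and $|B_n|/|B_{n-1}|=2^d$, the definition $\widehat z_n=z_n\e^{-|B_n|p_{n-1}}$ reads $\widehat z_n(\mu)=z_n(\mu)\,\Xi_{\Lambda_{n-1}}(\mu)^{-2^d}$; substituting $\Xi_{\Lambda_{n-1}}(\mu)=z_{n-1}(\mu)v_{n-1}(\mu)$, recalling $\eps_n=z_{n-1}(\mu)^{2^d}/z_n(\mu)$ from~\eqref{eq:epsdef}, and using~\eqref{eq:vniteration} gives the identity
\be
	\widehat z_n(\mu)=\frac{1}{\eps_n\,v_{n-1}(\mu)^{2^d}}=\frac{1}{v_n(\mu)-1}\qquad(n\geq1).
\ee
Thus the claim reduces to showing $\sum_{n\geq1}\bigl(v_n(\mu)-1\bigr)^{-1}<\infty$, i.e.\ that $v_n(\mu)$ grows fast. (Blocks with $E_j=\infty$ have $z_j=0$, hence $\widehat z_j=0$, and are harmless; and under the hypothesis one checks that $0<z_j<\infty$ for all large $j$, so in the tail everything is finite and $v_n(\mu)>1$.)

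The second step is a monotone comparison with the autonomous iteration $f_{\eps'}$. Fix $\mu\in\R$. Choose $\eps'\in(c_d,\liminf_j\eps_j)$ and $N\in\N$ with $\eps_j\geq\eps'$ for all $j\geq N$; the first $N$ terms $\widehat z_0(\mu),\dots,\widehat z_{N-1}(\mu)$ are each finite (indeed $\widehat z_j\leq z_j$ since $\Xi_{\Lambda_{j-1}}\geq1$), so only the tail matters. Set $w_N:=1$ and $w_n:=f_{\eps'}(w_{n-1})=1+\eps'w_{n-1}^{2^d}$ for $n>N$. Since $f_{\eps'}$ is increasing on $[0,\infty)$, since $v_N(\mu)\geq1=w_N$ (because $\Xi_{\Lambda_N}\geq z_N$), and since $\eps_n\geq\eps'$, an induction through~\eqref{eq:vniteration} gives $v_n(\mu)\geq w_n$ for all $n\geq N$. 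Because $\eps'>c_d=\sup_{x\geq1}(x-1)x^{-2^d}$, we have $f_{\eps'}(x)>x$ for every $x\geq0$; hence $(w_n)_{n\geq N}$ is strictly increasing, and if it were bounded its limit $L$ would satisfy $L=f_{\eps'}(L)>L$, which is absurd. Therefore $w_n\to\infty$.

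The third step extracts a summable growth rate. Since $2^d\geq2$ one may fix $x_0\geq1$ with $\eps'x_0^{2^d}\geq2x_0$; pick $M\geq N$ with $w_M\geq x_0$, so that $w_n\geq x_0$ for all $n\geq M$ by monotonicity. Then for $n>M$,
$$
	w_n-1=\eps'w_{n-1}^{2^d}\geq\eps'x_0^{2^d-1}w_{n-1}\geq2w_{n-1}\geq2(w_{n-1}-1),
$$
so $w_n-1\geq2^{\,n-M}(w_M-1)$ with $w_M-1=\eps'w_{M-1}^{2^d}>0$. Consequently
$$
	\sum_{n\geq1}\widehat z_n(\mu)=\sum_{n\geq1}\frac{1}{v_n(\mu)-1}\leq\sum_{n=1}^{M}\frac{1}{v_n(\mu)-1}+\frac{1}{w_M-1}\sum_{k\geq1}2^{-k}<\infty,
$$
and adding the finite quantity $\widehat z_0(\mu)$ gives $\sum_{j\geq0}\widehat z_j(\mu)<\infty$. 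Since $\mu\in\R$ was arbitrary, Lemma~\ref{lem:hatphases}(b) yields $\mu_c=\infty$.

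The argument has no deep obstacle; the point that needs care is that the hypothesis is a $\liminf$, so one controls $\eps_j$ only from some index $N$ onward and must therefore start the autonomous comparison at $N$ using merely the crude bound $v_N(\mu)\geq1$, absorbing the first $N$ terms of $\sum_j\widehat z_j(\mu)$ (all finite) into the remainder. Once that is set up, the ``no fixed point'' property $f_{\eps'}(x)>x$ forces $w_n\to\infty$, and super‑geometric growth of $w_n-1$ — far more than is needed for summability of $1/(v_n(\mu)-1)$ — comes essentially for free from $2^d\geq2$.
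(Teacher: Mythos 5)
Your proof is correct, and it closes the argument along a slightly different line than the paper. Both proofs share the core mechanism: compare the monotone iteration $v_n(\mu) = 1 + \eps_n v_{n-1}(\mu)^{2^d}$ against the autonomous map $f_{\eps'}$ for some $\eps'\in(c_d,\liminf_j\eps_j)$ from a suitable index onward, and use the ``no fixed point'' property $f_{\eps'}(x)>x$ on $[0,\infty)$ to force $v_n(\mu)\to\infty$. The paper then upgrades this to growth that is exponential in the volume $|B_n|$ (via the rescaling $\delta^{1-2^d}\eps'=1$, giving $\frac{1}{|B_n|}\log(\delta v_n)$ nondecreasing and eventually positive) and concludes directly that $p(\mu)-(\mu-e_\infty)=\lim_n\frac{1}{|B_n|}\log v_n(\mu)>0$, with no appeal to Lemma~\ref{lem:hatphases}. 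You instead isolate the clean identity $\widehat z_n(\mu)=1/(v_n(\mu)-1)$, reduce the claim to $\sum_n\widehat z_n(\mu)<\infty$ for every $\mu$ via Lemma~\ref{lem:hatphases}(b), and then need only \emph{geometric} growth of $v_n(\mu)-1$, which is far weaker than what the iteration actually delivers and can be extracted with a one-line comparison once $w_n\geq x_0$. Both routes are valid: the paper's is more self-contained (it avoids importing Lemma~\ref{lem:hatphases}), whereas yours makes the role of the effective activities transparent and gets away with a coarser growth estimate. Your parenthetical remarks on blocks with $E_j=\infty$ are appropriate; the identity $\widehat z_n=1/(v_n-1)$ degenerates to $0=0$ in that case and those terms drop out of the sum, exactly as you note.
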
 

\noindent Because of $c_d< 1$, the theorem applies in particular to the reference measure for which $E_j\equiv 0$ and we find that there are no entropy-driven phase transitions. 

\begin{cor}
	If $E_j\equiv 0$, then $\mu_c = \infty$. 
\end{cor}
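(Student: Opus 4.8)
The plan is to obtain the corollary as an immediate special case of Theorem~\ref{thm:ptabsence}. The first step is to evaluate the ratios $\eps_j$ that enter the hypothesis of that theorem. By the definition~\eqref{eq:epsdef}, $\eps_j = \exp(E_j - 2^d E_{j-1})$, so setting $E_j \equiv 0$ gives $\eps_j = \e^0 = 1$ for every $j \in \N$; hence $\liminf_{j\to\infty}\eps_j = 1$.

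The second step is to compare this with the constant $c_d$ from~\eqref{eq:fepsdef}. As noted in the text just below~\eqref{eq:fepsdef}, one has $c_d \in (0,1)$, so in particular $c_d < 1 = \liminf_{j\to\infty}\eps_j$. Therefore the hypothesis of Theorem~\ref{thm:ptabsence} is satisfied, and the theorem gives $\mu_c = \infty$, which is exactly the assertion of the corollary.

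Before applying the theorem I would also check that the standing assumptions of the $\mu$-dependent model hold for $E_j \equiv 0$: the limit $e_\infty = \lim_{j\to\infty} E_j/|B_j|$ exists and equals $0 > -\infty$, and $E_j < \infty$ for (every, hence at least one) $j \in \N_0$. Both are trivial. I do not expect any genuine obstacle here: the entire content of the corollary is carried by Theorem~\ref{thm:ptabsence}, whose proof analyzes the fixed point iteration $v_n = 1 + \eps\, v_{n-1}^{2^d}$ and exploits that $f_\eps(x) > x$ for all $x \geq 0$ whenever $\eps > c_d$. The corollary merely records the qualitative conclusion that the purely entropic reference model, with no energetic bias toward large blocks, exhibits no phase transition.
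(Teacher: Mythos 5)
Your proof is correct and is exactly the argument the paper uses: with $E_j\equiv 0$ one has $\eps_j\equiv 1 > c_d$ (since $c_d\in(0,1)$), so Theorem~\ref{thm:ptabsence} applies directly. No issues.
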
 

\begin{proof}[Proof of Theorem~\ref{thm:ptabsence}]
	Fix $\mu \in \R$ and suppress the $\mu$-dependence from the notation. By the assumption of the theorem there exists $n_0\in \N$
	and $\eps>c_d$  such that $\eps_n > \eps$ for all $n\geq n_0$. Then $v_{n_0+k} \geq f_\eps^k (v_{n_0})$ for all $k\in \N_0$. A close look at the fixed point iteration $x_{k+1} = f_\eps(x_k)$, based on the case distinction sketched above, shows that $f_\eps^k(x_0)$ goes to infinity for all $x_0\geq 0$. Consequently $v_n\to \infty$ as $n\to \infty$. 
	We check that the divergence is in fact exponentially fast. 
	For $n \geq n_0$ we have $v_n =1+ \eps_n v^{2^d}_{n-1}\geq \eps v_{n-1}^{2^d}$ hence for all $\delta >0$, 
	$$
		\delta v_n \geq \delta^{1- 2^d} \eps \times (\delta v_{n-1})^{2^d}.
	$$
	Let $\delta>0$ be the solution of $\delta^{1- 2^d} \eps =1$, then 
	$$
		\frac{1}{|B_n|} \log(\delta v_n) \geq 	\frac{1}{|B_{n-1}|} \log(\delta v_{n-1})
	$$
	for all $n\geq n_0$. Pick $k\geq n_0$ with $\delta v_{k}> 1$, which exists because of $v_n\to \infty$. Then for all $n\geq k$ we have 
	$$
		\delta v_n \geq (\delta v_k)^{|B_n|/|B_k|}.
	$$
	In particular $v_n\to \infty$ exponentially fast. 
	To conclude, we turn back to the pressure, bring the $\mu$-dependence back into the notation, and note
	$$
		p(\mu) - (\mu - e_\infty) = \liminf_{n\to \infty} \frac{1}{|B_n|} \log \frac{\Xi_{\Lambda_n}(\mu)}{z_n(\mu)} = \liminf_{n\to \infty}\frac{1}{|B_n|}\log v_n(\mu) >0.
	$$
	Thus $p(\mu)> \mu - e_\infty$. This holds true for every $\mu \in \R$, therefore $\mu_c =\infty$. 
\end{proof} 

\subsection{Continuous phase transition. Scaling limit} 

Here we consider a model where each block has the same energy. Thus we assume that for some $\lambda\in\R$, 
$$
	\forall j \in \N_0:\quad E_j = \lambda. 
$$
The total energy $\sum_{B\in \omega} E(B)$ is then simply $\lambda$ times the number of blocks in a configuration, the Boltzmann factor is given by $\e^{-\lambda}$ to the power of the number of blocks, a feature somewhat  reminiscent of random cluster models \cite[Chapter 6]{georgii-haggstrom-maes2001}.

The constant sequence $E_j\equiv \lambda$ has $e_\infty = \lim_{j\to \infty}E_j/|B_j| =0$. The ratio $\eps_n$ from~Eq.~\eqref{eq:epsdef} is constant and equal to
$$
	\eps(\lambda):= \e^{- (2^d - 1)\lambda}. 
$$
We can therefore analyze the system with the fixed point iteration from the previous section. Set 
$$
	\lambda_d:= - \frac{\log c_d}{2^d-1}
$$
and notice $\lambda_d>0$. 
If $\eps(\lambda)>c_d$ i.e. $\lambda< \lambda_d$, then Theorem~\ref{thm:ptabsence} tells us that $\mu_c=\infty$ and the system has 
no phase transition. 

If $\eps(\lambda) <c_d$ i.e. $\lambda > - (2^d-1)^{-1} \log c_d$, then by case (2) below~\eqref{eq:fepsdef}, the function $f_{\eps(\lambda)}(x)$ has two fixed points $0< x_-(\lambda)< x_+(\lambda)$. 

\begin{theorem} \label{thm:ptcontinuous}
	Assume $\lambda > \lambda_d= - (2^d -1)^{-1}\log c_d$ and let $x_+(\lambda)>1$ be the repulsive fixed point of the map $\R_+\ni x\mapsto 1+ \eps(\lambda) x^{2^d}$.  Then the system undergoes a phase transition at 
	$$
		\mu_c(\lambda) = \lambda - \log \bigl( x_+(\lambda) - 1\bigr)
	$$
and the phase transition is continuous.
\end{theorem}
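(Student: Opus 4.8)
The plan is to express the whole problem through the fixed point sequence $(v_n(\mu))_{n\in\N_0}$ of~\eqref{eq:vniteration} and then invoke Lemma~\ref{lem:hatphases}. The first step is to record the identity
\[
	\widehat z_n(\mu) = \frac{1}{v_n(\mu)-1}\qquad (n\in\N_0).
\]
For $n\geq 1$ this follows by combining the rewriting $\widehat z_n = z_n/\Xi_{\Lambda_{n-1}}^{2^d}$ (which is the definition of $\widehat z_n$ together with $|B_n|p_{n-1}=2^d\log\Xi_{\Lambda_{n-1}}$), the identity $\Xi_{\Lambda_{n-1}}(\mu)=v_{n-1}(\mu)z_{n-1}(\mu)$, the relation $z_n(\mu)/z_{n-1}(\mu)^{2^d}=1/\eps_n=1/\eps(\lambda)$ from~\eqref{eq:epsdef}, and the recursion $v_n(\mu)-1=\eps(\lambda)v_{n-1}(\mu)^{2^d}$; for $n=0$ it reduces to $\widehat z_0(\mu)=z_0(\mu)=\e^{\mu-\lambda}=1/(v_0(\mu)-1)$, since here $v_0(\mu)=1+\e^{\lambda-\mu}$. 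By Lemma~\ref{lem:hatphases} it then suffices to determine for which $\mu$ the series $\sum_{n\geq 0}(v_n(\mu)-1)^{-1}$ diverges.

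The second step is the case analysis of the iteration $v_n(\mu)=f_{\eps(\lambda)}(v_{n-1}(\mu))$ started from $v_0(\mu)=1+\e^{\lambda-\mu}>1$, using that $\lambda>\lambda_d$ forces $\eps(\lambda)<c_d$, so by case (2) in the list following~\eqref{eq:fepsdef} the map $f_{\eps(\lambda)}$ has fixed points $1<x_-(\lambda)<x_+(\lambda)$ (the lower inequality strict because $\eps(\lambda)>0$), is increasing on $\R_+$, is attracted to $x_-(\lambda)$ on $(1,x_+(\lambda))$ and diverges to $\infty$ on $(x_+(\lambda),\infty)$. Since $\e^{\lambda-\mu}<x_+(\lambda)-1$ is equivalent to $\mu>\lambda-\log(x_+(\lambda)-1)=\mu_c(\lambda)$, the three cases $v_0(\mu)<x_+(\lambda)$, $v_0(\mu)=x_+(\lambda)$, $v_0(\mu)>x_+(\lambda)$ correspond respectively to $\mu>\mu_c(\lambda)$, $\mu=\mu_c(\lambda)$, $\mu<\mu_c(\lambda)$. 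For $\mu>\mu_c(\lambda)$ the sequence $v_n(\mu)$ converges monotonically to $x_-(\lambda)>1$, and for $\mu=\mu_c(\lambda)$ it is constantly equal to $x_+(\lambda)$; in both cases $(v_n(\mu)-1)^{-1}$ stays bounded away from $0$, so $\sum_n(v_n(\mu)-1)^{-1}=\infty$. For $\mu<\mu_c(\lambda)$ one has $v_n(\mu)\to\infty$, and the exponential-divergence estimate from the proof of Theorem~\ref{thm:ptabsence} applies verbatim (choose $\delta>0$ with $\delta^{1-2^d}\eps(\lambda)=1$; from $v_n\geq\eps(\lambda)v_{n-1}^{2^d}$ one gets $\frac{1}{|B_n|}\log(\delta v_n)\geq\frac{1}{|B_{n-1}|}\log(\delta v_{n-1})$, hence $\delta v_n(\mu)\geq(\delta v_k(\mu))^{|B_n|/|B_k|}$ once $\delta v_k(\mu)>1$), so $v_n(\mu)$ grows at least doubly exponentially and $\sum_n(v_n(\mu)-1)^{-1}<\infty$.

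The final step is to assemble the conclusion. The set $\{\mu\in\R:\sum_n\widehat z_n(\mu)=\infty\}$ equals $[\mu_c(\lambda),\infty)$, so Lemma~\ref{lem:hatphases}(b) yields $\mu_c=\mu_c(\lambda)=\lambda-\log(x_+(\lambda)-1)<\infty$; hence a phase transition occurs at $\mu_c(\lambda)$. Moreover $\sum_n\widehat z_n(\mu_c)=\infty$, so $\prod_n(1+\widehat z_n(\mu_c))^{-1}=0$ and by Lemma~\ref{lem:hatphases}(c) the packing fraction at criticality is $\sigma_c=1-\prod_n(1+\widehat z_n(\mu_c))^{-1}=1$, i.e.\ the transition is continuous rather than of first order. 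The only mildly technical point is the summability of $\sum_n(v_n(\mu)-1)^{-1}$ in the regime $\mu<\mu_c(\lambda)$, but this is not a genuine obstacle: it is a direct reprise of the estimate already carried out in the proof of Theorem~\ref{thm:ptabsence}, and the remaining work is the elementary bookkeeping linking $\widehat z_n$ to $v_n$ together with the standard one-dimensional dynamics of the convex map $f_{\eps(\lambda)}$.
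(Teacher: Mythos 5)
Your proof is correct, and the substance is essentially the same as the paper's: both hinge on the $\mu$-independence of the one-dimensional iteration $v_n = f_{\eps(\lambda)}(v_{n-1})$, the trichotomy $v_0(\mu)\lessgtr x_+(\lambda)$, and the exponential divergence in the subcritical regime borrowed from the proof of Theorem~\ref{thm:ptabsence}. The one place you diverge is in how you pass from the dynamics of $v_n$ to the value of $\mu_c$: the paper computes $p(\mu)-\mu = \lim_n |B_n|^{-1}\log v_n(\mu)$ in each of the three cases and then reads off $\mu_c$ from its definition as $\inf\{\mu: p(\mu)=\mu\}$, whereas you instead establish the explicit identity $\widehat z_n(\mu) = 1/(v_n(\mu)-1)$ and invoke the characterization $\mu_c = \inf\{\mu:\sum_n\widehat z_n(\mu)=\infty\}$ from Lemma~\ref{lem:hatphases}(b). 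Your identity is a nice observation that the paper does not record, and it slightly streamlines both halves of the argument: the location of $\mu_c$ drops out immediately from $\widehat z_n(\mu)\to 1/(x_\pm-1)>0$ versus $\widehat z_n(\mu)\to 0$ super-exponentially, and at criticality you get $\widehat z_n(\mu_c)=1/(x_+(\lambda)-1)$ \emph{exactly} constant, which is a cleaner route to $\sum_n\widehat z_n(\mu_c)=\infty$ than the paper's lower bound $\widehat z_j(\mu_c)\geq\e^{-\lambda}$ obtained from $\mu_c=p(\mu_c)\geq p_{j-1}(\mu_c)$. The final step via Lemma~\ref{lem:hatphases}(c) is identical in both proofs. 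In short: same engine, marginally tidier bookkeeping on your side.
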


\begin{proof}
	To lighten notation we suppress the $\lambda$-dependence. Set $\mu^*:= \lambda - \log (x_+ - 1)$ and note 
	$$
		v_0(\mu^*) = 1+ \exp(- \mu^*+\lambda) =x_+(\lambda).
	$$	
	 Our task is to show $\mu_c = \mu^*$. To that aim we return to the fixed point iteration for the inverse probability of finding a large block and the case distinction below~\eqref{eq:fepsdef}: 
	\begin{enumerate} 
		\item If $\mu>\mu^*$, then $v_0(\mu)<x_+(\lambda)$ and $v_0(\mu)$ belongs to the domain of attraction of the fixed point $x_-(\lambda)$ and $v_n(\mu)\to x_-(\lambda)$ as $n\to \infty$. 
		\item If $\mu = \mu^*$, then $v_0(\mu) = x_+(\lambda)$ and $v_n(\mu) = x_+(\lambda)$ for all $n\in \N_0$. 
		\item If $\mu< \mu^*$, then $v_0(\mu) > x_+(\lambda)$ and $v_n(\mu) \to \infty$.
\end{enumerate} 
	In the cases (1) and (2) we have 
	$$
			p(\mu) - \mu = \lim_{n\to \infty} \frac{1}{|B_n|}\log \frac{\Xi_{\Lambda_n}(\mu)}{z_n(\mu)} = \lim_{n\to \infty} \frac{1}{|B_n|}\log v_n(\mu) =0.
	$$
	Thus $p(\mu) = \mu$ for all $\mu \geq \mu^*$. 
	Proceeding as in the proof of Theorem~\ref{thm:ptabsence}, one shows that the divergence in case (3) is exponentially fast and concludes $p(\mu)>\mu$. 
	Thus $p(\mu) = \mu$ if and only if $\mu\geq \mu^*$, consequently $\mu_c = \mu^* <\infty$. In particular, the system undergoes a phase transition. 
		
	The effective activity at $\mu = \mu_c$ is given by 
	$$
		\widehat z_j(\mu_c) = \exp\Bigl(- \lambda + |B_j|\bigl(\mu_c - p_{j-1}(\mu_c)\bigr)\Bigr).
	$$
	Because of $\mu_c = p(\mu_c) \geq p_{j-1}(\mu_c)$, it follows that 
	$\widehat z_j(\mu_c) \geq \exp( - \lambda)$ and $\sum_{j=0}^\infty \widehat z_j(\mu_c) =\infty$.  We deduce from Lemma~\ref{lem:hatphases}(c) 
	that the phase transition is continuous. 
\end{proof} 

\noindent The mixture of hierarchical cubes is closely related to Mandelbrot's percolation process~\cite{mandelbrot1982fractal, chayes-chayes-durrett88}. 
Let us define a sequence of random subsets of the unit cube by rescaling $\Lambda_n=\{1,\ldots, 2^n\}^d$. Let $\mathcal K$ be the collection of compact subsets of $[0,1]^d$, equipped with the Hausdorff distance and Borel $\sigma$-algebra $\mathcal B_\mathcal K$. Let us first map a block $B\subset \mathbb B\subset \Z^d$ to its continuum counterpart $B'\subset \R^d$ given by 
$$
	B' = \bigcup_{\vect k \in B} \bigl[ k_1-1,k_1]\times \cdots \times \bigl[ k_d-1,k_d].
$$
Thus $B'$ is the cube in $\R^d$ obtained as the union of unit cubes with upper right corners $\vect k \in B\subset \Z^d$.
If $B\subset \Lambda_n$ then $B'\subset[0, 2^n]^d$. 
For $n\in \N_0$, define the random variable $K_n:(\Omega_{\Lambda_n},\mathcal P(\Omega_{\Lambda_n}), \mathbb P_{\Lambda_n})\to (\mathcal K,\mathcal B_\mathcal K)$ by 
$$
	K_n(\omega):= \bigcup_{B\in \omega} \frac1{2^n} B'.
$$
Further let $F_n(\omega)$ be the closure of $[0,1]^d \setminus K_n(\omega)$. 
The random set $K_n(\omega)$ is constructed as a union of cubes of sidelengths $1, \frac 12,\ldots, \frac1{2^n}$,  roughly as follows. 
\begin{itemize}
	\item With probability $1/v_n(\mu)$ the random set is equal to the whole unit cube, $K_n(\omega) = [0,1]^d$. 
	\item With probability $1 - 1/v_n(\mu)$, the random set is strictly smaller than the whole unit cube. In that case we decide independently for each  of the $2^d$ subcubes ($[0,\frac12]^d$ and its translates) whether to add or not add it to $K_n(\omega)$; a subcube is added with probability $1/v_{n-1}(\mu)$. This results in a set $A_{n,1}(\omega)$ that is a union of cubes of sidelength $1/2$. Then, for each subcube that has not been added, we repeat the construction for each of the $2^d$ subsubcubes, to be added with probability $1/v_{n-2}(\mu) $. We iterate until we have reached the smallest cubes of sidelength $2^{-n}$, associated with the probability $1/v_0(\mu)$. 	
\end{itemize} 
If the sequence $v_n(\mu)$ is $n$-independent, let us write $q\equiv 1/v_n(\mu)$, $p= 1- q$, and suppress the $\mu$-dependence. Then we may think of $K_n$ as a growing family of subsets of $[0,1]^d$ and accordingly of $F_n(\omega)$ as a decreasing family, and set $F(\omega) = \cap_{n\in \N_0} F_n(\omega)$; we owe to S. Winter the remark that $F(\omega)$ should correspond to a special instance of Mandelbrot's percolation process~\cite{mandelbrot1982fractal, chayes-chayes-durrett88}. 

Revisiting the case distinctions on the asymptotic behavior of $(v_n(\mu))_{n\in \N_0}$ we may expect the following behavior, under the assumption $\lambda>\lambda_d$ and after restoration of the $\mu$-dependence in the notation:
\begin{enumerate} 
	\item If $\mu = \mu_c(\lambda)$ then as $n\to \infty$ the distribution of $K_n^\mu$ should converge in some suitable sense to a process where at each scale, a block is added with probability $1/x_-(\lambda)$, with $x_-(\lambda)$ the repulsive fixed point of $x\mapsto 1 + \eps(\lambda) x^{2^d}$. 
	\item If $\mu >\mu_c(\lambda)$ the distribution of $K_n^\mu$ should converge in some suitable sense to a process where at each scale, a block is added with probability $1/x_+(\lambda)$, with $x_+(\lambda)$ the attractive fixed point of $x\mapsto 1 + \eps(\lambda) x^{2^d}$. 
\end{enumerate} 
A rigorous statement and proof (or disproof) of these statements are beyond the scope of this article.

\subsection{First-order phase transition} \label{sec:firstorder}

Finally  we provide necessary and sufficient conditions for the existence of a first-order phase transitions. The mathematical proofs carried out in this section are complemented by a heuristic discussion in Section~\ref{sec:discussion}. 

\begin{theorem} \label{thm:firstorder}
	Set $u_j:= \exp( |B_j| e_\infty - E_j)$. 
	The following two conditions are equivalent: 
	\begin{enumerate} 
		\item [(i)] There exists a family of non-negative weights $(a_k)_{k\in \N_0}$ such that $\sum_{j=0}^\infty u_j \exp(a_j)<\infty$ and 
	\be \label{eq:suff1}
		\sum_{k=j}^\infty \frac{|B_j|}{|B_k|} \log \bigl( 1+ u_k\, \e^{a_k}\bigr) \leq a_j
	\ee
	for all $j\in \N_0$.
		\item [(ii)] The mixture of cubes has a first-order phase transition. 
	\end{enumerate} 
\end{theorem}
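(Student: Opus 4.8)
The plan is to characterize a first-order phase transition through Lemma~\ref{lem:hatphases}(c), which says that a first-order transition occurs precisely when $\mu_c<\infty$ and $\sum_{j=0}^\infty \widehat z_j(\mu_c)<\infty$. Since $\widehat z_j(\mu)$ is monotone increasing in $\mu$ by Lemma~\ref{lem:hatphases}(a) and $\mu_c$ is characterized (Lemma~\ref{lem:hatphases}(b)) as the infimum of the $\mu$'s for which $\sum_j \widehat z_j(\mu)=\infty$, the existence of a first-order transition is equivalent to the existence of \emph{some} $\mu\in\R$ with $\sum_{j=0}^\infty \widehat z_j(\mu)<\infty$ such that, moreover, $p(\mu)=\theta^*(\mu)=\mu-e_\infty$ (so that this $\mu$ is already at or past $\mu_c$). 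The first step is therefore to rephrase condition (ii) as: there is $\mu\in\R$ with $\sum_j\widehat z_j(\mu)<\infty$ and $p(\mu)=\mu-e_\infty$.

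Next I would make this explicit using the formulas from Theorem~\ref{thm:pressure} and Theorem~\ref{thm:density}(a). Writing $\widehat z_j(\mu)$ in terms of $p_{j-1}(\mu)$, namely $\widehat z_j(\mu)=z_j(\mu)\e^{-|B_j|p_{j-1}(\mu)}=\exp\bigl(|B_j|(\mu-p_{j-1}(\mu))-E_j\bigr)$, and recalling $p_{j-1}(\mu)=\sum_{k=0}^{j-1}\frac{1}{|B_k|}\log(1+\widehat z_k(\mu))$, one sees that the whole system is encoded in the sequence $a_j:=|B_j|(\mu-p_{j-1}(\mu)-e_\infty)$ measured relative to the condensed-phase value. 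In these variables $\widehat z_j(\mu)=u_j\e^{a_j}$ with $u_j=\exp(|B_j|e_\infty-E_j)$ as in the statement, and the recursion $p_j=p_{j-1}+\frac{1}{|B_j|}\log(1+\widehat z_j)$ together with the condition $p(\mu)\le\mu-e_\infty$ translates into the inequality
\be
	\sum_{k=j}^\infty \frac{|B_j|}{|B_k|}\log\bigl(1+u_k\e^{a_k}\bigr)\le a_j
\ee
for every $j$, while finiteness of $\sum_j\widehat z_j(\mu)$ becomes $\sum_j u_j\e^{a_j}<\infty$. The direction (ii)$\Rightarrow$(i) then amounts to taking $a_j$ to be exactly $|B_j|(\mu_c-p_{j-1}(\mu_c)-e_\infty)$ and checking the inequality from $p(\mu_c)=\mu_c-e_\infty$ and $p_{j-1}(\mu_c)\le p(\mu_c)$; the fact that the tail sum telescopes to $|B_j|(p(\mu_c)-p_{j-1}(\mu_c))$ makes the inequality an equality in this case, so it holds a fortiori.

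For the converse (i)$\Rightarrow$(ii), given weights $(a_k)$ satisfying the hypotheses I would construct a candidate chemical potential. The natural choice is to define $\mu$ so that the effective activities are dominated by $u_k\e^{a_k}$: set $\mu:=e_\infty+\sup_j \frac{1}{|B_j|}\bigl(a_j+\sum_{k=0}^{j-1}\frac{|B_j|}{|B_k|}\log(1+u_k\e^{a_k})\bigr)$ (one checks this supremum is finite using the inequality~\eqref{eq:suff1} at $j=0$ iterated). With this $\mu$ one proves by induction on $j$ that $\widehat z_j(\mu)\le u_j\e^{a_j}$ — the induction step uses $p_{j-1}(\mu)=\sum_{k=0}^{j-1}\frac{1}{|B_k|}\log(1+\widehat z_k(\mu))\le\sum_{k=0}^{j-1}\frac{1}{|B_k|}\log(1+u_k\e^{a_k})$ from the inductive hypothesis and the definition of $\mu$ — whence $\sum_j\widehat z_j(\mu)\le\sum_j u_j\e^{a_j}<\infty$, so $\mu\ge\mu_c$ and the transition, if it exists, is first order by Lemma~\ref{lem:hatphases}(c). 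It then remains to confirm $\mu_c<\infty$, i.e.\ that some $\mu'$ has $\sum_j\widehat z_j(\mu')=\infty$; this follows from $\widehat z_j(\mu')\ge\exp(|B_j|(\mu'-p(\mu')-e_\infty)+\text{bulk})$ and letting $\mu'\to\infty$, or more directly from the monotonicity together with the fact that $\widehat z_j(\mu)$ grows like $\exp(|B_j|(\mu-e_\infty-p))$ and $p<\infty$.

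The main obstacle I anticipate is the bookkeeping in the converse direction: translating the single scalar inequality~\eqref{eq:suff1} into a uniform bound on the effective activities requires carefully chasing the self-referential definition of $p_{j-1}(\mu)$ through the inequalities, and one must verify that the supremum defining $\mu$ is genuinely finite and that $\mu\ge\mu_c$ rather than merely $\mu$ being \emph{some} admissible point. A secondary subtlety is handling the degenerate cases where some $E_j=\infty$ (so $u_j=0$ and $z_j(\mu)=0$), and the case $\mu_c=\infty$, which must be excluded — here the non-degeneracy assumption $\inf_j E_j<\infty$ and the resulting $p(\mu)>0$ from Proposition~\ref{thm:ptgen} do the work, together with the observation that condition (i) with any finite weights forces $\sum_j u_j<\infty$ (take $a_j\equiv 0$ is not assumed, but $\e^{a_j}\ge$ something positive), ruling out the close-packed-for-all-$\mu$ scenario.
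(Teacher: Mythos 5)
Your proof of the implication (ii)$\Rightarrow$(i) is correct and is essentially the paper's argument: setting $a_j = \log(\widehat z_j(\mu_c)/u_j) = |B_j|(\mu_c - e_\infty - p_{j-1}(\mu_c))$, the tail sum telescopes to $|B_j|(p(\mu_c)-p_{j-1}(\mu_c)) = a_j$, so~\eqref{eq:suff1} is an equality, and summability follows from Lemma~\ref{lem:hatphases}(c).

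The converse (i)$\Rightarrow$(ii), however, has two errors. First, the induction runs the wrong way. From the inductive hypothesis $\widehat z_k(\mu)\le u_k\e^{a_k}$ for $k<j$ you deduce an \emph{upper} bound $p_{j-1}(\mu)\le\sum_{k<j}\frac{1}{|B_k|}\log(1+u_k\e^{a_k})$. But $\widehat z_j(\mu)=u_j\exp\bigl(|B_j|(\mu-e_\infty-p_{j-1}(\mu))\bigr)$ is \emph{decreasing} in $p_{j-1}(\mu)$, so an upper bound on $p_{j-1}(\mu)$ yields a \emph{lower} bound on $\widehat z_j(\mu)$. Combined with the supremum in your definition of $\mu$, which guarantees $\mu-e_\infty\ge a_j/|B_j|+\sum_{k<j}\frac{1}{|B_k|}\log(1+u_k\e^{a_k})$ for every $j$, you in fact obtain $\widehat z_j(\mu)\ge u_j\e^{a_j}$ — the reverse of what is needed — so no control on $\sum_j\widehat z_j(\mu)$ results. (Moreover the supremum defining your $\mu$ is not manifestly finite: the inequality~\eqref{eq:suff1} only bounds $a_j$ from \emph{below}, so $a_j/|B_j|$ may grow.) Second, even if you had $\sum_j\widehat z_j(\mu)<\infty$, the correct inference from Lemma~\ref{lem:hatphases}(b) and the monotonicity of Lemma~\ref{lem:hatphases}(a) is $\mu\le\mu_c$, not $\mu\ge\mu_c$; finiteness below the critical point says nothing about finiteness at the critical point, which is what first-order means. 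The structural difficulty is that the effective activities $\widehat z_j(\mu)$ are defined by a forward recursion through $p_{j-1}(\mu)$, whereas~\eqref{eq:suff1} is a backward (tail-sum) inequality, and a naive forward induction cannot close the loop because the one-sided bounds alternate direction.

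The paper's route avoids this by working at the level of the fixed-point equation~\eqref{eq:fp1} directly, without pre-committing to any $\mu$: Lemma~\ref{lem:fp} uses condition (i) to exhibit $(u_j\e^{a_j})$ as a super-solution, iterates a monotone map from the zero sequence to obtain a genuine solution $\vect\zeta$ with $\zeta_j\le u_j\e^{a_j}$, and then Lemmas~\ref{lem:euler-lagrange1}--\ref{lem:starcrit} identify this solution, through a Lagrangian analysis of the entropy minimization, with $\widehat z_j(\mu_c)$ at the critical point. In particular one obtains $\mu_c<\infty$ and $\sum_j\widehat z_j(\mu_c)\le\sum_j u_j\e^{a_j}<\infty$ simultaneously, which is exactly the criterion of Lemma~\ref{lem:hatphases}(c). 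Your idea of engineering a chemical potential so that $\widehat z_j(\mu)\le u_j\e^{a_j}$ is the right target, but establishing it requires the backward/fixed-point viewpoint rather than a forward induction.
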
 

\begin{cor} \label{cor:firstorder} \hfill 
	\begin{enumerate} 
		\item [(a)] If there is a first-order phase transition, then necessarily $E_j \geq |B_j| e_\infty$ (i.e., $u_j\leq 1$) for all $j\in \N_0$ and $\sum_{j=0}^\infty u_j< \infty$. 
		\item [(b)] The condition $\sum_{j=0}^\infty u_j \leq 1/\e$ is sufficient for the existence of a first-order phase transition. 
	\end{enumerate} 
\end{cor}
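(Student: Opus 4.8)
The plan is to deduce both statements directly from the equivalence in Theorem~\ref{thm:firstorder}, using only elementary manipulations of the self-consistency criterion~\eqref{eq:suff1}.

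For part (a) I would invoke the implication (ii) $\Rightarrow$ (i): a first-order phase transition furnishes non-negative weights $(a_k)_{k\in\N_0}$ with $\sum_{j=0}^\infty u_j\,\e^{a_j}<\infty$ satisfying~\eqref{eq:suff1} for every $j$. In particular each $u_j$ is finite. Since every summand on the left-hand side of~\eqref{eq:suff1} is non-negative, I may retain only the term $k=j$, which carries the coefficient $|B_j|/|B_j|=1$; this yields $\log(1+u_j\,\e^{a_j})\le a_j$, hence $u_j\,\e^{a_j}\le \e^{a_j}-1$ and therefore $u_j\le 1-\e^{-a_j}\le 1$, i.e. $E_j\ge|B_j|e_\infty$ for all $j$. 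Moreover $a_j\ge 0$ gives $u_j\le u_j\,\e^{a_j}$, so $\sum_{j=0}^\infty u_j\le \sum_{j=0}^\infty u_j\,\e^{a_j}<\infty$.

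For part (b) I would check condition (i) of Theorem~\ref{thm:firstorder} with the constant choice $a_k\equiv 1$. First, $\sum_{j=0}^\infty u_j\,\e^{a_j}=\e\sum_{j=0}^\infty u_j\le 1<\infty$. For~\eqref{eq:suff1}, using $\log(1+x)\le x$ together with $|B_j|/|B_k|=2^{d(j-k)}\le 1$ for $k\ge j$, I bound, for each $j\in\N_0$,
$$
\sum_{k=j}^\infty \frac{|B_j|}{|B_k|}\log\bigl(1+u_k\,\e\bigr)\le \e\sum_{k=j}^\infty u_k\le \e\sum_{k=0}^\infty u_k\le \e\cdot\frac1\e=1=a_j .
$$
Thus condition (i) holds, and Theorem~\ref{thm:firstorder} delivers the first-order phase transition.

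There is no substantial obstacle; the only delicate point is the calibration underlying part (b). Restricting to constant weights $a_k\equiv a$, the bound above reads $\e^{a-1}\le a$, whose borderline solution is $a=1$, and this is exactly what pins the threshold to $\sum_j u_j\le 1/\e$; a more careful, $j$-dependent choice of weights would in principle improve the sufficient condition but is not needed here. One should also note that the hypothesis of (b), combined with $E_j<\infty$ for at least one $j$, already forces $e_\infty<\infty$, so all quantities appearing are well-defined.
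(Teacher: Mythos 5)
Your argument is correct and follows essentially the same route as the paper: for (a) you isolate the $k=j$ term of~\eqref{eq:suff1} to get $\log(1+u_j\e^{a_j})\le a_j$ and combine with $a_j\ge 0$; for (b) you take the constant weights $a_k\equiv 1$ and bound with $\log(1+x)\le x$ and $|B_j|/|B_k|\le 1$. The concluding remark on calibrating the constant is a nice observation but not part of the proof itself.
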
 

\begin{example}
	Let $E_j = J ( - |B_j| + |\partial B_j|)$ with $J>0$ some coupling constant and $|\partial B_j| = 2d\, 2^{j(d-1)}$ the area of the boundary of a cube of sidelength $2^j$ in $\R^d$. Then if $d\geq 2$ and $J$ is sufficiently large, the mixture of cubes has a first-order phase transition. 
\end{example}

\begin{proof}[Proof of Corollary~\ref{cor:firstorder}]
	(a) If there is a first-order phase transition, then by condition (i) in  Theorem~\ref{thm:firstorder} we must have $\sum_{j=0}^\infty u_j \leq \sum_{j=0}^\infty u_j\exp(a_j)<\infty$, moreover 
	$\log (1+ u_j \exp(a_j)) \leq a_j$ hence $u_j \leq 1- \exp( - a_j)\leq 1$. 
	
	(b) Choose $a_k\equiv 1$. Because of $\log(1+x) \leq x$ and $|B_j|\leq |B_k|$ whenever $j\leq k$ we have
	$$
		\sum_{k=j}^\infty \frac{|B_j|}{|B_k|} \log \Bigl(1+ u_k\e^{a_k}\Bigr) \leq \sum_{k=0}^\infty u_k \e^{a_k} = \Bigl(\sum_{k=0}^\infty u_k\Bigr) \e \leq 1 = a_j.
	$$
	Thus condition (i) in Theorem~\ref{thm:firstorder} is satisfied and the mixture has a first-order phase transition. 
\end{proof}

\begin{proof}[Proof of the implication $(ii)\Rightarrow (i)$ in Theorem~\ref{thm:firstorder}]
	
Suppose that the mixture of cubes has a first-order phase transition. Then 
$$
	\mu_c - e_\infty =p(\mu_c) =  \sum_{j=0}^\infty \frac{1}{|B_j|}\log (1+ \widehat z_j(\mu_c))
$$
hence 
\begin{align*}
	\widehat z_j(\mu_c) &= \exp \Bigl( |B_j|\mu_c - E_j\Bigr) \exp\Bigl( p(\mu_c) - |B_j|\,p_{j-1}(\mu_c)\Bigr) \\
		& = \exp\bigl( |B_j| e_\infty - E_j\bigr) \,\exp\Bigl(|B_j| \sum_{k=j}^\infty \frac{1}{|B_k|}\log (1+ \widehat z_k(\mu_c))\Bigr)
\end{align*} 
for all $j \in \N_0$. Equivalently,  $\zeta_j := \widehat z_j(\mu_c)$ and $u_j := \exp( |B_j| e_\infty - E_j)$, satisfy
\be \label{eq:fp1} 
	\zeta_j  = u_j \exp\Bigl(|B_j| \sum_{k=j}^\infty \frac{1}{|B_k|}\log (1+ \zeta_k)\Bigr) \quad (j\in \N_0). 
\ee
Define $a_j: =\log(\zeta_j/u_j)$, then $a_j\geq 0$ and the inequality~\eqref{eq:suff1} holds true and is actually an equality. Moreover
$$
	\sum_{j=0}^\infty u_j \e^{a_j} = \sum_{j=0}^\infty \zeta_j = \sum_{j=0}^\infty\widehat z_j(\mu_c) <\infty
$$
because the phase transition is of first order, see Lemma~\ref{lem:hatphases}(b).
\end{proof} 

\noindent The strategy for the proof of the implication $(i) \Rightarrow (ii)$ in Theorem~\ref{thm:firstorder} is as follows. First we show that if condition (i) holds true, then the fixed point equation~\eqref{eq:fp1} has at least one solution $(\zeta_j)$, see Lemma~\ref{lem:fp}. Then we turn to the computation of the free energy $\varphi(\sigma)$, which is given by a constrained minimization; we show that every solution of the fixed point problem~\eqref{eq:fp1} is associated with a critical point of the Lagrange functional $L(\vect \rho,\sigma_\infty,\mu)$ and deduce that the free energy is affine on some interval $[\sigma^*,1]$.

\begin{lemma} \label{lem:fp}
If the inequality~\eqref{eq:suff1} holds true for some family of non-negative weights $(a_k)_{k\in \N_0}$, then the fixed point problem~\eqref{eq:fp1} has at least one solution $\vect \zeta \in \R_+^\N$ that satisfies $\zeta_j \leq u_j \exp( a_j)$ for all $j\in \N_0$. 
\end{lemma}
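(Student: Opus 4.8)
The plan is to read \eqref{eq:fp1} as a fixed point equation for a monotone map on a product of compact intervals and to produce a solution by iterating from the top (a concrete instance of the Knaster--Tarski theorem). For a sequence $\vect\zeta=(\zeta_j)_{j\in\N_0}$ of non-negative reals with $\sum_k|B_k|^{-1}\log(1+\zeta_k)<\infty$, set
\[
	(T\vect\zeta)_j := u_j \exp\Bigl( |B_j| \sum_{k=j}^\infty \frac{1}{|B_k|}\log(1+\zeta_k)\Bigr) \qquad (j\in\N_0),
\]
so that $\vect\zeta$ solves \eqref{eq:fp1} if and only if $T\vect\zeta=\vect\zeta$. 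Let $\vect\zeta^{\mathrm{top}}:=(u_j\e^{a_j})_{j\in\N_0}$ and let $\mathcal C:=\prod_{j\in\N_0}[0,u_j\e^{a_j}]$, partially ordered pointwise.

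First I would verify that $T$ maps $\mathcal C$ into itself; this is exactly where the hypothesis \eqref{eq:suff1} enters. For $\vect\zeta\in\mathcal C$, monotonicity of $x\mapsto\log(1+x)$ together with \eqref{eq:suff1} gives
\[
	|B_j|\sum_{k=j}^\infty\frac1{|B_k|}\log(1+\zeta_k) \;\le\; |B_j|\sum_{k=j}^\infty\frac1{|B_k|}\log\bigl(1+u_k\e^{a_k}\bigr) \;\le\; a_j,
\]
hence $0\le(T\vect\zeta)_j\le u_j\e^{a_j}$, and in particular the defining series is finite. Taking $j=0$ in \eqref{eq:suff1} and recalling $|B_0|=1$ also shows $\sum_{k\ge0}|B_k|^{-1}\log(1+u_k\e^{a_k})\le a_0<\infty$, a summable majorant used below. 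Moreover $T$ is monotone on $\mathcal C$: $\vect\zeta\le\vect\zeta'$ pointwise implies $(T\vect\zeta)_j\le(T\vect\zeta')_j$ for every $j$.

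Next I would iterate from the top: put $\vect\zeta^{(0)}:=\vect\zeta^{\mathrm{top}}$ and $\vect\zeta^{(n+1)}:=T\vect\zeta^{(n)}$. Since $\vect\zeta^{(1)}=T\vect\zeta^{(0)}\le\vect\zeta^{(0)}$ by the previous step, monotonicity of $T$ propagates this to $\vect\zeta^{(n+1)}\le\vect\zeta^{(n)}$ for all $n$; thus for each $j$ the sequence $(\zeta^{(n)}_j)_n$ is non-increasing and converges to some $\zeta_j\in[0,u_j\e^{a_j}]$. It remains to pass to the limit in $\zeta^{(n+1)}_j=(T\vect\zeta^{(n)})_j$. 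Since $0\le\zeta^{(n)}_k\le u_k\e^{a_k}$ and $\sum_k|B_k|^{-1}\log(1+u_k\e^{a_k})<\infty$, dominated convergence for series yields $\sum_{k\ge j}|B_k|^{-1}\log(1+\zeta^{(n)}_k)\to\sum_{k\ge j}|B_k|^{-1}\log(1+\zeta_k)$, and continuity of the exponential gives $(T\vect\zeta^{(n)})_j\to(T\vect\zeta)_j$. Comparing with $\zeta^{(n+1)}_j\to\zeta_j$ shows $\zeta_j=(T\vect\zeta)_j$, i.e.\ $\vect\zeta$ solves \eqref{eq:fp1}, and the bound $\zeta_j\le u_j\e^{a_j}$ is built in.

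The only genuinely substantive point is the stability of the box $\mathcal C$ under $T$, which is precisely the content of \eqref{eq:suff1} and was engineered for it; everything else is routine monotone-iteration bookkeeping, the summability needed for the limit passage being automatic from \eqref{eq:suff1} evaluated at $j=0$.
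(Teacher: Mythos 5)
Your proof is correct and follows essentially the same monotone-iteration scheme as the paper: both identify the map $F_j(\vect\zeta)=u_j\exp\bigl(|B_j|\sum_{k\ge j}|B_k|^{-1}\log(1+\zeta_k)\bigr)$, check via~\eqref{eq:suff1} that the box $\prod_j[0,u_j\e^{a_j}]$ is invariant and that $\vect F$ is order-preserving, and then iterate. The only (inessential) difference is that you iterate downward from the super-solution $\vect w=(u_j\e^{a_j})_j$ and pass to the limit by dominated convergence, whereas the paper iterates upward from $\vect 0$ and uses monotone convergence; your construction yields the greatest fixed point in the box while the paper's yields the least, and both satisfy the claimed bound $\zeta_j\le u_j\e^{a_j}$.
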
 

\begin{proof} 
	We adapt the treatment of tree fixed points by Faris~\cite[Section 3.1]{faris2010combinatorics} and reformulate our problem as a fixed point problem in a partially ordered set for a monotone increasing map. 
	Let $\mathcal L$ be the space of bounded non-negative sequences $\vect z = (\zeta_j)_{j\in \N_0}$. For $\vect \zeta\in \mathcal L$, define
	$$
		F_j(\vect \zeta) := u_j \exp\Bigl(  \sum_{k=j}^\infty \frac{ |B_j|}{|B_k|}\log \bigl(1 + \zeta_k\bigr)\Bigr)\quad (j\in \N_0).
	$$
	Further set $\vect F(\vect \zeta):= (F_j(\zeta))_{j\in \N_0}$. If $(u_j)_{j\in \N_0}$ is bounded, then $\vect F(\vect \zeta)$ is bounded as well; thus $\vect F$ maps $\mathcal L$ to $\mathcal L$. 
We equip $\mathcal L$ with the partial order of pointwise inequality, i.e., $\vect x \leq \vect y$ if and only if $x_j \leq y_j$ for all $j \in \N_0$, and note that  $\vect F$ is increasing with respect to that partial order. 
	
The vector $\vect w$ defined by $w_k:= u_k\exp(a_k)$ satisfies $F_k(\vect w) \leq w_k$ for all $k\in \N_0$. Define a sequence $({\vect \zeta}^{(n)})_{n\in \N_0}$ iteratively by $\zeta_j^{(0)}\equiv 0$ and $\zeta_j^{(n+1)} = F_j(\vect \zeta^{(n)})$. Notice $\zeta_j^{(1)} = u_j$.
		
	We check by induction over $n$ that $\zeta_j^{(n)}\leq \zeta_{j}^{(n+1)}\leq u_j \exp(a_j)=w_j$ for all $j \in \N_0$ and $n\in \N_0$. 
	For $n=0$, the inequality reads $0 \leq u_j \leq w_j$ which is clearly true.
	The induction step works because of the monotonicity of $\vect F$ and because of $\vect F(\vect w) \leq \vect w$.

	It follows that the limit $\zeta_j:= \lim_{n\to \infty}\zeta_j^{(n)}$ exists for all $j\in \N_0$ and satisfies $\zeta_j \leq w_j$, moreover $\vect \zeta = \vect F(\vect \zeta)$ because $F_j(\vect \zeta^{(n)})\to F_j(\vect \zeta)$ by monotone convergence.
\end{proof}

\noindent The solution of Lemma~\ref{lem:fp} is in fact a critical point of the Lagrange function for  the computation of the free energy $\varphi(\sigma)$. Let 
\be
	L_\sigma(\vect \rho,\sigma_\infty;\mu):= f\bigl( \vect \rho, \sigma_\infty\bigr) - \mu\Bigl( \sum_{j=0}^\infty \rho_j + \sigma_\infty - \sigma\Bigr). 
\ee
Given $(\zeta_j)_{j\in \N_0}\in\R_+^{\N_0}$ a summable sequence, set 
\be \label{eq:stardef}
		\mu^* := e_\infty + \sum_{k=0}^\infty \frac{1}{|B_k|} \log (1+ \zeta_k),\quad \rho^*_j = \frac{\zeta_j}{1+\zeta_j}\prod_{k=j+1}^\infty \frac1{1+\zeta_k},\quad 	\sigma^*:= 1- \prod_{j=0}^\infty \frac{1}{1+\zeta_j}.
\ee
Note $\sigma^*=\sum_{j=0}^\infty \rho^*_j \in (0,1)$. Fix $\sigma \in [\sigma^*,1)$ and define 
\be \label{eq:euler-lagrange1}
	\sigma_\infty:= \frac{\sigma - \sigma^*}{1 -\sigma^*},\quad 
	\rho_j := (1-\sigma_\infty) \rho^*_j.
\ee
Thus $(\vect \rho,\sigma_\infty) $ is a convex combination
\be \label{eq:minco}
	(\vect \rho,\sigma_\infty)  =  (1-\sigma_\infty)\,  (\vect \rho^*, 0) +\sigma_\infty (\vect 0,1)
\ee
and the packing fraction $\sigma$ enters only via the weight $\sigma_\infty$ in the convex combination. 

\begin{lemma}  \label{lem:euler-lagrange1}
	Suppose that the system~\eqref{eq:fp1} admits a solution $\vect \zeta\in \R_+^{\N_0}$ that satisfies $\sum_{j=0}^\infty \zeta_j <\infty$ and define $\mu^*, \sigma^*,\vect{\rho^*}$ as in~\eqref{eq:stardef}. Assume  $\sigma \in [\sigma^*,1)$ and define $(\vect \rho,\sigma_\infty)$  by~\eqref{eq:euler-lagrange1}. Then all partial derivatives of $L$ at $(\vect \rho,\sigma_\infty, \mu^*)$ exist and are equal to zero, and $(\vect \rho,\sigma_\infty, \mu^*)$ is a minimizer of the Lagrange functional $L$. 
\end{lemma}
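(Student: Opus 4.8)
The plan is to compute every partial derivative of $L_\sigma(\,\cdot\,;\mu^*)$ at the tuple $(\vect\rho,\sigma_\infty)$ defined in~\eqref{eq:euler-lagrange1}, to check that each one vanishes as a direct consequence of the fixed point equation~\eqref{eq:fp1}, and then to upgrade ``critical point'' to ``global minimizer'' using the convexity of $f$ recorded just before~\eqref{eq:freevar}.

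\emph{Step 1: effective densities at the candidate point.} I would first show that the effective density $\widehat\rho_j$ belonging to $(\vect\rho,\sigma_\infty)$ equals $\zeta_j/(1+\zeta_j)$. Since $\rho_j=(1-\sigma_\infty)\rho^*_j$, the identity $1-\sigma_{j+1}=(1-\sigma_\infty)(1-\sum_{k\ge j+1}\rho^*_k)$ from the proof of Lemma~\ref{lem:convex-combi} makes $\sigma_\infty$ cancel, so $\widehat\rho_j=\rho^*_j/(1-\sum_{k\ge j+1}\rho^*_k)$; inserting the definition of $\rho^*_j$ from~\eqref{eq:stardef} into the telescoping relation~\eqref{eq:firho} (with $\zeta_k$ in the role of $\widehat z_k$) gives $1-\sum_{k\ge j+1}\rho^*_k=\prod_{k\ge j+1}(1+\zeta_k)^{-1}$, whence $\widehat\rho_j=\zeta_j/(1+\zeta_j)\in[0,1)$. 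Because $\sigma=\sum_k\rho_k+\sigma_\infty<1$, the tuple lies in the region where Theorem~\ref{thm:entropy} and the chemical-potential formulas~\eqref{eq:chem3} hold and where, by the uniform convergence in Proposition~\ref{prop:analyticity}, the series defining $s$ may be differentiated term by term.

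\emph{Step 2: the partial derivatives vanish.} By the very definitions~\eqref{eq:chem1}--\eqref{eq:chem2} one has $\partial_{\rho_j}f=(E_j+\mu_j)/|B_j|$ and $\partial_{\sigma_\infty}f=e_\infty+\mu_\infty$, hence
\[
 \partial_{\rho_j}L_\sigma=\frac{E_j+\mu_j}{|B_j|}-\mu^*,\qquad \partial_{\sigma_\infty}L_\sigma=e_\infty+\mu_\infty-\mu^*,\qquad \partial_\mu L_\sigma=\sigma-\sum_k\rho_k-\sigma_\infty.
\]
Feeding $\widehat\rho_j=\zeta_j/(1+\zeta_j)$ into~\eqref{eq:chem3} yields $\mu_j=\log\zeta_j+|B_j|\sum_{k=0}^{j-1}|B_k|^{-1}\log(1+\zeta_k)$ and $\mu_\infty=\sum_{k\ge 0}|B_k|^{-1}\log(1+\zeta_k)$, so $\mu^*=e_\infty+\mu_\infty$ by~\eqref{eq:stardef} and $\partial_{\sigma_\infty}L_\sigma=0$. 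Using $u_j=\exp(|B_j|e_\infty-E_j)$ one then computes
\[
 |B_j|\,\partial_{\rho_j}L_\sigma=\log\zeta_j-\log u_j-|B_j|\sum_{k\ge j}\frac1{|B_k|}\log(1+\zeta_k),
\]
which vanishes exactly because $\vect\zeta$ solves~\eqref{eq:fp1}. Finally $\sum_k\rho_k+\sigma_\infty=(1-\sigma_\infty)\sigma^*+\sigma_\infty=\sigma$ by the choice of $\sigma_\infty$ in~\eqref{eq:euler-lagrange1}, so $\partial_\mu L_\sigma=0$ as well. (If $u_j=0$ for some $j$ then the corresponding $\rho_j$ is forced to be $0$ and that coordinate is handled by the convention $0\log 0=0$.)

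\emph{Step 3: from critical point to minimizer.} Since $f$ is convex and $L_\sigma(\,\cdot\,;\mu^*)$ differs from $f$ by an affine functional, $L_\sigma(\,\cdot\,;\mu^*)$ is convex on the feasible set $\{\vect\rho\ge0,\ \sigma_\infty\ge0,\ \sum_k\rho_k+\sigma_\infty\le1\}$; the candidate point has $\sum_k\rho_k+\sigma_\infty=\sigma<1$, so it is interior to the mass constraint and the vanishing of all its partial derivatives forces it to be a global minimizer, i.e. $0$ belongs to the subdifferential there. I expect this step to be the main obstacle: the domain lives in the infinite-dimensional space $\ell^1(\N_0)\times\R$ and the point may lie on the face $\sigma_\infty=0$ (when $\sigma=\sigma^*$) or have some $\rho_j=0$, so one must verify that the one-sided directional derivative of $L_\sigma(\,\cdot\,;\mu^*)$ into every admissible direction is $\ge0$; this follows from Step 2 together with the tail estimates already established for Proposition~\ref{prop:analyticity}, which legitimise term-by-term differentiation of the series. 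A shortcut bypassing differentiability is to note that, by the convex-combination identity behind Lemma~\ref{lem:convex-combi}, $f$ is \emph{affine} along the segment $(1-t)(\vect\rho^*,0)+t(\vect 0,1)$, and then a direct evaluation shows that $f(\vect\rho,\sigma_\infty)-\mu^*(\sum_k\rho_k+\sigma_\infty)$ equals the infimum $-p(\mu^*)$ appearing in the variational formula~\eqref{eq:freevar}, which again exhibits $(\vect\rho,\sigma_\infty)$ as a minimizer.
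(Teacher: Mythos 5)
Your argument follows the paper's own proof almost line by line. Step~1 (identifying $\widehat\rho_j=\zeta_j/(1+\zeta_j)$, which the paper obtains from Lemma~\ref{lem:equiv} and you obtain by the equivalent telescoping manipulation underlying it), Step~2 (expressing the partial derivatives of $L_\sigma$ through the chemical potentials~\eqref{eq:chem3} and verifying that the fixed point~\eqref{eq:fp1} makes them vanish), and the main line of Step~3 (convexity of $f$, hence of $L_\sigma(\cdot;\mu^*)$, together with density of finite-dimensional subspaces and continuity on $\|(\vect\rho,\sigma_\infty)\|\le 1$) all coincide with what the paper does, and your computations check out. You are also more careful than the paper in flagging two boundary issues: the possibility $\sigma_\infty=0$ when $\sigma=\sigma^*$, where the vanishing partial derivative still gives the correct one-sided inequality; and the possibility $u_j=0$ forcing $\rho_j=0$, where the paper's blanket claim ``$\rho_j>0$ for all $j$'' is actually not justified under the stated hypotheses and needs exactly the kind of remark you make.

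The one genuinely weak point is the proposed ``shortcut bypassing differentiability.'' Evaluating $f(\vect\rho,\sigma_\infty)-\mu^*(\sum_k\rho_k+\sigma_\infty)$ and identifying it with $-p(\mu^*)$ requires knowing either that the candidate maximizes the variational formula~\eqref{eq:freevar} at $\mu^*$, or that $\widehat z_j(\mu^*)=\zeta_j$, i.e.\ $\mu^*=\mu_c$; but both facts are established only later (Lemma~\ref{lem:starcrit}) and are themselves downstream of the present lemma. As written the shortcut is circular, and you should rely on the convexity-plus-density argument (which is what the paper does and what your Step~3 already sets up correctly).
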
 

\begin{proof} 
	Remember 
	$$
		\widehat \rho_j = \frac{\rho_j}{1 - \sum_{k \geq j+1} \rho_k- \sigma_\infty} = \frac{\rho'_j}{1- \sum_{k\geq j+1} \rho'_k},\quad \rho'_j = \frac{\rho_j}{1-\sigma_\infty}. 
	$$
	Lemma~\ref{lem:equiv} applied to $m=0$ and $(\rho'_j)$ and $(\zeta_j)$ yields 
	\be \label{eq:rhozeta}
		\widehat \rho_j  = \frac{\zeta_j}{1+ \zeta_j}<1 \qquad (j\in\N_0).
	\ee
	The convergence of the series $\sum_j \zeta_j$ implies $\zeta_j \to 0$ as $j\to \infty$. 
%
	The free energy is given by a linear term minus the entropy, and the partial derivatives of the entropy have been computed in Eqs.~\eqref{eq:chem1}--~\eqref{eq:chem2}. The existence of the partial derivatives follows from Proposition~\ref{prop:analyticity} and $\rho_j>0$ for all $j$. We obtain 
	\begin{align}
		\frac{\partial L_\sigma}{\partial \rho_j}(\vect \rho,\sigma_\infty,\mu^*) & = 
			\frac{1}{|B_j|}\Biggl( E_j + \log \frac{\widehat \rho_j}{1-\widehat \rho_j} - \sum_{k=0}^{j-1} \frac{|B_j|}{|B_k|} \log (1- \widehat \rho_k) - \mu^* |B_j|\Biggr) \label{eq:L1} \\ 
		\frac{\partial L_\sigma}{\partial \sigma_\infty}(\vect \rho,\sigma_\infty,\mu^*) & = 
		e_\infty - \sum_{j=0}^\infty \frac{1}{|B_j|} \log (1- \widehat \rho_j) - \mu^* \label{eq:L2}.
	\end{align} 
	Eq.~\eqref{eq:rhozeta} yields $\log (1+ \zeta_j) = - \log (1- \widehat \rho_j)$. Eq.~\eqref{eq:L2} then follows from the definition of $\mu^*$ in~\eqref{eq:euler-lagrange1} and Eq.~\eqref{eq:L1} follows from~\eqref{eq:euler-lagrange1} and~\eqref{eq:L2}. Finally we note
$$
	\frac{\partial L_\sigma}{\partial \mu^*}(\vect \rho,\sigma_\infty,\mu^*) =  (1- \sigma_\infty) \sigma^* + \sigma_\infty = \sigma 
$$
	by definition of $\sigma_\infty$. 
	
		By convexity, the critical point is a minimizer in every finite-dimensional affine subspace obtained by changing only finitely many components of $(\vect \rho,\sigma^*_\infty,\mu^*)$. The union of these subspaces in dense, and the Lagrange functional is continuous in the domain $||(\vect \rho,\sigma_\infty)||\leq 1$; the lemma follows. 
\end{proof} 

\begin{lemma} \label{lem:euler-lagrange2}
	For $\sigma\in [\sigma^*,1)$ the vector $(\vect \rho, \sigma_\infty,\mu^*)$ defined in~\eqref{eq:euler-lagrange1} is a minimizer of the free energy $f(\vect \rho,\sigma)$ under the constraint $\sum_{j=0}^\infty \rho_j + \sigma_\infty = \sigma$,  and the minimum $\varphi(\sigma)$ is an affine function of $\sigma$ with slope $\mu^*$, 
	$$
		\varphi(\sigma) = \varphi(\sigma^*) + \mu^*(\sigma- \sigma^*) \qquad (\sigma^*\leq \sigma <1). 
	$$
\end{lemma}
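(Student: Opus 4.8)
The plan is to read the constrained minimizer off the Lagrangian optimality of Lemma~\ref{lem:euler-lagrange1} by a convex-duality argument, then to use the entropy scaling identity~\eqref{eq:primedent} to see that $\varphi$ is affine on $[\sigma^*,1)$, and finally to identify its slope with $\mu^*$ by passing through the pressure of the $\mu$-dependent model at $\mu=\mu^*$. First I would record that the pair $(\vect\rho,\sigma_\infty)$ of~\eqref{eq:euler-lagrange1} is itself feasible for the constrained problem: by the convex-combination formula~\eqref{eq:minco}, $\sum_{j}\rho_j+\sigma_\infty=(1-\sigma_\infty)\sigma^*+\sigma_\infty=\sigma$, and $\|(\vect\rho,\sigma_\infty)\|=\sigma<1$. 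Lemma~\ref{lem:euler-lagrange1} says that at the fixed value $\mu^*$ the pair $(\vect\rho,\sigma_\infty)$ minimizes $(\vect\rho',\sigma_\infty')\mapsto L_\sigma(\vect\rho',\sigma_\infty';\mu^*)$ over the domain $\{\vect\rho'\ge 0,\ \sigma_\infty'\ge 0,\ \|(\vect\rho',\sigma_\infty')\|\le 1\}$, which contains every competitor with $\sum_j\rho_j'+\sigma_\infty'=\sigma$. For such competitors the penalty term in $L_\sigma$ vanishes, so $f(\vect\rho',\sigma_\infty')=L_\sigma(\vect\rho',\sigma_\infty';\mu^*)\ge L_\sigma(\vect\rho,\sigma_\infty;\mu^*)=f(\vect\rho,\sigma_\infty)$, the last step using feasibility of $(\vect\rho,\sigma_\infty)$ once more. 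Hence $(\vect\rho,\sigma_\infty)$ attains the infimum defining $\varphi(\sigma)$, and $\varphi(\sigma)=f(\vect\rho,\sigma_\infty)$.

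Next, since by~\eqref{eq:minco} the pair $(\vect\rho,\sigma_\infty)$ is the convex combination $(1-\sigma_\infty)(\vect\rho^*,0)+\sigma_\infty(\vect 0,1)$, its primed variables in the sense of~\eqref{eq:primed} are exactly $\vect\rho^*$ and $\sigma^*$; the scaling identity~\eqref{eq:primedent} then gives $s(\vect\rho,\sigma)=(1-\sigma_\infty)s(\vect\rho^*,\sigma^*)$. Inserting this into~\eqref{eq:freenergy} and using that its energy term is linear in $\vect\rho$ while $\sigma_\infty e_\infty$ is linear in $\sigma_\infty$, I obtain
$$
\varphi(\sigma)=f(\vect\rho,\sigma_\infty)=(1-\sigma_\infty)f(\vect\rho^*,0)+\sigma_\infty\, e_\infty=(1-\sigma_\infty)\varphi(\sigma^*)+\sigma_\infty\, e_\infty,
$$
using $\varphi(\sigma^*)=f(\vect\rho^*,0)$ (the case $\sigma=\sigma^*$ of the previous step) and $e_\infty=f(\vect 0,1)$ (immediate from~\eqref{eq:freenergy} together with $s(\vect 0,\cdot)=0$). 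Because $\sigma_\infty=(\sigma-\sigma^*)/(1-\sigma^*)$ is an affine function of $\sigma$, this exhibits $\varphi$ as affine on $[\sigma^*,1)$ with slope $(e_\infty-\varphi(\sigma^*))/(1-\sigma^*)$; it remains to prove $\varphi(\sigma^*)=e_\infty-\mu^*(1-\sigma^*)$.

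For that last identity I would first show that the $\mu$-dependent model at $\mu=\mu^*$ has effective activities $\widehat z_j(\mu^*)=\zeta_j$ for every $j\in\N_0$: this follows by induction on $j$ from the recursive definition $\widehat z_j(\mu)=z_j(\mu)\,\e^{-|B_j|p_{j-1}(\mu)}$ (with $\widehat z_0=z_0$), the fixed-point relation~\eqref{eq:fp1}, and the definition $\mu^*=e_\infty+\sum_k\frac1{|B_k|}\log(1+\zeta_k)$, which makes the partial sum $p_{j-1}(\mu^*)=\sum_{k<j}\frac1{|B_k|}\log(1+\zeta_k)$ telescope with~\eqref{eq:fp1}. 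Consequently $\sum_j\widehat z_j(\mu^*)=\sum_j\zeta_j<\infty$, and Theorem~\ref{thm:pressure} together with the definition of $\mu^*$ give $p(\mu^*)=\sum_j\frac1{|B_j|}\log(1+\zeta_j)=\mu^*-e_\infty=\theta^*(\mu^*)$, while Theorem~\ref{thm:density}(a) identifies the density tuple of this model as $(\vect\rho^*,\sigma^*)$ with $\sigma_\infty=0$. Since $\sum_j\widehat z_j(\mu^*)<\infty$ and $p(\mu^*)=\theta^*(\mu^*)$, Proposition~\ref{prop:pressure-legendre}(b) applies and $(\vect\rho^*,\sigma^*)$ is among the maximizers of the variational formula for $p(\mu^*)$; evaluating the objective there at $(\vect\rho^*,\sigma^*)$, where the $\theta^*$-term drops out because $\sigma^*=\sum_j\rho_j^*$ and where $\log z_j(\mu^*)=|B_j|\mu^*-E_j$, yields $p(\mu^*)=\mu^*\sigma^*-f(\vect\rho^*,0)$. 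Combined with $p(\mu^*)=\mu^*-e_\infty$ this gives $f(\vect\rho^*,0)=e_\infty-\mu^*(1-\sigma^*)$, so the slope in the formula above equals $\mu^*$ and $\varphi(\sigma)=\varphi(\sigma^*)+\mu^*(\sigma-\sigma^*)$ on $[\sigma^*,1)$.

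The step I expect to be the main obstacle is the last one: one has to reconcile the rather implicit definition~\eqref{eq:stardef} of $\mu^*$ with the genuine thermodynamics of the $\mu$-dependent model — carefully verifying $\widehat z_j(\mu^*)=\zeta_j$ and then invoking Proposition~\ref{prop:pressure-legendre} to evaluate $f(\vect\rho^*,0)$ — whereas the first two steps are routine convex duality and bookkeeping, once one notes that all the series in play converge thanks to $\sum_j\zeta_j<\infty$ and $\sum_j|B_j|^{-1}<\infty$.
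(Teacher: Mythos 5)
Your proof is correct, and the first two steps (feasibility and Lagrangian duality to get the constrained minimizer, then the convex-combination decomposition together with the entropy scaling identity~\eqref{eq:primedent} to obtain $\varphi(\sigma)=(1-\sigma_\infty)\varphi(\sigma^*)+\sigma_\infty e_\infty$) match the paper's argument closely. Where you diverge is the identification of the slope. The paper stays purely on the free-energy side: it invokes the Euler--Lagrange conditions from Lemma~\ref{lem:euler-lagrange1}, namely $\partial f/\partial\rho_j(\vect\rho^*,0)=\partial f/\partial\sigma_\infty(\vect\rho^*,0)=\mu^*$, and computes $\varphi'(\sigma)$ by the chain rule along the affine path~\eqref{eq:euler-lagrange1}, getting $\mu^*(-\sigma^*/(1-\sigma^*))+\mu^*/(1-\sigma^*)=\mu^*$ in two lines. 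You instead detour through the grand-canonical ensemble: you verify by a telescoping induction that $\widehat z_j(\mu^*)=\zeta_j$, deduce $p(\mu^*)=\mu^*-e_\infty$ and $\sum_j\widehat z_j(\mu^*)<\infty$, apply Proposition~\ref{prop:pressure-legendre}(b) to recognize $(\vect\rho^*,\sigma^*)$ as a maximizer, and read off $f(\vect\rho^*,0)=e_\infty-\mu^*(1-\sigma^*)$ from the Legendre-transform value, which gives the slope. Both routes are valid; yours is longer and uses heavier machinery (it recruits the full variational representation of the pressure), but it is more explicit thermodynamically and in passing you establish essentially the content of Lemma~\ref{lem:starcrit} ($\widehat z_j(\mu^*)=\zeta_j$, $\mu^*=\mu_c$, $\sigma^*=\sigma_c$), which the paper defers to its own subsequent lemma. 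The paper's derivative-based computation is the more economical choice at this spot, since the needed Euler--Lagrange identities are already available from Lemma~\ref{lem:euler-lagrange1}.
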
 

\begin{proof}
	The vector $(\vect \rho,\sigma_\infty)$ is a minimizer because of Lemma~\ref{lem:euler-lagrange1}. By~\eqref{eq:minco} and Lemma~\ref{lem:convex-combi}, the free energy is 
	\begin{align*}
		\varphi(\sigma) & = f( \vect \rho,\sigma_\infty) = (1-\sigma_\infty) f(\vect {\rho^*},0) + \sigma_\infty f(\vect 0,1)
			= (1-\sigma_\infty) \varphi(\sigma^*) + \sigma_\infty e_\infty.
	\end{align*}
	Since $\sigma_\infty $ is an affine function of $\sigma$ by~\eqref{eq:euler-lagrange1} it follows that $\varphi(\sigma)$ is an affine function of $\sigma$ as well. Lemma~\ref{lem:euler-lagrange1} yields 
	$$
		\frac{\partial f}{\partial \rho_j}(\vect{\rho^*},0) = 		\frac{\partial f}{\partial \sigma_\infty}(\vect{\rho^*},0) = \mu^*.
	$$
	Therefore
	\begin{align*} 
		\varphi'(\sigma)  & = \sum_{j=0}^\infty \frac{\partial f}{\partial \rho_j}(\vect{\rho^*},0) \frac{\partial \rho_j}{\partial \sigma} + \frac{\partial f}{\partial \sigma_\infty}(\vect{\rho^*},0) \frac{\partial \sigma_\infty}{\partial \sigma} \\
			& = \sum_{j=0}^\infty \mu^* \Bigl( - \frac{\rho^*}{1-\sigma^*} \Bigr)+ \frac{\mu^*}{1-\sigma^*} = \mu^*. \qedhere
	\end{align*} 
\end{proof} 

\begin{lemma}  \label{lem:starcrit}
	We have $\mu^*=\mu_c$, $\sigma^*=\sigma_c$, and $\zeta_j = \widehat z_j(\mu_c)$ for all $j\in \N_0$. 
\end{lemma}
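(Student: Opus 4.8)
The plan is to combine the Lagrange‑functional analysis of Lemmas~\ref{lem:euler-lagrange1}--\ref{lem:euler-lagrange2} with the variational formula for the pressure (Proposition~\ref{prop:pressure-legendre}) and the characterisations of $\mu_c,\sigma_c$ in Lemma~\ref{lem:hatphases}. Throughout, $\vect\zeta$ denotes the summable fixed point produced by Lemma~\ref{lem:fp} from the weights $(a_k)$ of condition~(i), so that $\mu^*,\sigma^*,\vect{\rho^*}$ of~\eqref{eq:stardef} are well defined with $\sigma^*=\sum_j\rho^*_j\in(0,1)$.

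First I would compute the pressure at $\mu^*$. By Lemma~\ref{lem:euler-lagrange2}, $\varphi$ is affine with slope $\mu^*$ on $[\sigma^*,1)$, and since the minimiser there is the convex combination~\eqref{eq:minco}, Lemma~\ref{lem:convex-combi} gives $\varphi(\sigma)=(1-\sigma_\infty)\varphi(\sigma^*)+\sigma_\infty e_\infty$, whence $\lim_{\sigma\nearrow 1}\varphi(\sigma)=e_\infty$; convexity of $\varphi$ on $[0,1]$ together with the test configuration $(\vect 0,1)$ then forces $\varphi(1)=e_\infty$. Feeding this into $p(\mu)=\sup_{\sigma\in[0,1]}(\mu\sigma-\varphi(\sigma))$ at $\mu=\mu^*$: the integrand is constant and equal to $\mu^*-e_\infty$ on $[\sigma^*,1]$, and it is nondecreasing on $[0,\sigma^*]$ because the left derivative of the convex function $\varphi$ at $\sigma^*$ is $\le\mu^*$. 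Hence $p(\mu^*)=\mu^*-e_\infty=\theta^*(\mu^*)$, so $\mu_c\le\mu^*$; moreover the supremum is attained at $\sigma=\sigma^*$, so the minimiser $(\vect{\rho^*},0)$ of Lemma~\ref{lem:euler-lagrange2} is a maximiser in the variational formula~\eqref{eq:freevar} for $p(\mu^*)$.

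Next I would identify $\vect\zeta$ by applying Proposition~\ref{prop:pressure-legendre} at $\mu=\mu^*$ (its hypothesis holds since $\tfrac1{|B_j|}\log z_j(\mu)=\mu-E_j/|B_j|\to\mu-e_\infty$). If $\sum_j\widehat z_j(\mu^*)=\infty$, part~(c) says the unique maximiser is $(\vect 0,1)$, contradicting that $(\vect{\rho^*},0)$ with $\sigma^*\in(0,1)$ is also a maximiser; hence $\sum_j\widehat z_j(\mu^*)<\infty$. Part~(b) then forces every maximiser with vanishing $\sigma_\infty$ — in particular $(\vect{\rho^*},0)$ — to coincide with the Theorem~\ref{thm:density}(a) tuple, so $\rho^*_j=\frac{\widehat z_j(\mu^*)}{1+\widehat z_j(\mu^*)}\prod_{k\ge j+1}(1+\widehat z_k(\mu^*))^{-1}$ for all $j$. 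Inverting this exactly as in Lemma~\ref{lem:equiv} (with $m=0$) and comparing with the analogous identity $\rho^*_j=\frac{\zeta_j}{1+\zeta_j}\prod_{k\ge j+1}(1+\zeta_k)^{-1}$ from~\eqref{eq:stardef} yields $\zeta_j=\rho^*_j\bigl(1-\sum_{k\ge j}\rho^*_k\bigr)^{-1}=\widehat z_j(\mu^*)$ for all $j$, and then $\sigma^*=1-\prod_k(1+\zeta_k)^{-1}=1-\prod_k(1+\widehat z_k(\mu^*))^{-1}=\sigma(\vect z(\mu^*))$.

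Finally I would upgrade $\mu_c\le\mu^*$ to equality and conclude. Suppose $\mu_c<\mu^*$; then $\mu_c<\infty$, and by Lemma~\ref{lem:hatphases}(b) any $\mu$ with $\mu_c<\mu<\mu^*$ satisfies $\sum_j\widehat z_j(\mu)=\infty$, so by the monotonicity of Lemma~\ref{lem:hatphases}(a) also $\sum_j\widehat z_j(\mu^*)\ge\sum_j\widehat z_j(\mu)=\infty$, contradicting the previous step. Hence $\mu_c=\mu^*$ and $\zeta_j=\widehat z_j(\mu_c)$ for all $j$; and since $\sum_j\widehat z_j(\mu_c)=\sum_j\zeta_j<\infty$, Lemma~\ref{lem:hatphases}(c) gives that the transition is of first order with $\sigma_c=1-\prod_j(1+\widehat z_j(\mu_c))^{-1}=\sigma^*$ — which also completes the implication $(i)\Rightarrow(ii)$ of Theorem~\ref{thm:firstorder}. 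I expect the main obstacle to be the identification step: turning the soft statement ``$(\vect{\rho^*},0)$ is a maximiser'' into the pointwise equalities $\zeta_j=\widehat z_j(\mu^*)$, which forces one first to rule out the close‑packed regime $\sum_j\widehat z_j(\mu^*)=\infty$ and to invoke the uniqueness part of Proposition~\ref{prop:pressure-legendre}; a lesser care point is the boundary behaviour of $\varphi$ at $\sigma=1$ used to evaluate $p(\mu^*)$.
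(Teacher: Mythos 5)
Your proof is correct and follows essentially the same route as the paper: use the Lagrange--functional lemmas to show the pressure is attained at $(\vect{\rho^*},0)$ at $\mu=\mu^*$, deduce $p(\mu^*)=\mu^*-e_\infty$ and hence $\mu_c\le\mu^*$; then invoke the maximizer classification of Proposition~\ref{prop:pressure-legendre} to rule out $\sum_j\widehat z_j(\mu^*)=\infty$, which identifies $\zeta_j=\widehat z_j(\mu^*)$ and, via Lemma~\ref{lem:hatphases}, forces $\mu^*\le\mu_c$; finish by reading off $\sigma_c=\sigma^*$ from Lemma~\ref{lem:hatphases}(c). The only cosmetic differences are that you spell out the boundary behaviour $\varphi(1)=e_\infty$ and phrase the step $\mu^*\le\mu_c$ as a contradiction using the monotonicity of $\mu\mapsto\widehat z_j(\mu)$, whereas the paper reads it off directly from Lemma~\ref{lem:hatphases}(b).
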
 

\begin{remark}
	It follows that the solution $\vect \zeta$ of the fixed point problem~\eqref{eq:fp1} is in fact unique. 
\end{remark} 

\begin{proof} 
 It follows from Lemma~\ref{lem:euler-lagrange2} and elementary considerations on Legendre transforms that $p(\mu) = \sup_{\sigma\in [0,1]} (\mu\sigma - \varphi(\sigma)) = \mu- e_\infty$ for $\mu \geq \mu^*$, which yields $\mu_c \leq\mu^*$. 
 
 Moreover, for $\mu>\mu^*$ the unique maximizer of $\sigma \mapsto \mu \sigma - \varphi(\sigma)$ is $\sigma =1$ while for $\mu = \mu^*$ every $\sigma \in [\sigma^*,1]$ is a maximizer. In particular, $p(\mu^*) = \sigma^* \mu^* - \varphi(\sigma^*)$ and the constrained minimizer $(\vect{\rho^*},0)$ of $f(\vect \rho,\sigma_\infty)$ is a maximizer at $\mu = \mu^*$ in the variational formula~\eqref{eq:freevar} for the pressure. 
	It follows from Proposition~\ref{prop:pressure-legendre} that $\sum_{j=0}^\infty \widehat z_j(\mu^*)<\infty$---otherwise, the unique maximizer would be $(\vect 0,1)$, in contradiction with $(\vect{\rho^*},0)$ be a maximizer---hence by Lemma~\ref{lem:hatphases}, we must have $\mu^*\leq \mu_c$. 
	
	Thus we have shown $\mu_c = \mu^*<\infty$.	
%
	Proposition~\ref{prop:pressure-legendre} and the previous considerations on the variational formula for the pressure $p(\mu^*) = p(\mu_c)$ also yield 
	$$
		\widehat {\rho_j^*} = \frac{\widehat z_j(\mu_c)}{1+\widehat z_j(\mu_c)} = \frac{\zeta_j}{1+\zeta_j}
	$$
	hence $\zeta_j = \widehat z_j(\mu_c)$ for all $j\in \N_0$. 
	Finally $\sigma_c = \sum_{j=0}^\infty \rho_j^* =\sigma^*$. 
\end{proof} 

\begin{proof}[Proof of the implication $(i)\Rightarrow (ii)$ in Theorem~\ref{thm:firstorder}]
	Suppose that condition (i) is satisfied. Then by Lemma~\ref{lem:fp} the fixed point equation~\eqref{eq:fp1} has a solution and we may define $\mu^*\in \R$, 
	$\sigma^*\in (0,1)$, and $\rho^*_j$ as in~\eqref{eq:stardef}. Lemma~\ref{lem:starcrit} shows that the system has a phase transition at $\mu_c = \mu^*$ with $\sigma_c = \sigma^*<1$, hence the transition is of first order. 
\end{proof}

\section{Discussion} \label{sec:discussion}

A concluding heuristic discussion of the parameter-dependent model from Section~\ref{sec:phasetransition} makes the connection to the motivating  considerations on the mixture of hard spheres in the introduction more apparent. By Proposition~\ref{prop:analyticity}, the free energy~\eqref{eq:freenergy}  of the parameter-dependent model is 
$$
	f(\vect \rho,\sigma_\infty) = \sum_{j=0}^\infty \rho_j \frac{E_j}{|B_j|} + \sigma_\infty e_\infty + \sum_{j=0}^\infty \rho_j \bigl(\log \rho_j - 1\bigr)  + \Phi(\vect \rho,\sigma_\infty)
$$
with $\Phi(\vect \rho,\sigma_\infty)$ the absolutely convergent power series from Eq.~\eqref{eq:taylor}. The leading order in the power series is quadratic,
$$
	\Phi(\vect \rho,\sigma_\infty) = \frac12 \sum_{j=0}^\infty \frac{\rho_j}{|B_j|}\Bigl( \rho_j + 2\sum_{k=j+1}^\infty \rho_k + 2\sigma_\infty\Bigr) + \text{higher order terms}
$$
and the power series vanishes when $\rho_j \equiv 0$. Every configuration is a convex combination of a gas configuration and a condensed configuration 
$$
	(\vect \rho,\sigma_\infty) = (1- \sigma_\infty) \, (\vect{\rho'},0) + \sigma_\infty \, (0,1)
$$
and by Lemma~\ref{lem:convex-combi} the free energy is 
$$
	f(\vect \rho,\sigma_\infty) = (1- \sigma_\infty) f(\vect{\rho'},0) + \sigma_\infty e_\infty,
$$
which implies 
\be \label{eq:phiscale}
	\Phi(\vect \rho,\sigma_\infty) = - \sum_{j=0}^\infty \frac{\rho_j}{|B_j|} \log (1- \sigma_\infty) + (1-\sigma_\infty) \Phi(\vect{\rho'},0).
\ee
When minimizing the free energy at prescribed packing fraction $\sigma_\infty + \sum_{j=0}^\infty \rho_j = \sigma$ two scenarios are possible: In the gas phase the minimizer has $\sigma_\infty =0$ while in the coexistence region the minimizer has $\sigma_\infty \in (0,1)$. Accordingly in the gas phase the minimizer solves 
$$
	\frac{E_j}{|B_j|} + \frac{1}{|B_j|}  \log \rho_j + \frac{\partial \Phi}{\partial\rho_j}(\vect \rho,0) = \mu \qquad (j \in \N_0)
$$
with $\mu \in\R$ some Lagrange parameter determined by 
$$
	\sum_{j=0}^\infty \rho_j = \sum_{j=0}^\infty \exp\Bigl( \mu|B_j| - E_j -  |B_j| \frac{\partial \Phi}{\partial\rho_j}(\vect \rho,0)  \Bigr) =\sigma.
$$
In the coexistence region the equations are instead 
\begin{align*}
	\frac{E_j}{|B_j|} + \frac{1}{|B_j|}  \log \rho_j + \frac{\partial \Phi}{\partial\rho_j}(\vect \rho,\sigma_\infty) &= \mu \qquad (j \in \N_0),\\
	e_\infty + \frac{\partial \Phi}{\partial \sigma_\infty} (\vect \rho,\sigma_\infty)&= \mu,\\
	\sigma_\infty + \sum_{j=0}^\infty \rho_j & = \sigma.
\end{align*}
The second equation allows us to eliminate the Lagrange multiplier $\mu$ from the first equation, we obtain 
\be \label{eq:ptlagrange}
	\rho_j \exp\Biggl(|B_j|\Bigl( \frac{\partial \Phi}{\partial\sigma_\infty}(\vect \rho,\sigma_\infty) - \frac{\partial \Phi}{\partial\rho_j}(\vect \rho,\sigma_\infty) \Bigr) \Biggr) = \exp\bigl( |B_j| e_\infty - E_j\bigr) \qquad (j\in \N_0). 
\ee
Equation~\eqref{eq:phiscale} allows us to formulate instead equations in terms of primed variables $\rho'_j = \rho_j /(1-\sigma_\infty)$. 
Indeed, 
\begin{align*}
	\frac{\partial \Phi}{\partial \rho_j}(\vect \rho,\sigma_\infty) &=  - \frac{1}{|B_j|} \log (1- \sigma_\infty) + \frac{\partial \Phi}{\partial \rho_j}(\vect{ \rho'},0)\\
	\frac{\partial \Phi}{\partial \sigma_\infty}(\rho,\sigma_\infty) & = -  \sum_{j=0}^\infty \frac{\rho'_j}{|B_j|} - \Phi(\vect{\rho'},0) + \sum_{j=0}^\infty \rho'_j \frac{\partial \Phi}{\partial \rho_j}(\vect{ \rho'},0)
\end{align*} 
and~\eqref{eq:ptlagrange} is of the form 
\be \label{eq:lagrange-final}
	\rho'_j \exp\bigl( F_j(\vect{ \rho'})\bigr) = u_j \qquad (j\in \N_0)
\ee
with $u_j =  \exp( |B_j| e_\infty - E_j)$ and $F_j(\vect \rho')$ a power series that is absolutely convergent in $||\vect{\rho'}|| = \sum_{j=0}^\infty |\rho'_j|<1$ and satisfies $F_j(\vect{\rho'}) =O( ||\vect{\rho'}||)$. 
The fixed point equation~\eqref{eq:lagrange-final} is similar to~\eqref{eq:fp1}. In the absence of the correction term $F_j$ the solution would be $\rho'_j = u_j$. For sufficiently small values of $u_j$ the solution should be a power series in the variables $u_j$. Rigorous statements can be derived with the inversion theorems from~\cite{jttu2014,jansen-kuna-tsagkaro2019}, complementing Lemma~\ref{lem:fp} on the solvability of Eq.~\eqref{eq:fp1}. 

\subsubsection*{Acknowledgments}
	I thank Serena Cenatiempo, Diana Conache, Dimitrios Tsagkarogiannis, and Steffen Winter for stimulating discussions, and David Brydges for pointing out the possible usefulness of renormalization for treating mixtures of objects of different sizes. 
	
\medskip

\bibliographystyle{amsalpha}
\bibliography{hierarchical}

\end{document}